\newcommand{\blind}{1}
\newtheorem{assumption}{Assumption}
\newtheorem{proposition}{Proposition}
\newtheorem{lemma}{Lemma}
\theoremstyle{definition}
\begin{document}

\def\spacingset#1{\renewcommand{\baselinestretch}{#1}\small\normalsize} \spacingset{1}

\if1\blind
{
  \title{\bf \vspace{-1cm}{\large INFERENCE ON A DISTRIBUTION FROM NOISY DRAWS}}
  \author{ Koen Jochmans\thanks{Address: Toulouse School of Economics, 1 esplanade de l'Universit\'e, 31080 Toulouse, France. E-mail: \texttt{koen.jochmans@tse-fr.eu}.}\\ Toulouse School of Economics \\  Universit\'e  Toulouse 1 Capitole  \and
  \setcounter{footnote}{2}
   Martin Weidner\thanks{Address: University of Oxford, Department of Economics, Manor Road, Oxford OX1 3UQ, United Kingdom. E-mail: \texttt{martin.weidner@economics.ox.ac.uk}. \newline We are grateful to Isaiah Andrews, St\'ephane Bonhomme, Bo Honor\'e, Ryo Okui, and Peter Schmidt for comments, and to Arthur Lewbel and three referees for feedback on an earlier version. We also greatly appreciate the help of Ryo Okui and Mototsugu Shintani in providing us with the data used in our empirical illustration. Jochmans gratefully acknowledges financial support from the European Research Council through grant ERC-2016-StG-715787-MiMo. Weidner gratefully acknowledges support from the Economic and Social Research Council through the ESRC Centre for Microdata Methods and Practice grant RES-589-28-0001 and from the European Research Council grants ERC-2014-CoG-646917-ROMIA and ERC-2018-CoG-819086-PANEDA.} \\  Nuffield College and Department of Economics \\ University of Oxford
    }
      \date{December 2021}
  \maketitle
} \fi

\if0\blind
{
  \bigskip
  \bigskip
  \bigskip
  \begin{center}
    {\bf {\large INFERENCE ON A DISTRIBUTION FROM NOISY DRAWS}}
\end{center}
  \medskip
} \fi

\vspace{-1cm}
\begin{abstract}
\noindent
We consider a situation where the distribution of a random variable is being estimated by the empirical distribution of noisy measurements of that variable. This is common practice in, for example, teacher value-added models and other fixed-effect models for panel data. We use an asymptotic embedding where the noise shrinks with the sample size to calculate the leading bias in the empirical distribution arising from the presence of noise. The leading bias in the empirical quantile function is equally obtained. These calculations are new in the literature, where only results on smooth functionals such as the mean and variance have been derived. We provide both analytical and jackknife corrections that recenter the limit distribution and yield confidence intervals with correct coverage in large samples. Our approach can be connected to corrections for selection bias and shrinkage estimation and is to be contrasted with deconvolution. Simulation results confirm the much-improved sampling behavior of the corrected estimators. An empirical illustration on heterogeneity in deviations from the law of one price is equally provided. 
\end{abstract}

\noindent
{\bf JEL Classification:}  
C14,	%
C23	%

\medskip
\noindent
{\bf Keywords:}  
bias  correction,
estimation noise,
nonparametric inference,
measurement error,
panel data,
regression to the mean,
shrinkage.

\newpage

\spacingset{1.45}

\renewcommand{\theequation}{\arabic{section}.\arabic{equation}}  \setcounter{equation}{0}

\section{Introduction}
Let $\theta_1,\ldots,\theta_n$ be a random sample from a distribution $F$ that is of interest. Suppose that we only observe noisy measurements of these variables, say $\vartheta_1,\ldots,\vartheta_n$. A popular approach is to do inference on $F$ and its functionals using the empirical distribution of $\vartheta_1,\ldots,\vartheta_n$. This is common practice when analyzing panel data with heterogenous coefficients. In the literature on student achievement, for example, $\theta_i$ is a teacher effect, $\vartheta_i$ is an estimator of it obtained from data on student test scores, and we care about the distribution of teacher value-added (see, e.g., \citealt{JacksonRockoffStaiger2014} for an overview). In the same vein, \cite{Guvenen2009}, \cite{BrowningEjrnaesAlvarez2010}, and \cite{MagnacRoux2019} estimate heterogenous earning profiles, while  \cite{AhnChoiGaleKariv2014} find substantial heterogeneity in ambiguity aversion
In a nonlinear fixed-effect model, the marginal effect is heterogenous across units and interest lies in the distribution of these effects as well as its functionals (\citealt{Chamberlain1984}, \citealt{HahnNewey2004}).
Although the plug-in approach is popular, using $\vartheta_1,\ldots,\vartheta_n$ rather than $\theta_1,\ldots,\theta_n$ introduces bias that is almost entirely ignored in practice.
\cite{BarrasGagliardiniScaillet2018}, who are interested in the distribution of the skill of fund managers, find that not accounting for bias leads to substantial overestimation of tail mass and misses to pick up the substantial asymmetry in the skill distribution.

We analyze the properties of the plug-in estimator of $F$ in a location-scale setting where
$$
\vartheta_i = \theta_i + \frac{\sigma_i}{\sqrt{m}} \, \varepsilon_i,
\qquad
\varepsilon_i \, \vert \,  (\theta_i,\sigma_i^2) \sim \mathrm{i.i.d.}~(0,1),
$$
where $m$ is a parameter that grows with $n$. As the variance of the (heteroskedastic) noise is $\sigma_i^2/m$, this device shrinks the noise as the sample size grows. This is a very natural asymptotic embedding in settings where $\vartheta_i$ is an estimator of $\theta_i$ obtained from a sample of size $m$, as in a panel data setting or meta-analysis \citep{Vivalt2015}. It is related to, yet different from, an approach based on small measurement-error approximations as in \cite{Chesher1991,Chesher2017},\footnote{
\cite{Chesher1991} provides expansions for densities, while we focus on distribution and quantile functions.
\cite{Chesher2017} discusses the impact of noise in the explanatory variables in a quantile-regression model; this is a different setup than the one considered here. \cite{EvdokimovZeleneev2020} use our device of measurement error that shrinks with the sample size to correct inference in generalized method-of-moment problems.
} and has precedent in the analysis of fixed-effect models for panel data, although for different purposes, as discussed in more detail below (see, e.g., \citealt{HahnKuersteiner2002} and \citealt{AlvarezArellano2003}).

\cite{Efron2011} essentially entertains the homoskedastic setting with normal noise, where
$$
\vartheta_i \vert \, \theta_i \sim N(\theta_i,\sigma^2/m),
$$
and defines selection bias as the tendency of the $\vartheta_i$'s associated with the (in magnitude) largest $\theta_i$'s to be larger than their corresponding $\theta_i$. He proposes to deal with selection bias by using the well-known Empirical Bayes estimator of \cite{Robbins1956}, which here is equal to
$$
\vartheta_i + \frac{\sigma^2}{m} \, \nabla^1 \log p(\vartheta_i),
$$
where $p$ is the marginal density of the $\vartheta_i$ and $\nabla^1$ denotes the first-derivative operator.  For example, when $\theta_i\sim N(0,\psi^2)$ this expression then yields the (infeasible) shrinkage estimator
$$
\left(1-\frac{\sigma^2/m}{\sigma^2/m+\psi^2} \right) \, \vartheta_i,
$$
a parametric plug-in estimator of which would be the \cite{JamesStein1961} estimator. More generally, non-parametric implementation would also require estimation of $p$ and its first derivative. Shrinkage to the overall mean (in this case zero) is intuitive, as selection bias essentially manifests itself through the tails of the empirical distribution of the $\vartheta_i$ being too thick.\footnote{The same shrinkage factor is applied to each $\vartheta_i$, a consequence of the noise being homoskedastic. How to deal with heteroskedastic noise in an Empirical Bayes framework is not obvious. Discussion and a recent contribution can be found in and \cite{WeinsteinMaBrownZhang2018}.}
Shrinkage is commonly-applied in empirical work (see, e.g., \citealt{Rockoff2004};  \citealt{ChettyFriedmanRockoff2014}).
It should be stressed, though, that, while shrinkage improves on $\vartheta_1,\ldots,\vartheta_n$ in terms of estimation risk, it does not lead to preferable estimators of the distribution $F$ or its moments. 

The approach taken here is different from \cite{Efron2011}. Without making parametric assumptions on $F$, we calculate the (leading) bias of the naive plug-in estimator of the distribution,
$$
\hat{F}(\theta) := n^{-1} \sum_{i=1}^n \mathrm{1}\lbrace \vartheta_i \leq \theta \rbrace.
$$
This calculation allows to construct estimators that correct for the bias directly.
In the James-Stein problem, where $\theta_i \sim N(\eta,\psi^2)$, for example, the bias under homoskedastic noise equals
$$
- \frac{\theta-\eta}{2} \, \frac{\sigma^2/\psi^2}{m} \phi\left(\frac{\theta-\eta}{\psi}\right) + o(m^{-1}).
$$
Thus, the empirical distribution is indeed upward biased in the left tail and downward biased in the right tail. A bias order of $m^{-1}$ implies incorrect coverage of confidence intervals unless $n/m^2\rightarrow 0$. We present plug-in and jackknife estimators of the leading bias and show that the bias-corrected estimators are asymptotically normal with zero mean and variance $F(\theta)\, (1-F(\theta))$ as long as $n/m^3\rightarrow0$. 
So, bias correction is preferable to the naive plug-in approach for typical data sizes encountered in practice, where $m$ tends to be quite small relative to $n$.
We also provide corresponding bias-corrected estimators of the quantile function of $F$. %

If the distribution of $\sigma_i\, \varepsilon_i$ is fully known, recovering $F$ is a (generalized) deconvolution problem that can be solved for fixed $m$. Deconvolution-based estimators are well studied (see, e.g., \citealt{CarrollHall1988} and \citealt{DelaigleMeister2008}). However, they have a very slow rate of convergence and it is well documented that they can behave quite poorly in small samples.\footnote{There are also solutions to the measurement-error problem based on repeated measurements (or instrumental variables), coupled with suitable independent restrictions (see, for example, \citealt{HorowitzMarkatou1996}, \citealt{LiVuong1998}, \citealt{Hu2008}, \citealt{HuSchennach2008}, and \citealt{BonhommeJochmansRobin2016b,BonhommeJochmansRobin2016a}). These can be useful alternatives in static models for panel data, where the object of interest is the distribution of the random intercept, as in the work of  \cite{HorowitzMarkatou1996}, for example.} In response to this, \cite{Efron2016} has recently argued for a return to a more parametric approach. Our approach delivers intuitive estimators that enjoy the usual parametric convergence rate and are numerically well behaved. 
Although it does not deliver a fixed-$m$ consistent estimator, bias correction further ensures that size-correct inference can be performed, provided that $n/m^3$ is small. It is not clear how to conduct inference based on deconvolution estimators.

Working out the statistical properties of $\hat{F}$ (and of its quantile function) is non-trivial because $\hat{F}$ is a non-smooth function of the data $\vartheta_1,\ldots,\vartheta_n$. As such, the approach taken here is different from, and complementary to, recent work on estimating average marginal effects in panel data models, which only looks at smooth functionals such as the mean and variance (see, e.g., \citealt{FernandezValLee2013}; \citealt{OkuiYanagi2016}). The impact of noise on smooth transformations of the $\vartheta_i$ can be handled using conventional methods based on Taylor-series expansions. We contrast such an approach with our derivations below. How to perform inference on the  quantiles of marginal effects in nonlinear panel models is a long-standing open question \citep{DhaeneJochmans2015}, and the current work can be seen as a first step in that direction.

In work contemporaneous to our own, \cite{OkuiYanagi2018} derive the bias of a kernel-smoothed estimator of $F$ and its derivative.
Such smoothing greatly facilitates the calculation of the bias, making it amenable to conventional analysis. However, it also introduces additional bias terms that require much stronger moment conditions as well as further restrictions on the relative growth rates of $n$, $m$, and the bandwidth that governs the smoothing. Nevertheless, the (leading) bias term obtained in \citet[Theorem 3]{OkuiYanagi2018} coincides with ours in Proposition~\ref{prop:cdf} below. Additional discussion on and comparison between the two different approaches is given in \citet[p.~169--170]{OkuiYanagi2018}.

\section{Large-sample properties of plug-in estimators}
\label{sec:EandE}

Let $F$ be a univariate distribution on the real line. We are interested in estimation of and inference on $F$ and its quantile function
$
q(\tau) : = \inf_{\theta} \lbrace \theta: F(\theta) \geq \tau \rbrace.
$
If a random sample $\theta_1,\ldots,\theta_n$ from $F$ would be available this would be a standard problem. 
We instead consider the situation where $\theta_1,\ldots,\theta_n$ themselves are unobserved and we observe noisy measurements $\vartheta_1,\ldots,\vartheta_n$, with variances $\sigma_1^2/m,\ldots,\sigma_n^2/m$ for a positive real number $m$ which, in our asymptotic analysis below, will be required to grow with $n$. We assume the following.

\begin{assumption} \label{ass:basicMOM}
The variables $(\theta_i, \sigma_i^2, \vartheta_i)$ are i.i.d.~across $i$, with
\begin{align*}
   E(\vartheta_i \, \vert \, \theta_i,\sigma_i^2 ) &= \theta_i \, ,
  &
   E((\vartheta_i-\theta_i)^2 \, \vert \, \theta_i,\sigma_i^2 ) &= \frac{\sigma_i^2} m \, ,
\end{align*}
and $\sigma^2_i \in [\underline{\sigma}^2,\overline{\sigma}^2]\subset (0,\infty)$ for all $i$.
\end{assumption}

\noindent
Our setup reflects a situation where the noisy measurements  $\vartheta_1,\ldots,\vartheta_n$ converge in squared mean to $\theta_1,\ldots,\theta_n$ at the rate $m^{-1}$. A leading case is the situation where $\vartheta_i$ is an estimator of $\theta_i$ obtained from a sample of size $m$ that converges at the parametric rate.\footnote{
Everything to follow can be readily modified to different convergence rates as well as to  the case where 
$$
\mathrm{var}(\vartheta_i \vert \, \theta_i,\sigma_i^2)= \sigma_i^2/m_i ,
$$
with $m_i := p_i m$ for a random variable $p_i\in (0,1]$. It suffices to redefine $\sigma_i^2$ as $\sigma_i^2 / p_i$. When the $\vartheta_i$ represent estimators this device allows for the sample size to vary with $i$. For example, in a panel data setting, it would cover unbalanced panels under a missing-at-random assumption. Further, the requirement that $\vartheta_i$ is unbiased can be relaxed to allow for standard non-linearity bias of order $m^{-1}$. We do not do this here as it is possible quite generally to reduce the bias down to $O(m^{-3/2})$, for example via a jackknife or bootstrap correction, making it negligible in our analysis below. Furthermore, the split-sample jackknife approach to bias correction that we discuss below would automatically take care of this additional $m^{-1}$ bias without modification.
}
We allow $\theta_i$ and $\sigma_i^2$ to be correlated, implying that the noise $\vartheta_i - \theta_i$ is not independent
of $\theta_i$. Hence, we allow for measurement error to be non-classical. Recovering the distribution of $\theta_i$ from a sample of $(\vartheta_i,\sigma_i^2)$ is, therefore,
not a standard deconvolution problem.

It is common to estimate $F(\theta)$ by 
$$
\hat{F}(\theta) : = 
n^{-1} \sum_{i=1}^n \mathrm{1}\lbrace \vartheta_i \leq \theta \rbrace,
$$
the empirical distribution of the $\vartheta_i$ at $\theta$. 
As we will show below, under suitable regularity conditions, such plug-in estimators are consistent and asymptotically normal as $n\rightarrow\infty$ provided that $m$ grows with $n$ so that $n/m^2$ converges to a finite constant. The use of $\vartheta_1,\ldots,\vartheta_n$ rather than $\theta_1,\ldots,\theta_n$ introduces bias of the order $m^{-1}$, in general. This bias implies that test statistics are size distorted and the coverage of confidence sets is incorrect unless $n/m^2$ converges to zero.

The bias problem is easy to see (and fix) when interest lies in smooth functionals of $F$,
$$
\mu:=E(\varphi(\theta_i)),
$$
for a (multiple-times) differentiable function $\varphi$. 
An (infeasible) plug-in estimator based on $\theta_1,\ldots,\theta_n$ would be 
$$
\tilde{\mu} : = n^{-1} \sum_{i=1}^n \varphi(\theta_i).
$$
Clearly, this estimator is unbiased and satisfies 
$
\tilde{\mu}\overset{a}{\sim} N(\mu,\sigma_\mu^2/n)
$
as soon as $\sigma^2_\mu :=\mathrm{var}(\varphi(\theta_i))$ exists.
For the feasible plug-in estimator of $\mu$,
$$
\hat{\mu} := n^{-1} \sum_{i=1}^n \varphi(\vartheta_i),
$$
under standard regularity conditions, a Taylor-series expansion of $\varphi_i(\vartheta_i)$ around $\theta_i$ yields 
$$
E(\hat{\mu}-\mu) = \frac{b_\mu}{m} + O(m^{-3/2}), \qquad
b_\mu : = 
\frac{E(\nabla^2\varphi(\theta_i) \, \sigma_i^2)}{2},
$$
and
$$
\mathrm{var}(\hat{\mu}) = \frac{\sigma^2_\mu}{n} + O\left(n^{-1}m^{-1} \right).
$$
Hence, letting $z\sim N(0,1)$, we have
$$
\frac{\hat{\mu}-\mu}{\sigma_\mu/\sqrt{n}} \overset{a}{\sim} z + \sqrt{\frac{n}{m^2}} \frac{b_\mu}{\sigma_\mu} \sim N(c\, b_\mu/\sigma_\mu,\sigma_\mu^2) ,
$$
as $n/m^2\rightarrow c^2<\infty$ when $n,m\rightarrow\infty$. The noise in $\vartheta_1,\ldots,\vartheta_n$ introduces bias unless $\varphi$ is linear.
It can be corrected for by subtracting a plug-in estimator of $b_\mu/m$ from $\hat{\mu}$. Doing so, again under regularity conditions, delivers and estimator that is asymptotically unbiased as long as $n/m^3\rightarrow 0$.

\subsection{Distribution function}
The machinery from above cannot be applied to deduce the bias of $\hat{F}$ as it is a step function and, hence, non-differentiable. We will derive its leading bias under the following conditions. To state them, we let
$$
\varepsilon_i := \frac{\vartheta_i-\theta_i}{\sigma_i/\sqrt{m}}
$$
and write $f$ for the density function of $F$.

\begin{assumption} \label{ass:cdf}
The variables $\varepsilon_i$ are independent of $(\theta_i,\sigma_i^2)$, their distribution is absolutely continuous and has finite fourth-order moment. The function $f$ is three times differentiable with uniformly bounded derivatives, and one of the following two sets of conditions holds:

\medskip\noindent
A.~The function $E(\sigma_i^{p+1}\vert \theta_i=\theta)$ is $p$-times differentiable for $p=1,2$, the joint density of $(\theta_i, \sigma_i)$ exists, the conditional density function of $\theta_i$ given $\sigma_i$ is twice differentiable with respect to $\theta_i$ and the derivatives are bounded in absolute value by a function $e(\sigma_i)$ such that $E(e(\sigma_i) ) < \infty$.

\medskip\noindent
B.~There exists a deterministic function $\sigma$ so that $\sigma_i=\sigma(\theta_i)$ for all $i$; and (ii) $\sigma$ is three times differentiable and has uniformly-bounded derivatives.

\end{assumption}	

\noindent
Assumption \ref{ass:cdf} imposes smoothess on certain densities and conditional expectations but not on the estimator of $F$.

Define the function
$$
\beta(\theta) : = \frac{E(\sigma_i^2 \vert \theta_i = \theta) \, f(\theta)}{2},
$$
which is well-behaved under Assumption \ref{ass:cdf}, and let
$$
b_F(\theta) : =  \beta^\prime(\theta)
$$
be its derivative. We also introduce the covariance function
$$
\sigma_F(\theta,\theta^\prime) : = 
F(\theta \wedge \theta^\prime ) - F(\theta) \, F(\theta^\prime),
$$
where we use $\theta \wedge \theta^\prime $ to denote $\min\lbrace\theta,\theta^\prime\rbrace$.
Proposition \ref{prop:cdf} summarizes the large-sample properties of $\hat{F}$.

\begin{proposition} \label{prop:cdf}
Let Assumptions \ref{ass:basicMOM} and \ref{ass:cdf} hold. Then, as $n,m \rightarrow \infty$,
$$
E(\hat{F}(\theta)) - F(\theta) = \frac{b_F(\theta)}{m} + O(m^{-3/2}),
\qquad
\mathrm{cov}\big(\hat{F}(\theta),\hat{F}(\theta^\prime)\big)
=
\frac{\sigma_F(\theta,\theta^\prime)}{n} + O(n^{-1}m^{-1}),
$$	
where the order of the remainder terms is uniform in $\theta$. 
If furthermore $n/m^2\rightarrow c\in [0,+\infty)$, then %
$$
\sqrt{n} \left(\hat{F}(\theta)-F(\theta) - \frac{b_F(\theta)}{m}\right)
\rightsquigarrow
\mathbb{G}_F(\theta) ,
$$
where $\mathbb{G}_F(\theta)$ is a mean zero Gaussian process with covariance function $\sigma_F(\theta_1,\theta_2)$.
\end{proposition}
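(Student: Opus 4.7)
The plan is to decompose
$$
\sqrt{n}\bigl(\hat F(\theta) - F(\theta) - b_F(\theta)/m\bigr)
= \mathbb{G}_n(\theta) + r_n(\theta),
$$
with $\mathbb{G}_n(\theta) := \sqrt{n}(\hat F(\theta) - E(\hat F(\theta)))$ the centered empirical process of i.i.d.\ bounded indicators and $r_n(\theta) := \sqrt{n}(E(\hat F(\theta)) - F(\theta) - b_F(\theta)/m)$ a deterministic bias remainder. The three assertions of the proposition then correspond to (i) the bias expansion of $E(\hat F(\theta))$, (ii) the covariance expansion of $\hat F$, and (iii) weak convergence of $\mathbb{G}_n$ combined with the negligibility of $r_n$ under $n/m^4\to 0$.

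For the bias I first handle Assumption~\ref{ass:cdf}.A by conditioning on $(\sigma_i,\varepsilon_i)$. Writing $F(\cdot\mid\sigma_i)$ for the conditional CDF of $\theta_i$ given $\sigma_i$, and using that $\varepsilon_i$ is independent of $(\theta_i,\sigma_i)$,
$$
P(\vartheta_i \le \theta) = E\bigl(F(\theta - \sigma_i \varepsilon_i/\sqrt{m} \mid \sigma_i)\bigr).
$$
A third-order Taylor expansion in $1/\sqrt{m}$, legitimate by the three-times differentiability of $f(\cdot\mid\sigma_i)$, combined with $E(\varepsilon_i) = E(\varepsilon_i^3) = 0$ and $E(\varepsilon_i^2) = 1$, yields
$$
P(\vartheta_i \le \theta) = F(\theta) + \tfrac{1}{2m}\, E\bigl(\sigma_i^2\, f'(\theta\mid\sigma_i)\bigr) + O(m^{-2}),
$$
with the remainder uniform in $\theta$ via $E(\varepsilon_i^4)<\infty$ and the envelope $e(\sigma_i)$. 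Bayes' rule identifies $E(\sigma_i^2 f(\theta\mid\sigma_i))/2 = \beta(\theta)$, so differentiating under the expectation matches the leading term to $b_F(\theta)/m$. Under Assumption~\ref{ass:cdf}.B, the coupling $\sigma_i = \sigma(\theta_i)$ precludes this conditioning, so I would instead start from
$$
P(\vartheta_i\le\theta) = \int f(t)\, G\!\left(\frac{(\theta-t)\sqrt{m}}{\sigma(t)}\right) dt,
$$
with $G$ the CDF of $\varepsilon_i$, substitute $s = (t-\theta)\sqrt{m}$, Taylor expand $f$ and $\sigma$ jointly around $\theta$ to fourth order, and collect terms in powers of $1/\sqrt m$ using the same moment conditions on $\varepsilon_i$. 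Direct computation produces the same leading coefficient via $\sigma(\theta)^2 f(\theta) = 2\beta(\theta)$.

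For the covariance and weak convergence, the i.i.d.\ structure gives
$$
n\,\mathrm{cov}\bigl(\hat F(\theta),\hat F(\theta')\bigr) = P(\vartheta_i \le \theta \wedge \theta') - P(\vartheta_i\le\theta)\, P(\vartheta_i\le\theta'),
$$
and substituting the bias expansion yields $\sigma_F(\theta,\theta') + O(m^{-1})$, uniformly. When $n/m^4\to 0$ we have $\sup_\theta|r_n(\theta)| = \sqrt{n}\cdot O(m^{-2}) = o(1)$, so it remains to prove $\mathbb{G}_n \rightsquigarrow \mathbb{G}_F$. The process $\mathbb{G}_n$ is the centered empirical process of the i.i.d.\ $\vartheta_1,\ldots,\vartheta_n$ indexed by the VC class of left half-lines $\{(-\infty,\theta]\}$, and a Donsker-type theorem for VC classes in a triangular-array setting applies: the uniform entropy bounds are $m$-independent and the laws of the $\vartheta_i$ converge uniformly to $F$, so $\mathbb{G}_n$ is asymptotically tight, while its finite-dimensional distributions converge by the multivariate CLT (using the covariance expansion) to the mean-zero Gaussian with covariance $\sigma_F$. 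The main technical obstacle is the bias calculation under Assumption~\ref{ass:cdf}.B, where the deterministic coupling precludes the clean conditioning of case A and requires careful bookkeeping of all $O(m^{-1})$ contributions from the joint Taylor expansion of $f$ and $\sigma$.
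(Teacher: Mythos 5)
Your overall decomposition, the treatment of Assumption~\ref{ass:cdf}.A, and the covariance calculation all match the paper's proof in substance: the paper likewise writes $E(\hat F(\theta))=E(G(\theta,\varepsilon_i/\sqrt m))$ with $G(\theta,\delta)=E(\mathrm{1}\{\theta_i+\delta\sigma_i\le\theta\})$ and expands in $\delta$ using $E(\varepsilon_i)=E(\varepsilon_i^3)=0$, $E(\varepsilon_i^2)=1$, $E(\varepsilon_i^4)<\infty$ and the envelope $e(\sigma_i)$. One small imprecision: you need a \emph{fourth}-order expansion (third-order terms vanish by no skewness, and the $O(m^{-2})$ remainder comes from the fourth derivative of the conditional CDF, i.e.\ the third derivative of the conditional density bounded by $e$); a third-order expansion only yields an $O(m^{-3/2})$ remainder. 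For the weak convergence you take a genuinely different but equally valid route: a triangular-array Donsker argument for the VC class of half-lines (fidis by the CLT plus uniform-entropy tightness), whereas the paper uses the probability integral transform $u_i=F_m(\vartheta_i)$, the Koml\'os--Major--Tusn\'ady strong approximation, and L\'evy's modulus of continuity to move from $\mathbb{B}_n(F_m(\theta))$ to $\mathbb{B}_n(F(\theta))$. Your route avoids strong approximation machinery; the paper's gives an explicit $O_p(\log n/\sqrt n)$ coupling rate, which it does not actually need here.

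The genuine gap is in case~\ref{ass:cdf}.B. After the substitution $s=(t-\theta)\sqrt m$ your integral runs over all $s\in\mathbb{R}$, and the joint Taylor expansions of $f(\theta+s/\sqrt m)$ and $\sigma(\theta+s/\sqrt m)$ are only valid for $|s|/\sqrt m$ small; moreover, to convert the resulting integrals against $dG$ into moments of $\varepsilon_i$ you must invert the map $s\mapsto -s/\sigma(\theta+s/\sqrt m)$, which is exactly the equation $t+\delta\sigma(t)=\theta$ whose solution is unique only for $|\delta|$ below some $\eta>0$ (guaranteed by the uniform bound on $\sigma'$). ``Collecting terms'' does not resolve this: without controlling the contribution from large $|s|$ the expansion is not uniform in $\theta$, and the claimed $O(m^{-2})$ remainder does not follow. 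The paper closes this by (i) working with the implicit solution $t^*(\theta,\delta)$ of $t+\delta\sigma(t)=\theta$ and its derivatives, valid on $\delta\in(-\eta,\eta)$, and (ii) truncating at $|\varepsilon_i|\le m^\alpha$ for $\alpha\in(0,1/2)$, using Markov's inequality with the fourth moment to show the tail event contributes $o(m^{-2})$ uniformly in $\theta$ (exploiting that $G$ is bounded by one). Some such truncation-plus-inversion device is indispensable for your expansion in case B, and your sketch, while correctly locating the difficulty, does not supply it.
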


\begin{proof}
The proof is in Appendix A.
\end{proof}	

\noindent
To illustrate the result suppose that $\sigma_i^2$ is independent of $\theta_i$ and that $\theta_i$ has density function
$$
f(\theta) = \frac{1}{\psi} \phi\left( \frac{\theta-\eta}{\psi} \right),
$$
as in the \cite{JamesStein1961} problem. Letting $\sigma^2$ denote the mean of the $\sigma_i^2$ an application of Proposition~\ref{prop:cdf} yields
$$
b_F(\theta) = - \frac{\theta-\eta}{2} \, \frac{\sigma^2}{\psi^2} \, \phi\left( \frac{\theta-\eta}{\psi} \right).
$$
Thus, $\hat{F}(\theta)$ is upward biased when $\theta<\eta$ and is downward biased when $\theta>\eta$. This finding is a manifestation of the phenomenon of regression to the mean (or selection bias, or the winner's curse; see \citealt{Efron2011}). It implies that the empirical distribution tends to be too disperse.

\subsection{Quantile function}
The bias in $\hat{F}$ translates to bias in estimators of the quantile function. 
A natural estimator of the quantile function is the left-inverse of $\hat{F}$. With this definition, the plug-in estimator of the $\tau$th-quantile is
$$
\hat{q}(\tau)  
:= \vartheta_{(\lceil \tau n \rceil)} ,
$$
where $\vartheta_{(\lceil \tau n \rceil)}$ is the $\lceil \tau n \rceil$th order statistic
of our sample, where $\lceil a \rceil $ delivers the smallest integer at least as large as $a$. 

To calculate the leading bias in $\hat{q}(\tau)$ observe that it is an (approximate) solution to the empirical moment condition
$$
\hat{F}(q) - \tau = 0
$$
(with respect to $q$).
From Proposition~\ref{prop:cdf} we know that
$$
E(\hat{F}(q(\tau))) - \tau =  \frac{b_F(q(\tau))}{m} + O(m^{-3/2}) ,
$$
uniformly in $\tau$, so the moment condition that defines the estimator $\hat{q}(\tau)$ is biased. Letting
$$
b_q(\tau) : = 
-\frac{b_F(q(\tau))}{f(q(\tau))},
\qquad
\sigma_q^2(\tau) : =  
\frac{\tau(1-\tau)}{f(q(\tau))^2},
$$
we obtain the following result.

\begin{proposition}\label{prop:quantile}
Let the Assumptions~\ref{ass:basicMOM} and \ref{ass:cdf} hold.
For $\tau \in (0,1)$, assume that $f>0$ in a neighborhood of $q(\tau)$.
Then, 
$$
   \sqrt{n}\left( {\hat{q}(\tau)} - q(\tau)  -  \frac{b_q(\tau)}{m} \right)
   \overset{d}{\rightarrow} N(0, \sigma_q^2(\tau)),
$$	
as $n,m \rightarrow \infty$ with $n/m^2 \rightarrow c\in [0,+\infty)$.
\end{proposition}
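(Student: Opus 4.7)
The plan is to convert the quantile statement into a statement about $\hat F$ evaluated at a suitably chosen moving point. By the order-statistic definition of $\hat{q}(\tau)$, for every $a\in\mathbb{R}$,
$$
P\!\left(\sqrt{n}\bigl(\hat{q}(\tau)-q(\tau)-b_q(\tau)/m\bigr)\leq a\right)
\;=\;
P\!\bigl(\hat{F}(t_n)\geq \tau_n\bigr),
$$
where $t_n := q(\tau)+b_q(\tau)/m+a/\sqrt{n}$ and $\tau_n := \lceil \tau n\rceil/n=\tau+O(1/n)$. It suffices to show that the right-hand side converges to $\Phi(a/\sigma_q(\tau))$, which is the CDF of $N(0,\sigma_q^2(\tau))$ at $a$.

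For the mean, Taylor-expand $F$ around $q(\tau)$ using that $f$ and $f'$ are bounded (Assumption~\ref{ass:cdf}) to obtain $F(t_n)=\tau+f(q(\tau))\bigl(b_q(\tau)/m+a/\sqrt n\bigr)+O(m^{-2}+n^{-1})$. The definition $b_q(\tau)=-b_F(q(\tau))/f(q(\tau))$ is tailored precisely so that $f(q(\tau))\,b_q(\tau)/m=-b_F(q(\tau))/m$. Combining this with the uniform bias expansion of Proposition~\ref{prop:cdf}, and using continuity of $b_F$ at $q(\tau)$ together with $t_n\to q(\tau)$ to conclude $b_F(t_n)-b_F(q(\tau))=o(1)$,
$$
E(\hat F(t_n)) \;=\; \tau + \frac{a\,f(q(\tau))}{\sqrt n}+\frac{o(1)}{m}+O(m^{-2}+n^{-1}).
$$
Under $n/m^2\to c\in[0,\infty)$ one has $\sqrt n/m=O(1)$ and $n/m^4\to 0$, so $\sqrt n$ times every error term vanishes, giving $E(\hat F(t_n))=\tau+a\,f(q(\tau))/\sqrt n+o(n^{-1/2})$. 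For the fluctuations, since $\hat F(t_n)=n^{-1}\sum_i \mathrm{1}\{\vartheta_i\leq t_n\}$ is an i.i.d.\ average of bounded Bernoulli variables with success probability $p_n:=E(\hat F(t_n))\to\tau\in(0,1)$, the Lindeberg CLT gives $\sqrt n(\hat F(t_n)-p_n)\overset{d}{\rightarrow}N(0,\tau(1-\tau))$; substituting the expansion of $p_n$ yields $\sqrt n(\hat F(t_n)-\tau)\overset{d}{\rightarrow}N\!\bigl(a\,f(q(\tau)),\,\tau(1-\tau)\bigr)$. Since $\sqrt n(\tau_n-\tau)=O(n^{-1/2})=o(1)$, Slutsky's lemma then delivers
$$
P\!\bigl(\hat F(t_n)\geq \tau_n\bigr)\;\to\; \Phi\!\bigl(a\,f(q(\tau))/\sqrt{\tau(1-\tau)}\bigr)\;=\;\Phi(a/\sigma_q(\tau)),
$$
which is the desired conclusion.

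The main obstacle is bookkeeping rather than any single deep step: one must verify that every higher-order term produced by evaluating the bias expansion of Proposition~\ref{prop:cdf} at the moving point $t_n$ instead of at $q(\tau)$ is $o(n^{-1/2})$ after rescaling by $\sqrt n$. This is exactly what the growth condition $n/m^2\to c<\infty$ buys, via $\sqrt n/m=O(1)$ and $\sqrt n\,m^{-2}=o(1)$. A minor care-point is the ceiling-induced discretization $\tau_n-\tau=O(1/n)=o(n^{-1/2})$, which is absorbed into the same remainder.
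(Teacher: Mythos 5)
Your proof is correct, and it takes a genuinely different route from the paper. The paper works through the probability integral transform: it sets $u_{(k)}:=F_m(\vartheta_{(k)})$, invokes the classical asymptotic normality of uniform order statistics, expands $F_m^{-1}$ around $\tau$, and controls the resulting remainder and the substitution $f_m(q_m(\tau))^{-1}=f(q(\tau))^{-1}+O(m^{-1})$ via a separate lemma on the density $f_m$ of $\vartheta_i$ and its derivatives; the bias term then enters through $q_m(\tau)=q(\tau)-b_F(q(\tau))/(f(q(\tau))\,m)+O(m^{-2})$. You instead invert the quantile event into $\{\hat F(t_n)\geq\lceil\tau n\rceil/n\}$ at the drifting point $t_n=q(\tau)+b_q(\tau)/m+a/\sqrt n$ and apply a triangular-array CLT to the Bernoulli indicators. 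Your version needs only the \emph{distribution-function-level} uniform expansion $F_m(\theta)=F(\theta)+b_F(\theta)/m+O(m^{-2})$ from Proposition~\ref{prop:cdf} plus local smoothness of $F$ and continuity of $b_F$ near $q(\tau)$ (which holds since $b_F=\beta'$ and $\beta$ has bounded higher derivatives under Assumption~\ref{ass:cdf}); it bypasses the density $f_m$ and Lemma~A.2 entirely, so it is the more elementary argument for the pointwise-in-$\tau$ claim actually stated. What the paper's route buys in exchange is a Bahadur-type representation $\hat q(\tau)=q_m(\tau)+(u_{(k)}-\tau)/f(q(\tau))+O_p(n^{-1}+n^{-1/2}m^{-1})$ built on the same KMT machinery used for the process-level results, which is then recycled in the proof of Proposition~\ref{prop:Qbias}. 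Your bookkeeping of the error terms ($\sqrt n/m=O(1)$, $\sqrt n\,m^{-2}\to0$, $\tau_n-\tau=O(n^{-1})$) is accurate, and the final identification $\Phi\bigl(a f(q(\tau))/\sqrt{\tau(1-\tau)}\bigr)=\Phi(a/\sigma_q(\tau))$ matches the stated asymptotic variance.
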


\begin{proof}
The proof is in Appendix A.
\end{proof}

\noindent
As an example, when $\theta_i\sim N(\eta,\psi^2)$, independent of $\sigma_i^2$, we have
$$
b_q(\tau) = \frac{\sigma^2/\psi^2}{2} \, (q(\tau)-\eta),
$$
which, in line with our discussion on regression to the mean above, is positive for all quantiles below the median and negative for all quantiles above the median. The median itself is, in this particular case, estimated without plug-in bias of order $m^{-1}$. It will, of course, still be subject to the usual $n^{-1}$ bias arising from the nonlinear nature of the estimating equation.

\section{Estimation and inference}

Propositions \ref{prop:cdf} and \ref{prop:quantile} complement the existing results on the bias in smooth functionals (\citealt{FernandezValLee2013}; \citealt{OkuiYanagi2016}) of the distribution of heterogenous parameters in panel data models. Our calculations confirm that the order of the bias in the empirical distribution and in the quantile function is of the same order as in the smooth case, $m^{-1}$.

\subsection{Split-panel jackknife estimation}
Importantly, our results validate a traditional jackknife approach to bias correction as in \cite{HahnNewey2004} and \cite{DhaeneJochmans2015}. Such an approach exploits the fact that the bias is proportional to $m^{-1}$ and is based on re-estimating $\theta_1,\ldots,\theta_n$ from subsamples. The simplicity of such a method makes it very useful in panel data applications, for example. 

To illustrate how the jackknife would work here, consider a stationary (balanced) $n\times m$ panel. Let $\vartheta_{i,m_1}$ be an estimator of $\theta_i$ constructed from the $n\times m_1$ subpanel consisting of the first $m_1$ cross sections only.
Then
$$
\hat{F}_{m_1}(\theta):= n^{-1} \sum_{i=1}^n \mathrm{1}\lbrace \vartheta_{i,m_1}\leq \theta \rbrace
$$
is the plug-in estimator of $F(\theta)$ based on this subpanel alone. From Proposition \ref{prop:cdf} it follows that
$$
E(\hat{F}_{m_1}(\theta))
=
F(\theta)
+
\frac{b_F(\theta)}{m_1} + O(m_1^{-3/2}).
$$
Using the remaining  $m_2 := m - m_1$ cross sections from the full panel we can equally calculate estimators $\vartheta_{i,m_2}$ and subsequently construct 
$$
\hat{F}_{m_2}(\theta):= n^{-1} \sum_{i=1}^n \mathrm{1}\lbrace \vartheta_{i,m_2}\leq \theta \rbrace,
$$
for which 
$$
E(\hat{F}_{m_2}(\theta))
=
F(\theta)
+
\frac{b_F(\theta)}{m_2} + O(m_2^{-3/2})
$$
follows in the same way. Consequently, 
$$
\tilde{b}_F(\theta)
:=
{m_1} \hat{F}_{m_1}(\theta) 
+
{m_2} \hat{F}_{m_2}(\theta) 
-m \hat{F}(\theta)
$$
is a split-panel jackknife estimator of the leading bias term $b_F(\theta)$. Hence, 
$$
\tilde{F}(\theta):= \hat{F}(\theta) - \frac{\tilde{b}_F(\theta)}{m}.
$$
is a nonparametric bias-corrected estimator. 

A jackknife estimator of the quantile function can be defined in the same way. Moreover, let $\vartheta_{(\lceil \tau n \rceil), m_1}$ and $\vartheta_{(\lceil \tau n \rceil), m_2}$ be the $\lceil \tau n \rceil$ order statistic of the re-estimated quantities in the first and second subsample, respectively. Recall that $\vartheta_{(\lceil \tau n \rceil), m_1}$ is the (approximate) solution to 
$
\hat{F}_{m_1}(q) - \tau = 0, 
$
and so is our estimator of $q(\tau)$ as obtained from the information in the $n\times m_1$ subpanel only. As before, 
$$
\tilde{b}_q(\tau)
:=
{m_1} \vartheta_{(\lceil \tau n \rceil), m_1} 
+
{m_2} \vartheta_{(\lceil \tau n \rceil), m_2}
-m \vartheta_{(\lceil \tau n \rceil)}
$$
is a nonparametric estimator of $b_q(\tau)$ that gives rise to a jackknife bias-corrected estimator of the quantile function. 

The large-sample behavior of these jackknife estimators is  the same as for the analytic corrections in Propositions \ref{prop:WeakConvCorrected} and \ref{prop:Qbias} below.
The split-sample jackknife is simple to implement but requires access to the original data from which $\vartheta_1,\ldots,\vartheta_n$ were computed. This can be infeasible in meta-analysis problems, where each of the $\vartheta_i$ is an estimator constructed from a different data set that need not all be accessible. It can also be complicated in structural econometric models, where $\vartheta_i$ may be the solution to a cumbersome optimization programme that can be time-consuming to solve. We discuss an alternative bias-correction estimator next.

\subsection{Analytic bias correction}
We will formulate regularity conditions for a plug-in estimator of the bias to be consistent under the maintained assumption that the $\sigma_1^2,\ldots, \sigma_m^2$ are known. We conjecture that, under suitable conditions, the results below will continue to go through when the $\sigma_i^2$ are replaced by estimators. 

A bias-corrected estimator based on Proposition~\ref{prop:cdf} takes the form 
$$
\check{F}(\theta) : = \hat{F}(\theta) - \frac{\hat{b}_F(\theta)}{m},
\qquad
\hat{b}_F(\theta) : = -
\frac{(nh^2)^{-1} \sum_{i=1}^n \sigma_i^2 \, k^\prime\left(\frac{\vartheta_i-\theta}{h}\right)}{2},
$$
where $k^\prime$ is the derivative of kernel function $k$ and $h$ is a non-negative bandwidth parameter. Thus, we estimate the bias using standard kernel methods. For simplicity, we will use a Gaussian kernel throughout, so $k^\prime(\eta):=-\eta \, \phi(\eta)$.

We establish the asymptotic behavior of $\check{F}$ under the following  conditions.

\begin{assumption} \label{ass:cdfcorrect}
\mbox{}

\medskip\noindent
(i) The conditional density of $\theta_i$ given $\sigma_i$ is five times differentiable with respect to $\theta_i$ and the derivatives are bounded in absolute value by a function $e(\sigma_i)$ such that
$
E(e(\sigma_i) ) < \infty.
$ 

\medskip\noindent
(ii)  
There exists an integer $\omega>2$, and real numbers $\kappa> 1 + (1-\omega^{-1})^{-1}$ and $\eta>0$ so that $\sup_\theta (1+\lvert\theta\rvert^{\kappa}) \, f(\theta) = O(1)$ and $\sup_\theta (1+\lvert\theta\rvert^{1+\eta}) \, \lvert \nabla^1 b_F(\theta)\rvert = O(1)$, and $\sup_\theta \lvert b_F(\theta)\rvert = O(1)$.

\medskip\noindent
(iii) The density of $\varepsilon$, $g$, satisfies
$
g(\varepsilon) \leq C \, (1+\lvert \varepsilon \rvert)^{-\alpha}
$
for finite constant $C$ and $\alpha\geq \kappa+1$.

\end{assumption}

\noindent
Assumption \ref{ass:cdfcorrect} contains simple smoothness and boundedness requirements on the conditional density of $\theta_i$ given $\sigma_i^2$, as well as tail conditions on the marginal density of the $\theta_i$ and on the bias function $b_F(\theta)$.

We have the following result.

\begin{proposition}
\label{prop:WeakConvCorrected}
Let Assumptions \ref{ass:basicMOM}, \ref{ass:cdf}, and \ref{ass:cdfcorrect} hold and let $\varepsilon:=(3-\omega^{-1})\, \omega^{-1}>0$. If $h = O(m^{-1/2})$, $h^{-1} = O( m^{2/3 - 4/9 \, \varepsilon} )$, and $h^{-1} = O(n)$, as $n\rightarrow\infty$ and $m\rightarrow\infty$ with $n/m^4\rightarrow 0$, then
$$
\sqrt{n} ( \check{F}(\theta)  -   F(\theta) ) \rightsquigarrow  \mathbb{G}_F(\theta)
$$
as a stochastic process indexed by $\theta$, where $\mathbb{G}_F(\theta)$ is a mean zero Gaussian process with covariance function $\sigma_F(\theta_1,\theta_2)$.    
\end{proposition}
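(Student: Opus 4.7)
The plan is to combine Proposition~\ref{prop:cdf} with a uniform consistency result for $\hat{b}_F$. Writing
\begin{equation*}
\sqrt{n}\bigl(\check{F}(\theta) - F(\theta)\bigr) = \sqrt{n}\Bigl(\hat{F}(\theta) - F(\theta) - b_F(\theta)/m\Bigr) - \frac{\sqrt{n}}{m}\bigl(\hat{b}_F(\theta) - b_F(\theta)\bigr),
\end{equation*}
the first summand on the right converges weakly to $\mathbb{G}_F$ under $n/m^4 \to 0$ by Proposition~\ref{prop:cdf}. It thus suffices to show that $\sup_\theta|\hat{b}_F(\theta) - b_F(\theta)| = o_P(m/\sqrt{n})$ and then apply Slutsky's lemma in $\ell^\infty$.

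I would decompose $\hat{b}_F(\theta) - b_F(\theta)$ into a stochastic term $\hat{b}_F(\theta) - E[\hat{b}_F(\theta)]$ and a bias $E[\hat{b}_F(\theta)] - b_F(\theta)$. The stochastic term is handled by a standard kernel-variance calculation: since $\sigma_i^2 \in [\underline{\sigma}^2,\overline{\sigma}^2]$, $\kappa'$ is bounded with finite $L^2$-norm, and the density of $\vartheta_i$ (the convolution of $f$ with Gaussian noise) is uniformly bounded under Assumptions~\ref{ass:cdf} and~\ref{ass:cdfcorrect}, one obtains $\mathrm{var}(\hat{b}_F(\theta)) = O(1/(nh^3))$. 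Hence $(\sqrt{n}/m)\sqrt{\mathrm{var}(\hat{b}_F(\theta))} = O(1/(mh^{3/2})) = o(1)$ under $h^{-1} = O(m^{2/3 - 4\varepsilon/9})$. For the bias, one integration by parts gives
\begin{equation*}
E[\hat{b}_F(\theta)] = \tfrac{1}{2}\int \kappa(v)\, g'(\theta+hv)\,dv,
\end{equation*}
where $g(\theta) := E(\sigma_i^2\mid \vartheta_i=\theta)\,p_\vartheta(\theta)$. Under Assumption~\ref{ass:cdfcorrect}(i), $g$ is the convolution of $\tilde{g}(\theta) := E(\sigma_i^2\mid \theta_i=\theta)\,f(\theta) = 2\beta(\theta)$ with the Gaussian noise density. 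Taylor-expanding this convolution in the noise variance $\sigma_i^2/m$ and using the smoothness in Assumption~\ref{ass:cdfcorrect}(ii) yields $g'(\theta) = 2 b_F(\theta) + O(1/m)$; symmetry of $\kappa$ then produces the standard kernel bias $O(h^2)$, so $E[\hat{b}_F(\theta)] - b_F(\theta) = O(h^2) + O(1/m)$. Multiplying by $\sqrt{n}/m$ and invoking $h = O(m^{-1/2})$ together with $n/m^4\to 0$, both pieces are $o(1)$.

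To upgrade to uniformity in $\theta$, I would split the domain. On a compact interval $[-M,M]$, the class $\{(\vartheta,\sigma^2)\mapsto \sigma^2\kappa'((\vartheta-\theta)/h):\theta\in [-M,M]\}$ is VC-subgraph by the monotonicity structure of the Gaussian and its derivative, so standard maximal inequalities lift the pointwise variance bound to a uniform bound at the same rate. Outside $[-M,M]$, the tail conditions in Assumption~\ref{ass:cdfcorrect}(iii) control $\sup_{|\theta|>M}|b_F(\theta)|$ directly, while for $\hat{b}_F$ on this region a moment bound combined with Markov's inequality (exploiting the explicit exponents $\omega>2$ and $\kappa>1+(1-\omega^{-1})^{-1}$) allows one to choose a truncation radius $M_n\to\infty$ slowly enough to keep the tail contribution $o_P(m/\sqrt{n})$.

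The main obstacle I anticipate is the bias analysis: carefully bookkeeping the combined effects of kernel smoothing at bandwidth $h$ and noise convolution at scale $1/\sqrt{m}$. The $O(1/m)$ contribution to $g'(\theta) - 2b_F(\theta)$ originates from the second-order term in the noise Taylor expansion and requires up to four derivatives of the conditional density of $\theta_i$ given $\sigma_i^2$; combined with the outer $\theta$-derivative, this is precisely why Assumption~\ref{ass:cdfcorrect}(ii) upgrades the three derivatives in Assumption~\ref{ass:cdf} to five. A secondary challenge is the delicate calibration of the truncation radius $M_n$ with $n$, $m$, and $h$ in the uniformity step, which is where the specific exponents in Assumption~\ref{ass:cdfcorrect}(iii) enter.
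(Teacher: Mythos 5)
Your decomposition, and your bias and variance analysis of $\hat{b}_F$, coincide with the paper's: the proof there also reduces everything to showing $\sup_\theta|\hat{b}_F(\theta)-b_F(\theta)| = O(m^{-1})+O(h^2)+o_P(m/\sqrt{n})$, with the $O(m^{-1})+O(h^2)$ bias obtained exactly as you describe (integration by parts, then a Taylor expansion of the noise convolution using the five derivatives of Assumption~\ref{ass:cdfcorrect}(ii), via Lemmas~\ref{lemma:helpCDFpreliminaries3} and \ref{lemma:helpCDFpreliminaries4}(i)) and the pointwise variance bound $O(n^{-1}h^{-3})$ as in Lemma~\ref{lemma:helpCDFpreliminaries4}(ii). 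Where you diverge is the uniformity step. The paper does not use VC classes or truncation: it compactifies the real line through the monotone map $t(\theta)=\mathrm{sgn}(\theta)(1-(1+|\theta|)^{-\nu})/\nu$, controls the Lipschitz constant of $\hat{b}_F-b_F$ in the new coordinate via the weighted derivative bound of Lemma~\ref{lemma:helpCDFpreliminaries4}(iii) (this is where the tail exponents $\kappa$ and $\eta$ of Assumption~\ref{ass:cdfcorrect}(iii) actually enter, rather than through a truncation radius), reduces the supremum to a maximum over a grid of mesh $h^{3-\omega^{-1}}$, and bounds the maximum by Rosenthal's $\omega$-th moment inequality; the condition $h^{-1}=O(n)$ is used precisely to make the higher-moment term in Rosenthal dominated by the variance term, and the resulting uniform rate is $O(n^{-1/2}h^{-3/2-\varepsilon})$ --- which is why the bandwidth condition carries the $-4\varepsilon/9$ correction you do not appear to need. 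Your empirical-process route should deliver a uniform rate that is only a $\sqrt{\log(1/h)}$ factor worse than pointwise (the class generated by $\kappa'=-u\phi(u)$ is of bounded variation, hence manageable), which is sharper than the paper's $h^{-\varepsilon}$ penalty, though you would still need the analogue of $h^{-1}=O(n)$ to tame the envelope term $h^{-2}\log(1/h)/n$ in the maximal inequality, and your tail-truncation step would have to be carried out with the same care the paper devotes to its Lipschitz bound; as written, "choose $M_n$ slowly enough" is the one place where your argument is a sketch rather than a proof.
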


\begin{proof}
The proof is in Appendix B.	
\end{proof}

\noindent
The implications of Proposition \ref{prop:WeakConvCorrected} are qualitatively similar to those for smooth functionals discussed above. Indeed, for any fixed $\theta$, it implies that
$$
\check{F}(\theta) \overset{a}{\sim} N(F(\theta), F(\theta)(1-F(\theta))/n)
$$
as $n\rightarrow\infty$ and $m\rightarrow\infty$ with $n/m^4\rightarrow 0$. Thus, the leading bias is removed from $\hat{F}$ without incurring any cost in terms of (asymptotic) precision. Given the correction term, the sample variance of 
$$
\mathrm{1}\lbrace \vartheta_i\leq \theta \rbrace +  \frac{1}{2}\frac{1}{mh^2} \sigma_i^2 \kappa^\prime\left(\frac{\vartheta_i-\theta}{h}\right)
$$
is a more natural basis for inference in small samples than is that of $\mathrm{1}\lbrace \vartheta_i\leq \theta \rbrace$. 
A data-driven way of choosing $h$ is by cross validation. A plug-in estimator of the integrated squared error $\int_{-\infty}^{+\infty} (\check{F}(\theta)-F(\theta))^2 \, d\theta$ (up to multiplicative and additive constants) is 
\begin{equation*}
\begin{split}
v(h) : =	& 
\sum_{i=1}^n \sum_{j=1}^n \frac{\sigma_i^2 \sigma_j^2}{h^2} \, \underline{\phi}^\prime(\vartheta_i,\vartheta_j;h)
 + 
\sum_{i=1}^n \sum_{j\neq i}\frac{\sigma_i^2}{h} 
\left( m \phi^\prime\left(\frac{\vartheta_i-\vartheta_j}{h}\right)
-
\frac{nm}{n-1}  \phi\left(\frac{\vartheta_i-\vartheta_j}{h}\right)
\right) ,
\end{split}
\end{equation*}
where we use the shorthand
$$
\underline{\phi}^\prime(\vartheta_i,\vartheta_j;h)
: =  \frac{1}{4}
\frac{1}{\sqrt{2}h} \phi\left(\frac{\vartheta_i-\vartheta_j}{\sqrt{2}h} \right) \left(\frac{1}{2}-\frac{(\vartheta_i+\vartheta_j)^2}{4h^2} + \frac{\vartheta_i \vartheta_j}{h^2} \right).
$$
See Appendix C for details on the derivation. The cross-validated bandwidth then is $\check{h}:=\arg\min_h v(h)$ on the interval $(0,+\infty)$.

Now turn the bias-corrected estimation of the quantile function. Proposition \ref{prop:quantile} readily suggests a bias-corrected estimator of the form
$$
\hat{q}(\tau) - \frac{\hat{b}_q(\tau)}{m}, 
\qquad
\hat{b}_q(\tau) : = 
- \frac{\hat{b}_F(\hat{q}(\tau))}{\hat{f}(\hat{q}(\tau))},
$$
using obvious notation. While (under suitable regularity conditions) such an estimator successfully reduces bias it has the unattractive property that it requires a non-parametric estimator of the density $f$, which further shows up in the denominator. 

An alternative estimator that avoids this issue is
$$
\check{q}(\tau):= \vartheta_{(\lceil \hat{\tau}^* n \rceil)}, \qquad
\hat{\tau}^* : = 
\tau + \frac{\hat{b}_F(\hat{q}(\tau))}{m} ,
$$
The justification for this estimator comes from the fact that
$E(\hat{F}(q(\tau))) - \tau^* = O(m^{-2})$, where $\tau^* = \tau + {b_F(q(\tau))}/ {m}$,  and its interpretation is intuitive. Given the noise in the $\vartheta_i$ relative to the $\theta_i$, the empirical distribution of the former is too heavy-tailed relative to the latter, and so $\hat{q}(\tau)$ estimates a quantile that is too extreme, on average. Changing the quantile of interest from $\tau$ to $\tau^*$ adjusts the naive estimator and corrects for regression to the mean.

\begin{proposition} \label{prop:Qbias}
Let the assumptions stated in Proposition~\ref{prop:WeakConvCorrected} hold.
For $\tau \in (0,1)$, assume that $f>0$ in a neighborhood of $q(\tau)$.
Then, 
$$
   \sqrt{n}\left( \check{q}(\tau) - q(\tau)   \right)
   \overset{d}{\rightarrow} {N}(0, \sigma_q^2(\tau)),
$$	
as $n,m \rightarrow \infty$ with $n/m^4 \rightarrow 0$.
\end{proposition}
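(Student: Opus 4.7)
The plan is to reinterpret $\check q(\tau) = \vartheta_{(\lceil \hat \tau^* n \rceil)}$ as the empirical quantile function evaluated at the random probability level $\hat \tau^*$ and to reduce matters to Proposition~\ref{prop:quantile} applied at a well-chosen deterministic level. Writing $\hat q(s) := \vartheta_{(\lceil s n \rceil)}$ for $s \in (0,1)$ and setting $\tau_n := \tau + b_F(q(\tau))/m$, I would decompose
$$
\sqrt n \bigl( \check q(\tau) - q(\tau) \bigr)
= \sqrt n \bigl( \hat q(\tau_n) - q(\tau) \bigr) + \sqrt n \bigl( \hat q(\hat \tau^*) - \hat q(\tau_n) \bigr),
$$
and aim to show that the first summand converges in distribution to $N(0, \sigma_q^2(\tau))$ while the second is $o_p(1)$.

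For the first summand, Proposition~\ref{prop:quantile} applied at $\tau_n$ (which lies in $(0,1)$ eventually, with $f$ positive in a neighborhood of $q(\tau_n) \to q(\tau)$) gives $\sqrt n ( \hat q(\tau_n) - q(\tau_n) - b_q(\tau_n)/m ) \overset{d}{\rightarrow} N(0, \sigma_q^2(\tau_n))$. Taylor-expanding $q$ and $b_q$ about $\tau$ yields $q(\tau_n) = q(\tau) + b_F(q(\tau))/(m f(q(\tau))) + O(m^{-2})$ and $b_q(\tau_n)/m = -b_F(q(\tau))/(m f(q(\tau))) + O(m^{-2})$, so the $m^{-1}$ pieces cancel by design and $q(\tau_n) + b_q(\tau_n)/m = q(\tau) + O(m^{-2})$. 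Under $n/m^4 \to 0$, $\sqrt n \cdot O(m^{-2}) = o(1)$, and $\sigma_q^2(\tau_n) \to \sigma_q^2(\tau)$ by continuity, so the first summand converges to the advertised Gaussian.

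For the second summand I would use the Bahadur representation for $\hat q$ near $\tau$: uniformly over $s$ with $s - \tau_n \to 0$,
$$
\hat q(s) - \hat q(\tau_n) = \frac{s - \tau_n}{f(q(\tau))} + o_p(n^{-1/2}).
$$
Substituting $s = \hat \tau^*$ reduces everything to $\sqrt n (\hat \tau^* - \tau_n) = o_p(1)$, i.e., $\hat b_F(\hat q(\tau)) - b_F(q(\tau)) = o_p(m/\sqrt n)$. Writing this difference as $[\hat b_F(\hat q(\tau)) - b_F(\hat q(\tau))] + [b_F(\hat q(\tau)) - b_F(q(\tau))]$, the second bracket is $O_p(|\hat q(\tau) - q(\tau)|) = O_p(m^{-1} + n^{-1/2})$ by boundedness of $\nabla^1 b_F$ from Assumption~\ref{ass:cdfcorrect} and Proposition~\ref{prop:quantile}, which is $o_p(m/\sqrt n)$ since $n/m^4 \to 0$. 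The main obstacle is the first bracket: the rate $\hat b_F(\theta) - b_F(\theta) = o_p(m/\sqrt n)$ is implicit in the proof of Proposition~\ref{prop:WeakConvCorrected} at each fixed $\theta$, but here it must be transferred to the random evaluation point $\hat q(\tau)$; the cleanest route is to upgrade this to a uniform rate over a compact neighborhood of $q(\tau)$ using standard maximal-inequality arguments for kernel estimators under the bandwidth and smoothness conditions already in force, and this is where the bulk of the technical work would lie.
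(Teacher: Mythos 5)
Your proposal is correct in substance but packages the argument differently from the paper. The paper works on the probability scale: it uses the Koml\'os--Major--Tusn\'ady approximation together with Doss--Gill to get a uniform quantile-process expansion, controls $\mathbb{B}_n(\hat\tau^*)-\mathbb{B}_n(\tau)$ via L\'evy's modulus of continuity, shows $\sqrt{n}\,(F(\check q(\tau))-\tau)\overset{d}{\rightarrow}N(0,\tau(1-\tau))$ after substituting the uniform bias expansion of $F_m$ and the uniform rate for $\hat b_F$, and finishes with the delta method applied to $F^{-1}$. You instead work on the quantile scale, splitting off a deterministic-level term $\hat q(\tau_n)$ handled by Proposition~\ref{prop:quantile} and a perturbation term handled by a Bahadur representation; the bias cancellation you exhibit by Taylor expanding $q$ and $b_q$ at $\tau_n$ is exactly the mechanism the paper exploits implicitly. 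Both routes hinge on the same two inputs, and on the first one your worry is misplaced: the ``main obstacle'' you flag --- transferring the rate $\hat b_F(\theta)-b_F(\theta)=o_p(m/\sqrt{n})$ to the random point $\hat q(\tau)$ --- requires no additional work, because the proof of Proposition~\ref{prop:WeakConvCorrected} already establishes $\sup_{\theta\in\mathbb{R}}(\sqrt{n}/m)\lvert \hat b_F(\theta)-b_F(\theta)\rvert=o_p(1)$, i.e.\ the uniform rate over all of $\mathbb{R}$, not just pointwise. Two smaller points deserve attention if you write this out fully: (i) Proposition~\ref{prop:quantile} is stated for a fixed $\tau$ and under $n/m^2\rightarrow c<\infty$, whereas you invoke it at the drifting level $\tau_n$ and under the weaker condition $n/m^4\rightarrow 0$; its proof does go through in that setting (the only place the rate condition enters is $\sqrt{n}\,O(m^{-2})=o(1)$), but you should say so rather than cite the statement; (ii) the uniform Bahadur representation is asserted rather than justified --- the required oscillation control of the empirical quantile process over the (possibly $O_p(m^{-1})$-sized, though in fact $o_p(n^{-1/2})$-sized once $\hat\tau^*-\tau_n$ is bounded) neighbourhood is precisely what the paper obtains from KMT plus the modulus of continuity of the Brownian bridge, so you would end up reproducing that step in any case.
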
	

\begin{proof}
The proof is in Appendix B.	
\end{proof}

\noindent
The corrected estimator has the same asymptotic variance as the uncorrected estimator.  
It is well-known that plug-in estimators of $\sigma_q^2$ can perform quite poorly in small samples (\citealt{MaritzJarrett1978}). Typically, researchers rely on the bootstrap, and we suggest doing so here. Moreover, draw (many) random samples of size $n$ from the original sample $\vartheta_1,\ldots,\vartheta_n$ and re-estimate $q(\tau)$ by the bias-corrected estimator for each such sample. Then construct confidence intervals for $q(\tau)$ using the percentiles of the empirical distribution of these estimates. Note that, again, this bootstrap procedure does not involve re-estimation of the individual $\theta_i$.

\section{Numerical illustrations} \label{sec:illustrations}

\subsection{Simulated data}

To support our theory we provide simulation results for a \cite{JamesStein1961} problem where $\theta_i\sim N(0,\psi^2)$ and we have access to an $n\times m$ panel on independent realizations of the random variable
$$
x_{it} \vert \, \theta_i\sim N(\theta_i,\sigma^2).
$$ 
This setup is a simple random-coefficient model. It is similar to the classic many normal means problem of \cite{NeymanScott1948}.  While their focus was on consistent estimation of the within-group variance, $\sigma^2$, for fixed $m$, our focus is on between-group characteristics and the distribution of the $\theta_i$ as a whole.
We estimate $\theta_i$ by the fixed-effect estimator, i.e., 
$$
\vartheta_i = m^{-1}\sum_{t=1}^m x_{it}.
$$
The sampling variance of $\vartheta_i \vert \theta_i$ is $\sigma^2/m$. Rather than assuming this variance to be known we implement our analytical bias correction using the estimator
$$
s_i^2 := (m-1)^{-1} \sum_{t=1}^m (x_{it}-\vartheta_i)^2 .
$$
We do not make use of the fact that the $\vartheta_i$ are homoskedastic in estimating the noise or in constructing the bias correction. Moreover, the implementation of our procedure is non-parametric in the noise distribution. 

A deconvolution argument implies that 
$$
\vartheta_i \sim N(0, \psi^2 + \sigma^2/m).
$$
Thus, indeed, the empirical distribution of the fixed-effect estimator is too fat-tailed. In particular, the sample variance of $\vartheta_1,\ldots,\vartheta_n$,
$$
\hat{\psi}^2 : = \frac{1}{n-1} \sum_{i=1}^n (\vartheta_i-\overline{\vartheta})^2,
\qquad
\overline{\vartheta} := n^{-1} \sum_{i=1}^n \vartheta_i,
$$
is a biased estimator of $\psi^2$. To illustrate how this invalidates inference in typically-sized data sets we simulated data for $\psi^2=1$ (so $F$ is standard normal) and $\sigma^2=5$. The panel dimensions $(n,m)$ reported on are $(50,3)$, $(100,4)$, and $(200,5)$. Table \ref{tab:JamesStein} shows the bias and standard deviation of $\hat{\psi}^2$ as well as the empirical rejection frequency of the usual two-sided $t$-test for the null that $\psi^2=1$. The nominal size is set to $5\%$. In practice, however, the test rejects in virtually all of the $10,000$ replications. The table provides the same summary statistics for the bias-corrected estimator
$$
\check{\psi}^2 : = {\frac{1}{n-1} \sum_{i=1}^n\left(  (\vartheta_i-\overline{\vartheta})^2-\frac{s_i^2}{m}\right)}.
$$
The adjustment reduces the estimator's bias relative to its standard error and brings down the empirical rejection frequencies to just over their nominal value for the sample sizes considered.

\begin{table}[htbp]
\centering \footnotesize
\caption{Inference on $\psi^2$ in the James-Stein problem from $n\times m$ panel data.}
\begin{tabular}{rrrrrrrrrr}
	\hline\hline
          &       & \multicolumn{2}{c}{bias} & \multicolumn{2}{c}{std} & \multicolumn{2}{c}{se/std} & \multicolumn{2}{c}{size ($5\%$)} \\
		  \hline
    $n$   & $m$     & \multicolumn{1}{c}{$\hat{\psi}^2$}   & \multicolumn{1}{c}{$\check{\psi}^2$}    & \multicolumn{1}{c}{$\hat{\psi}^2$}   & \multicolumn{1}{c}{$\check{\psi}^2$}    & \multicolumn{1}{c}{$\hat{\psi}^2$}   & \multicolumn{1}{c}{$\check{\psi}^2$}    & \multicolumn{1}{c}{$\hat{\psi}^2$}   & \multicolumn{1}{c}{$\check{\psi}^2$ }\\
     50   & 3    & 1.616 & -0.054 & 0.525 & 0.577 & 0.964 & 0.971 & 0.973 & 0.082 \\
    100   & 4    & 1.224 & -0.028 & 0.321 & 0.337 & 0.966 & 0.969 & 0.997 & 0.073 \\
    200   & 5    & 0.989 & -0.010 & 0.199 & 0.205 & 0.985 & 0.985 & 1.000 & 0.062 \\
	\hline\hline
\end{tabular}
\vspace{-.25cm} \floatfoot{ \footnotesize 
Table notes.
$\hat{\psi}^2$ is the plug-in estimator of $\psi^2$. $\check{\psi}^2$ is its (analytically) bias-corrected version constructed using estimators of the variance of the noise distributions. The table reports the bias and standard deviation of these estimators, along with the ratio of the average standard error to the standard deviation and empirical rejection frequencies of a two-sided $t$-test for the null that $\psi^2=1$, which is the value with which the data were generated.
}
\label{tab:JamesStein}
\end{table}

A popular approach in empirical work to deal with noise in $\vartheta_1,\ldots,\vartheta_n$ is shrinkage estimation (see, e.g., \citealt{ChettyFriedmanRockoff2014}). This procedure is not designed to improve estimation and inference of $F$ or its moments, however. In the current setting, the (infeasible, parametric) shrinkage estimator is simply
$$
\left(1 - \frac{\sigma^2/m}{\sigma^2/m+\psi^2} \right) \, \vartheta_i.
$$
Its exact sampling variance is
$$
\left(\frac{\psi^2}{\sigma^2/m+\psi^2} \right) \, \psi^2
=
\psi^2 - \frac{\sigma^2/\psi^2}{m} + o(m^{-1}).
$$
It follows that the sample variance of the shrunken $\vartheta_1,\ldots,\vartheta_n$ has a bias that is of the same order as that in the sample variance of  $\vartheta_1,\ldots,\vartheta_n$. Interestingly, note that, here, this estimator overcorrects for the presence of noise, and so will be underestimating the true variance, $\psi^2$, on average.

The upper two plots in Figures \ref{fig:simulations_n50_m3}, \ref{fig:simulations_n100_m4}, and \ref{fig:simulations_n200_m5} provide simulation results for the distribution function $F$ for the same Monte Carlo designs. The figures deal with the sample sizes $(50,3)$, $(100,4)$, $(200,5)$, respectively. The left plots contain (the average over the Monte Carlo replications of) the analytically bias-corrected estimator (solid blue line), with the bandwidth chosen according to a cross-validation procedure, together with $95\%$ confidence bands placed around in. Each of the plots also provide the average of the naive plug-in estimator (dashed red line), the empirical distribution of the Empirical-Bayes point estimates (dashed-dotted purple line), and the actual standard-normal distribution that is being estimated (solid black line).\footnote{Empirical Bayes was implemented non-parametrically (and correctly assuming homoskedasticity) based on the formula stated in the introduction using a kernel estimator and the optimal bandwidth that assumes knowledge of the normality of the target distribution.} The upper right plots in Figures \ref{fig:simulations_n50_m3}, \ref{fig:simulations_n100_m4}, and \ref{fig:simulations_n200_m5} have the same structure, only now the bias-corrected estimator being plotted is the split-sample jackknife.

The simulations clearly show the substantial bias in the naive estimator. This bias becomes more pronounced relative to its standard error as the sample size grows and, indeed, $\hat{F}$ starts falling outside of the confidence bands (of the bias corrected estimator) as the sample size increases. The Empirical-Bayes estimator is less biased than $\hat{F}$. However, its bias is of the same order and so, as the sample size grows it does not move toward $F$ but, rather, towards $\hat{F}$.\footnote{Recall that the Empirical-Bayes estimator is not designed for inference on $F$ but, in stead, aims to minimize risk in estimating $\theta_1,\ldots,\theta_n$. In terms of RMSE it dominates $\vartheta_1,\ldots,\vartheta_n$. For the three sample sizes considered here, the RMSEs are $1.667$, $1.246$, and $1.000$ for the plug-in estimators and $1.233$, $1.018$, $.874$ for Empirical Bayes.}
The confidence bands of $\check{F}$ and $\tilde{F}$ settle around $F$ as the sample grows. The results also show near identical performance of the split-sample approach and the analytical approach based on our bias formula. Indeed, the curves in the left and right plots are virtually indistinguishable.

\begin{figure}
\centering
\caption{Estimation of $F$ and $q$ in the James-Stein problem from $50\times 3$ panel data.}
\includegraphics[height=.70\textheight]{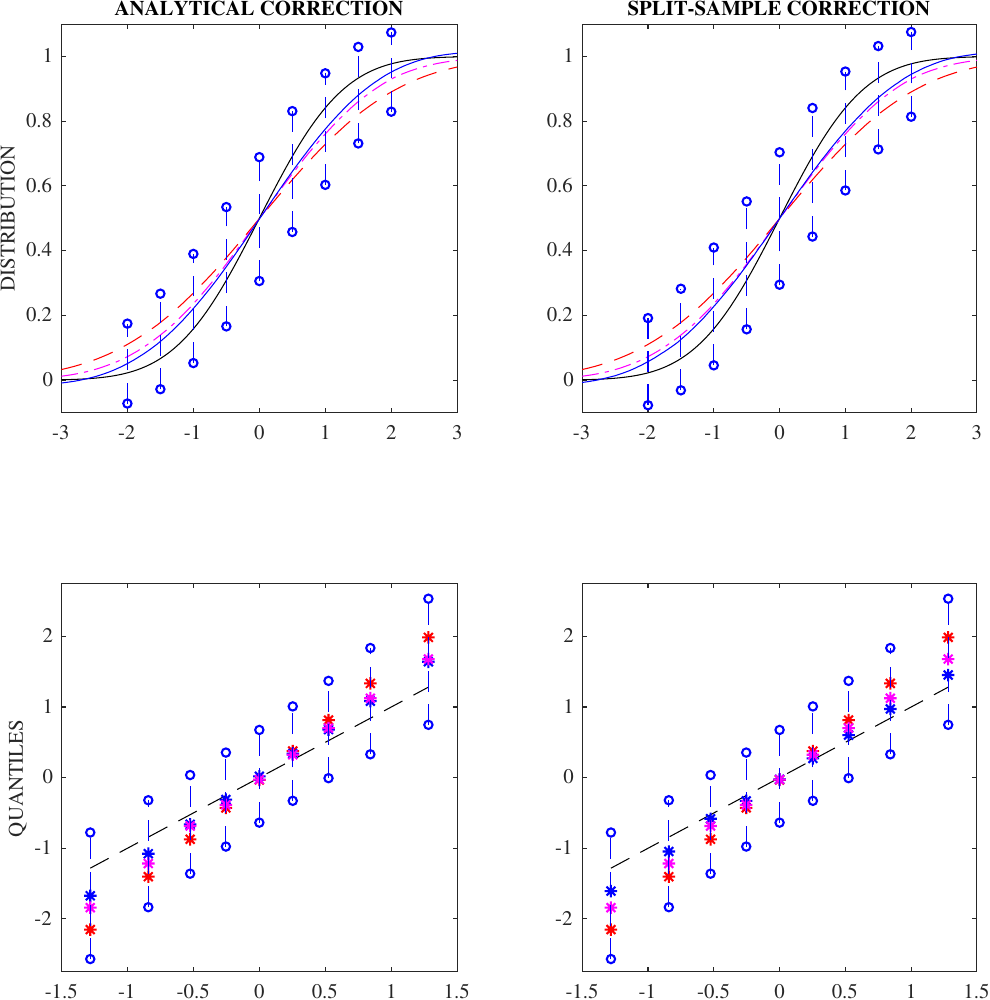}
\vspace{-.25cm} \floatfoot{ \footnotesize 
Figure notes.
The upper plots contain the average (over the Monte Carlo replications) distribution function (full blue line) obtained via analytical bias correction (left plot) and split-sample jackknife estimation (right plot) along with $95\%$ confidence intervals around them at each of the quantiles of $F$ (blue $\circ$).  Each plot also contains the true curve (full black line) and the average of the empirical distribution function of the estimated $\theta_i$ (dashed red line) and of their Empirical Bayes adjustments (dashed-dotted purple line). The lower plots contain corresponding QQ-plots of the average bias-corrected quantile function (blue $*$) at each of the deciles of $F$ together with $95\%$ confidence intervals. The $45^\circ$ line (dashed black line) corresponds to the truth. Average estimates for the naive (red $*$) and Empirical Bayes (purple $*$) estimator are equally pictured.
}
\label{fig:simulations_n50_m3}
\end{figure}

\begin{figure}
\centering
\caption{Estimation of $F$ and $q$ in the James-Stein problem from $100\times 4$ panel data.}
\includegraphics[height=.70\textheight]{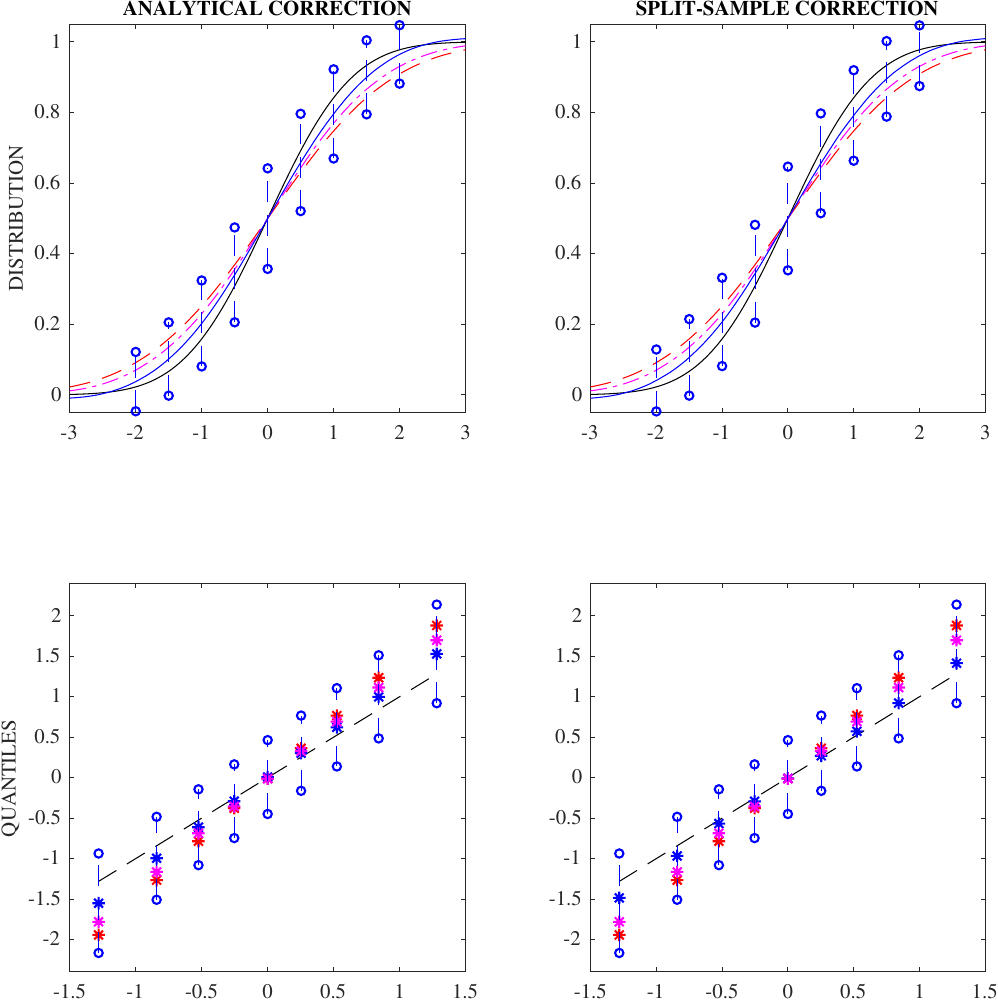}
\vspace{-.25cm} \floatfoot{ \footnotesize 
Figure notes.
The upper plots contain the average (over the Monte Carlo replications) distribution function (full blue line) obtained via analytical bias correction (left plot) and split-sample jackknife estimation (right plot) along with $95\%$ confidence intervals around them at each of the quantiles of $F$ (blue $\circ$).  Each plot also contains the true curve (full black line) and the average of the empirical distribution function of the estimated $\theta_i$ (dashed red line) and of their Empirical Bayes adjustments (dashed-dotted purple line). The lower plots contain corresponding QQ-plots of the average bias-corrected quantile function (blue $*$) at each of the deciles of $F$ together with $95\%$ confidence intervals. The $45^\circ$ line (dashed black line) corresponds to the truth. Average estimates for the naive (red $*$) and Empirical Bayes (purple $*$) estimator are equally pictured.
}
\label{fig:simulations_n100_m4}
\end{figure}

\begin{figure}
\centering
\caption{Estimation of $F$ and $q$ in the James-Stein problem from $200\times 5$ panel data.}
\includegraphics[height=.70\textheight]{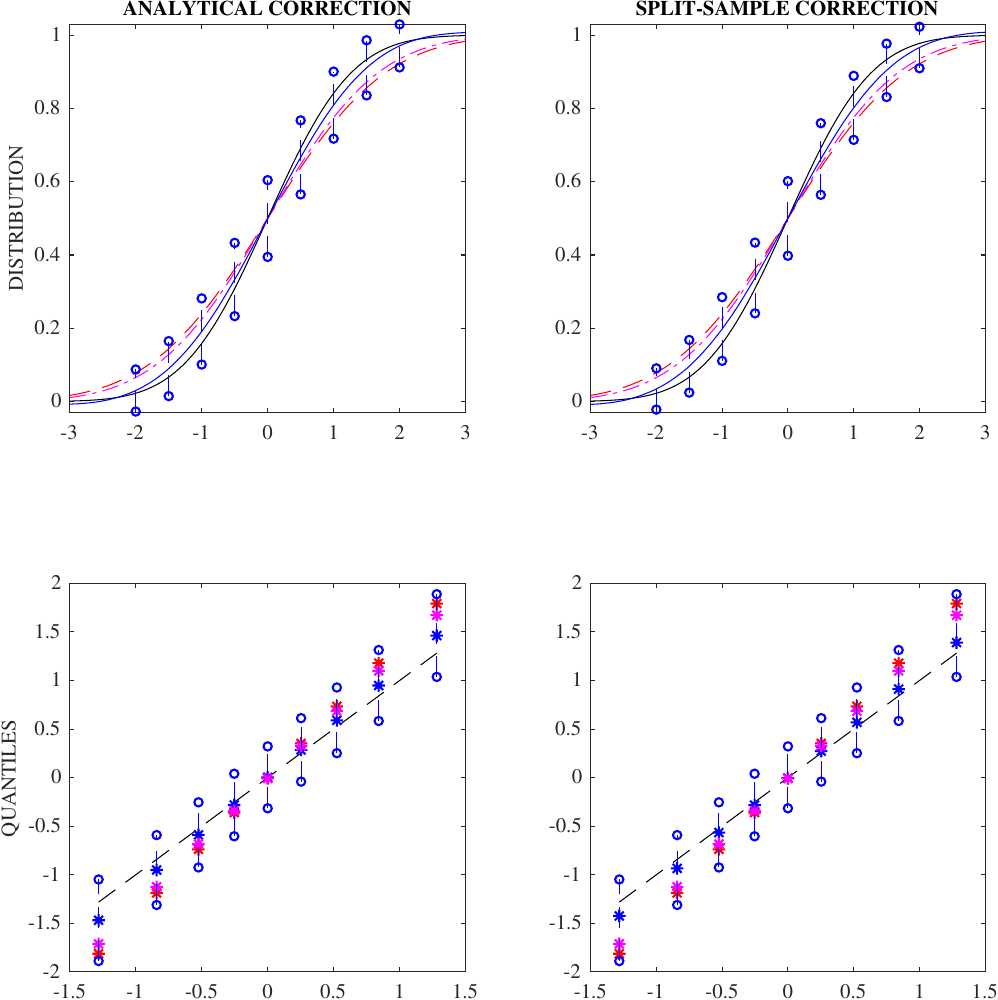}
\vspace{-.25cm} \floatfoot{ \footnotesize 
Figure notes.
The upper plots contain the average (over the Monte Carlo replications) distribution function (full blue line) obtained via analytical bias correction (left plot) and split-sample jackknife estimation (right plot) along with $95\%$ confidence intervals around them at each of the quantiles of $F$ (blue $\circ$).  Each plot also contains the true curve (full black line) and the average of the empirical distribution function of the estimated $\theta_i$ (dashed red line) and of their Empirical Bayes adjustments (dashed-dotted purple line). The lower plots contain corresponding QQ-plots of the average bias-corrected quantile function (blue $*$) at each of the deciles of $F$ together with $95\%$ confidence intervals. The $45^\circ$ line (dashed black line) corresponds to the truth. Average estimates for the naive (red $*$) and Empirical Bayes (purple $*$) estimator are equally pictured.
}
\label{fig:simulations_n200_m5}
\end{figure}

The reduction in bias in our estimators of $F$ is again sufficient to bring the empirical size of tests in line with their nominal size. To see this Table \ref{tab:normalsimulation} provides  empirical rejection frequencies of two-sided tests at the $5\%$ level for $F$ at each of its deciles using both $\hat{F}$ and $\check{F}$. The rejection frequencies based on the naive estimator are much too high for all sample sizes and deciles and get worse as the sample gets larger. Empirical size is much closer to nominal size after adjusting for noise, and this improvement is observed at all deciles of the distribution.

\begin{table}[htbp]
\centering \footnotesize
\caption{Inference on $F$ in the James-Stein problem from $n\times m$ panel data.}
\begin{tabular}{rrrrrrrrrr}
\hline\hline
$\tau$ & 	\multicolumn{1}{c}{$.1$} & \multicolumn{1}{c}{$.2$} & \multicolumn{1}{c}{$.3$} & \multicolumn{1}{c}{$.4$} & \multicolumn{1}{c}{$.5$} & \multicolumn{1}{c}{$.6$} & \multicolumn{1}{c}{$.7$} & \multicolumn{1}{c}{$.8$} & \multicolumn{1}{c}{$.9$} \\
& \multicolumn{9}{c}{$(n,m)=(50,3)$} \\
$\hat{F}$   & 
0.4814 & 0.5518 & 0.3695 & 0.1530 & 0.0681 & 0.1598 & 0.3801 & 0.5610 & 0.4828 \\
$\check{F}$ &
0.0600 & 0.0928 & 0.1039 & 0.0785 & 0.0563 & 0.0745 & 0.1029 & 0.0891 & 0.0628 \\
& \multicolumn{9}{c}{$(n,m)=(100,4)$} \\	
$\hat{F}$   & 						
0.6962 & 0.7304 & 0.5564 & 0.2280 & 0.0566 & 0.2312 & 0.5586 & 0.7352 & 0.7034 \\
$\check{F}$   & 
0.0608 & 0.0848 & 0.0920 & 0.0664 & 0.0494 & 0.0734 & 0.0932 & 0.0782 & 0.0532 \\
& \multicolumn{9}{c}{$(n,m)=(200,5)$} \\
$\hat{F}$   & 						
0.926 & 0.902 &	0.7634 &0.3288 &0.0576 & 0.3212 & 0.7646 & 0.903 & 0.9146 \\
$\check{F}$   & 
0.0536 & 0.0828 & 0.0996 & 0.0770 & 0.0496 & 0.0792 & 0.0978 & 0.0780 & 0.0554 \\
	\hline\hline
\end{tabular}
\vspace{-.25cm} \floatfoot{ \footnotesize 
Table notes.
$\hat{F}$ is the empirical distribution of the $\vartheta_i$. $\check{F}$ is its (analytically) bias-corrected version constructed using estimators of the variance of the noise distributions. The table provides, for several combinations of $(n,m)$, rejection frequencies of the associated two-sided tests of the null that $F(\Phi^{-1}(\tau))=\tau$ for a range of different quantiles $\tau$; the data were generated with $F$ set to the standard-normal distribution function.
}
\label{tab:normalsimulation}
\end{table}

The lower two plots in Figures \ref{fig:simulations_n50_m3}, \ref{fig:simulations_n100_m4}, and \ref{fig:simulations_n200_m5} provide corresponding simulation results for estimators of the deciles of $F$. The presentation is constructed around a QQ-plot of the standard normal, pictured as the black dashed-dotted line in each plot. Along the QQ-plot, the average (over the Monte Carlo replications) of the naive estimator (red), Empirical Bayes (purple), and the  bias-corrected quantiles (blue) are shown by $*$ symbols. Again, the left plots deal with the analytical correction and the right plots show results for the split-sample approach. Confidence intervals around the corrected estimators (in blue,-o) are also again provided. Like the naive estimator, the Empirical Bayes estimators reported are the appropriate order statistics of $\vartheta_1,\ldots,\vartheta_n$, after shrinkage has been applied to each. 
Visual inspection reveals that the results are in line with those obtained for the distribution function. As the sample size grows, only $\check{q}$ successfully adjusts for bias arising from estimation noise in $\vartheta_1,\ldots,\vartheta_n$. Here, the split-sample correction is slightly more effective than our analytical approach. 

\subsection{Empirical illustration}
We use quarterly panel data on a set of 48 consumer price index items in 52 US cities. The data span the period 1990--2007, yielding 72 time series observations. They were used by \cite{ParsleyWei1996}, \cite{CruciniShintaniTsuruga2015}, and \cite{OkuiYanagi2016,OkuiYanagi2018} to investigate the cross-sectional heterogeneity in deviations from the law of one price. Let $p_{cit}$ be the price of item $i$ in city $c$ at time $t$ and define the random variable
$$
x_{cit} = \log \left( \frac{p_{cit}}{p_{1it}} \right) = \log (p_{cit}) - \log (p_{1it})
$$
for all $(52-1)\times 48=2448$ city/item combinations apart from the reference city (which here is Albuquerque, New Mexico). For each city/item combination we estimate the mean, standard deviation and first-order autocorrelation of $x_{cit}$ non-parametrically from the time dimension of our panel. Our interest lies in the distribution functions of their population counterparts. We estimate these three distributions by the empirical distributions of the cross-sectional estimates, and then correct for plug-in bias via the split-sample jackknife procedure. Our results complement the analysis of \citet[Figure 1]{OkuiYanagi2018}, which gives corresponding estimates of the associated density functions.

\begin{figure}
\centering
\caption{Deviations from the law of one price}
\includegraphics[width=.97\textwidth]{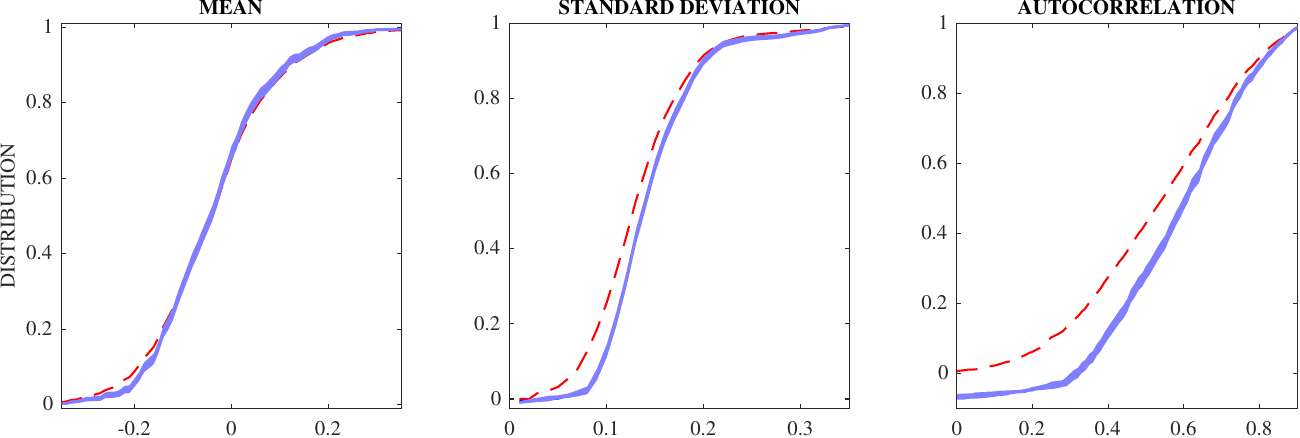}
\vspace{-.25cm} \floatfoot{ \footnotesize 
Figure notes.
\footnotesize The empirical distribution functions of the means (left) standard deviations (middle) and autocorrelations (right) of the time series of $x_{cit} = \log (p_{cit}) - \log (p_{1it})$ for all city/item combinations (dashed red line) along with $95\%$ confidence bands constructed from the split-sample jackknife estimator of each of these distributions (shaded blue region).
}
\label{fig:LOP}
\end{figure}

The results are collected in Figure \ref{fig:LOP}. The plots contain the empirical distribution functions (dashed red line) together with $95\%$ confidence bands based on the split-sample jackknife (shaded blue region). The correction for regression to the mean to the empirical distribution is clearly visible for the mean (left plot). It is also statistically significant, with the tails of the empirical distribution falling out of the confidence region. The sample standard deviation and autocorrelation obtained from the time series are biased estimators and so the empirical distribution function for these parameters (middle and right plot, respectively) suffer from an additional bias that is of the same order of magnitude as is the bias due to estimation noise (see the discussion on Footnote 4). The split-sample jackknife corrects for both these sources of bias automatically. Here, the bias adjustment leads to a pronounced shift of the empirical distribution; the corrected distribution functions all but stochastically dominate the naive plug-in estimators. The differences between the corrected and uncorrected functions are quantitatively large and, given the small standard error, they are also statistically significant.

\section{Conclusions}
In this paper, we have considered inference on the distribution of latent variables from noisy measurements. In an asymptotic embedding where the variance of the noise shrinks with the sample size, we have derived the leading bias in the empirical distribution function of the noisy measurements and suggested both an analytical and a jackknife correction. They provide a simple and numerically stable (approximate) solution to a generalized deconvolution problem that, in addition, yields valid inference procedures. The split-sample jackknife is particularly straightforward to implement and we recommend its use whenever possible.

\appendix

\renewcommand{\theequation}{A.\arabic{equation}} 
\setcounter{equation}{0}
\renewcommand{\thelemma}{A.\arabic{lemma}} 
\setcounter{lemma}{0}
\section{Appendix} 

\noindent
\textbf{Notational convention:} we let $\nabla_p^q\varphi$ denote the $q$th derivative of $\varphi$ with respect to its $p$th argument. We omit the subscript for univariate $\varphi$.

\subsection{Proof of Proposition \ref{prop:cdf}}

The following known result is useful to  prove Proposition~\ref{prop:cdf}.

\begin{lemma}[\citealt{KomlosMajorTusnady1975}] \label{lemma:KMTapprox}
Let $\mathbb{G}_n$ denote the empirical cumulative distribution of an i.i.d.~sample of size n from a uniform distribution on [0,1]. Let $\mathbb{B}_n$ denote a sequence of Brownian bridges. 
Then	
	\begin{align*}
	        \sup_{u \in [0,1]} \left|   
	        \sqrt{n}  \left( \mathbb{G}_n(u)  - u \right)- \mathbb{B}_n(u)   \right| 
	        = O_p(\log(n)/\sqrt{n} ) .
	\end{align*}
\end{lemma}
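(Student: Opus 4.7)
The plan is to execute the classical ``Hungarian'' construction of Koml\'os, Major, and Tusn\'ady, which couples the uniform empirical distribution function and a Brownian bridge on a common probability space via dyadic refinement. The output is a joint law for $(\mathbb{G}_n,\mathbb{B}_n)$ on a single probability space for which the stated sup-norm bound holds.

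First, I would set up the dyadic skeleton. Let $J:=\lceil \log_2 n \rceil$ and consider the grid $k2^{-j}$ for $k=0,\ldots,2^j$, $j=0,\ldots,J$. Writing $N_{j,k} := n\{\mathbb{G}_n(k 2^{-j}) - \mathbb{G}_n((k-1)2^{-j})\}$ for the empirical count on a dyadic cell, the conditional distribution of the left-half count at level $j+1$ given the total at level $j$ is $\mathrm{Binomial}(N_{j,k},1/2)$. The Brownian bridge satisfies the analogous recursion: given its values at the endpoints of a dyadic interval, the midpoint is Gaussian with a known conditional mean and variance. Matching the two recursions node by node yields the candidate coupling.

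Second, the workhorse is the sharp binomial--normal quantile coupling: for each positive integer $N$ one can couple $X\sim\mathrm{Binomial}(N,1/2)$ and a Gaussian $Z$ with mean $N/2$ and variance $N/4$ on a common space so that
$$
\lvert X - Z \rvert \leq c_1 + c_2 \, \frac{(Z-N/2)^2}{\sqrt{N}},
$$
with absolute constants $c_1,c_2$. This estimate is proved by comparing the two cumulative distribution functions via Stirling's formula with an explicit remainder, delivering an Edgeworth-type expansion. Applied at every node of the dyadic tree, summed across the $J=O(\log n)$ levels, and combined with $\max_{j,k}\lvert Z_{j,k}-N_{j,k}/2\rvert = O_p(\sqrt{N_{j,k}\log n})$ via a union bound over Gaussian maxima, the aggregate coupling error at all dyadic points is $O_p(\log n/\sqrt{n})$.

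Third, I would pass from the dyadic nodes to a uniform statement on $[0,1]$. On any interval of length $2^{-J}\asymp n^{-1}$, the centered empirical process has jumps of order $n^{-1/2}$ while the bridge has modulus of continuity of order $\sqrt{\log n/n}$ by L\'evy's theorem, both of which are absorbed into the target rate $O_p(\log n/\sqrt n)$. The main obstacle, and the source of the unavoidable $\log n$ factor, is the binomial--normal quantile coupling summed over dyadic scales; the remaining steps are bookkeeping and standard Gaussian maximal inequalities.
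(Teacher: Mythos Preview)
Your sketch outlines the original Hungarian construction correctly in spirit, but you should note that the paper does not prove this lemma at all: it is stated as a known result and simply cited to \cite{KomlosMajorTusnady1975}. The lemma is imported as a black box and then applied in the proofs of Propositions~\ref{prop:cdf} and~\ref{prop:Qbias}. So there is nothing to compare against; your outline goes well beyond what the paper does, and for the purposes of this paper a one-line citation suffices.

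That said, if you intend to include a proof sketch for completeness, your outline is broadly right but glosses over the most delicate point. The quantile-coupling inequality you quote,
\[
\lvert X - Z \rvert \leq c_1 + c_2 \, \frac{(Z-N/2)^2}{\sqrt{N}},
\]
is not quite the form that makes the dyadic sum telescope to $O_p(\log n)$; one needs the refined Tusn\'ady inequality $\lvert X - Z\rvert \leq C + \lvert Z - N/2\rvert^2/N$ (or its conditional version with the correct variance scaling), and controlling the accumulated error requires tracking how the conditional variances at each dyadic level depend on the realized counts at the previous level, not just on $n$. Your union-bound step with $\sqrt{N_{j,k}\log n}$ plugged into a $1/\sqrt{N}$ denominator would give $O_p(\log^2 n/\sqrt{n})$ rather than $O_p(\log n/\sqrt{n})$; getting the sharp single-log rate is exactly where the KMT argument is subtle. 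For this paper's purposes, however, even the weaker rate would be more than enough, since only $o_p(1)$ is ever used downstream.
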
	

\bigskip

\begin{proof}[\bf Proof of Proposition~\ref{prop:cdf}]

We begin with the bias calculation. 
Suppose, first, that Part A of Assumption~\ref{ass:cdf} holds. 
Then $(\theta_i,\sigma_i)$ have a joint density, $h(\theta_i,\sigma_i)$, say. We will denote the marginal density of $\sigma_i$ by $h(\sigma_i)$  and the conditional density of $\theta_i$ given $\sigma_i$ by $h(\theta_i \vert \sigma_i)$.
For any real number $\delta$ let 
$$
G(\theta,\delta) 
: = 
E(\mathrm{1}\lbrace \theta_i + \delta \sigma_i \leq \theta \rbrace)
=
\textstyle{\int}_{\underline{\sigma}}^{\overline{\sigma}} \int_{-\infty}^{\theta-\delta\sigma} h(\vartheta,\sigma) \, d\vartheta \, d\sigma .
$$
Note that $G(\theta,0) = F(\theta)$ and that
\begin{align}
E(\hat{F}(\theta)) 
= 
E(\mathrm{1}\lbrace \vartheta_i \leq \theta \rbrace)
=
E\left(\mathrm{1}\left\lbrace \theta_i + \frac{\varepsilon_ i}{\sqrt{m}} \sigma_i\leq \theta \right\rbrace\right)
=
E\left(G(\theta,\varepsilon_i/\sqrt{m}) \right).
   \label{EhatF}
\end{align}
Assumption~\ref{ass:cdf} implies that $G$ is smooth and differentiable in its second argument.
By definition of the function $e(\sigma_i)$,
\begin{align}
\sup_\theta\sup_\delta \lvert \nabla_2^3 G(\theta,\delta) \rvert =
\sup_\theta\sup_\delta
\left\lvert 
\textstyle{\int}_{\underline{\sigma}}^{\overline{\sigma}} \sigma^3\,  \nabla_1^2 h(\theta-\delta \sigma\vert \sigma) \, h(\sigma) \, d\sigma
\right\rvert
\leq
\textstyle{\int}_{\underline{\sigma}}^{\overline{\sigma}} \sigma^3\,  e(\sigma) \, h(\sigma) \, d\sigma ,
   \label{BoundedDerivativeG}
\end{align}
which equals $E(\sigma_i^3 e(\sigma_i))$ and is finite by assumption. 
Therefore, by \eqref{EhatF} and
a third-order expansion of $G(\theta,\varepsilon_i/\sqrt{m})$ in its second argument around zero we find that
$$
E(\hat{F}(\theta)) 
=
F(\theta)
+
\frac{1}{2}
\frac{\nabla_2^2 G(\theta,0)}{m} \,
+
\frac{1}{6}
\frac{E(\varepsilon_i^3 \, \nabla_2^3 G(\theta,\varepsilon_i^*/\sqrt{m}))}{m^{3/2}}
=
\frac{1}{2}
\frac{\nabla_2^2 G(\theta,0)}{m} + O(m^{-3/2}) ,
$$
where $\varepsilon_i^*$ is some value between zero and $\varepsilon_i$,
and where, in addition to \eqref{BoundedDerivativeG}, 
we have used that $E(\varepsilon_i )= 0$ and $E(\varepsilon_i^2)= 1$ by construction and that $E(\lvert \varepsilon_i \rvert^3)<\infty$ by assumption. 
By direct calculation,
$$
\nabla_2^2 G(\theta,0) =  2 \, b_F(\theta).
$$
Therefore,
$$
E(\hat{F}(\theta)) = F(\theta) + \frac{b_F(\theta)}{m} + O(m^{-3/2}),
$$ 
as claimed.

Suppose, next, that Part B of Assumption~\ref{ass:cdf} holds. 
Then we have a deterministic relationship between $\theta_i$ and $\sigma_i$. We may define $G(\theta,\delta)$ as above but have to take care when Taylor expanding in $\delta$, as the function may be non-continuous. A non-continuity occurs whenever the number of solutions $t$ (on the real line) to the equation $t+\delta \sigma(t) = \theta$ changes. 
However,  at $\delta=0$ the only solution to this equation is $t=\theta$,
and because we assume that the function $\sigma(\theta)$ has uniformly bounded derivative $\sigma^\prime$, 
there  always exists $\eta>0$ such that for all $\delta \in (-\eta,\eta)$
and all real  $\theta$ the equation $t+\delta \sigma(t) = \theta$  has a unique solution in $t$ on the real line.
We denote this solution by $t^*(\theta,\delta)$, that is, we have $t^*(\theta,\delta)+\delta \sigma(t^*(\theta,\delta)) = \theta$.
Using this we find that
for $\delta \in (-\eta,\eta)$ we have
\begin{align*}
     G(\theta,\delta) &= F( t^*(\theta,\delta) ),
     &
     \nabla_2^1{t^*(\theta,\delta)}
     &= -  \frac{ \sigma(t^*(\theta,\delta)) }  {1 + \delta \, \sigma'(t^*(\theta,\delta))} ,
\end{align*}
where the last equation is obtained by taking derivatives of $t^*(\theta,\delta)+\delta \sigma(t^*(\theta,\delta)) = \theta$ with respect to $\delta$ and then solving for the derivative. Because we have that $t^*(\theta,0) = \theta$ we then find
\begin{align*}
       G(\theta,0)  
     &=   F(\theta) ,
   &  
      \nabla_2^1{ G(\theta,0)}
     &=  - \sigma(\theta) f(\theta),
     &  
        \nabla_2^2{ G(\theta,0)}
       &=   2 b_F(\theta).
\end{align*}
Differentiating further we see that $\nabla_2^3 G(\theta,0)$, and $\nabla_2^4 G(\theta,0)$ are functions of the derivatives of $f$ and $\sigma$ up to third order. Our assumption that these derivatives are uniformly bounded implies that
\begin{equation} \label{DerivativesH3} 
\sup_\theta  {\sup_{\delta\in[-\eta, \eta]} } \left\lvert \nabla_2^4 G(\theta,\delta) \right\rvert <\infty.
\end{equation}
The only obstacle that now prevents us from proceeding with an expansion as we did under Assumption~\ref{ass:cdf}.A is that the bound \eqref{DerivativesH3} is restricted to a neighborhood around zero. To complete the derivation of the bias we argue that the restriction that $\delta \in (-\eta,\eta)$ relaxes sufficiently fast as $m$ grows.
We do so as follows. Note, first, that, by Markov's inequality, 
$$
P( \lvert \varepsilon_i \rvert > \eta \sqrt{m}) \leq m^{-2} \frac{E(\varepsilon_i^4)}{\eta^4} = O(m^{-2}).
$$
Then
\begin{equation*}
\begin{split}
E(\hat{F}(\theta)) 
& =	
E\left( G(\theta,\varepsilon/\sqrt{m}) \right)
\\
& = 
E\left(\lbrace  \lvert \varepsilon_i \rvert \leq \eta \sqrt{m} \rbrace \, G(\theta,\varepsilon/\sqrt{m}) \right)
+
E\left(\lbrace  \lvert \varepsilon_i \rvert > \eta \sqrt{m} \rbrace \, G(\theta,\varepsilon/\sqrt{m}) \right)
\\
& = 
E\left(\lbrace  \lvert \varepsilon_i \rvert \leq \eta \sqrt{m} \rbrace \, G(\theta,\varepsilon/\sqrt{m}) \right)
+ O(m^{-2}),
\end{split}
\end{equation*}		
uniformly in $\theta$, because,
$$
\sup_\theta E( \lbrace \lvert \varepsilon_i\rvert >  \eta \, \sqrt{m} \rbrace \,  G(\theta,  {\varepsilon_i}/{\sqrt{m}})  )
\leq 
P(  \lvert\varepsilon_i \rvert >  \eta \, \sqrt{m}   )  =  O(m^{-2}),
$$
noting that $\sup_{\theta} \sup_\delta  G(\theta, \delta)  \leq 1$ by definition of the function $G$. Next, a Taylor expansion of $G$ around $\delta = 0 $ gives
$$
E(\hat{F}(\theta))
=
E(G(\theta,\varepsilon_i/\sqrt{m}))
=
F(\theta)
+
\frac{1}{2} \frac{\nabla_2^2 G(\theta,0) }{m}
+
\frac{1}{6} \frac{\nabla_2^3 G(\theta,0)}{m^{3/2}} + R(\theta) + O(m^{-2}),
$$
where we have used that $F(\theta)=G(\theta,0)$, that $E(\varepsilon_i)=0$ and that $E(\varepsilon_i^2)=1$, and have introduced the notational shorthand
$$
R(\theta):= R_2(\theta) - R_1(\theta)
$$
for
\begin{equation*}
\begin{split}	
R_1(\theta)
& :=
\
P(\lvert \varepsilon_i \rvert > \eta \, \sqrt{m}) \, F(\theta)
\\
& + \phantom{\frac{1}{2}}
E(\lbrace \lvert \varepsilon_i \rvert > \eta \, \sqrt{m} \rbrace \, \varepsilon_i^{} ) \, \frac{\nabla_2^1 G(\theta,0)}{\sqrt{m}}
\\
& + \frac{1}{2}
E(\lbrace \lvert \varepsilon_i \rvert > \eta \, \sqrt{m} \rbrace \, \varepsilon_i^2) \, \frac{\nabla_2^2 G(\theta,0)}{m}
\\ & +
\frac{1}{6}
E(\lbrace \lvert \varepsilon_i \rvert > \eta \, \sqrt{m} \rbrace \, \varepsilon_i^3) \, \frac{\nabla_3^2 G(\theta,0)}{m^{3/2}}
\end{split}
\end{equation*}
and
$$
R_2(\theta):=
\frac{1}{24} \frac{E(\lbrace \lvert \varepsilon_i \rvert \leq \eta \, \sqrt{m} \rbrace \, \varepsilon_i^4 \, \nabla_2^4 G(\theta,\varepsilon_i^*/\sqrt{m}))}{m^2};
$$
here, $\varepsilon_i^*$ lies in between zero and $\varepsilon_i$. To validate our bias expression it remains only to establish that $\sup_\theta \lvert R(\theta) \rvert = O(m^{-3/2})$. To do so we show that 
$
\sup_\theta \lvert R_1(\theta) \rvert = O(m^{-2}),
$
and that
$
\sup_\theta \lvert R_2(\theta) \rvert = O(m^{-2}),
$
in turn.
By H\"older's inequality,
$$
\lvert E(\lbrace \lvert \varepsilon_i \rvert > \eta \, \sqrt{m} \rbrace \, \varepsilon_i^{})
\rvert
\leq 
E(\lbrace \lvert \varepsilon_i \rvert > \eta \, \sqrt{m} \rbrace)^{\nicefrac{3}{4}}
\,
E(\varepsilon_i^4)^{\nicefrac{1}{4}}
=
O(P(\lvert \varepsilon_i\rvert > \eta \, \sqrt{m})^{\nicefrac{3}{4}}) 
=
O(m^{-3/2}).
$$
In the same way,
$$
\lvert E(\lbrace \lvert \varepsilon_i \rvert > \eta \, \sqrt{m} \rbrace \, \varepsilon_i^{2})
=
O(m^{-1}),
\qquad
\lvert E(\lbrace \lvert \varepsilon_i \rvert > \eta \, \sqrt{m} \rbrace \, \varepsilon_i^{3})
=
O(m^{-1/2}),
$$
follow. Consequently,
$$
\sup_\theta \lvert R_1(\theta) \rvert
=
O(m^{-2}) \, 
\sup_\theta (1+\nabla_2^1 G(\theta,0)+\nabla_2^2 G(\theta,0)+\nabla_2^3 G(\theta,0))
=
O(m^{-2}),
$$
using that all relevant derivatives on the right-hand side are bounded. Next, noting that, as $\lvert \varepsilon_i^* \rvert\leq \lvert \varepsilon_i \rvert $, the event $\lvert \varepsilon_i \rvert/ \sqrt{m} \leq \eta  $ implies that $\lvert \varepsilon_i^* \rvert/ \sqrt{m} \leq \eta$, we have
\begin{equation*}
\begin{split}
\sup_\theta \lvert R_2(\theta) \rvert 
& = 
\frac{1}{24} \frac{\sup_\theta E(\lbrace \lvert \varepsilon_i \rvert/ \sqrt{m} \leq \eta  \rbrace \, \varepsilon_i^4 \, \nabla_2^4 G(\theta,\varepsilon_i^*/\sqrt{m}))}{m^2}	
\\
& \leq
\frac{1}{24} 
\frac{\sup_\theta \sup_{\delta\in [-\eta,\eta]} \lvert \nabla_2^4 G(\theta,\delta) \rvert \, E(\varepsilon_i^4)}{m^2}
\\
&= O(m^{-2}),
\end{split}
\end{equation*}		
because of \eqref{DerivativesH3}. Therefore,  $\sup_\theta\lvert R(\theta) \rvert = O(m^{-2})$, and so
$$
E(\hat{F}(\theta)) = F(\theta) + \frac{b_F(\theta)}{m} + O(m^{-3/2}),
$$
as before.

Now turning to the result on the covariance, note that
\begin{equation*}
\begin{split}	
\mathrm{cov}(\hat{F}(\theta_1),\hat{F}(\theta_2))
 =
\frac{E(\hat{F}(\theta_1 \wedge \theta_2)) - E(\hat{F}(\theta_1))  \, E(\hat{F}(\theta_2))}{n} 
\end{split}
\end{equation*}
depends only on $E(\hat{F}(\theta))$ which, up to $O(m^{-3/2})$ and uniformly in $\theta$, has been calculated above. Moreover, 
\begin{equation*}
\begin{split}	
\mathrm{cov}(\hat{F}(\theta_1),\hat{F}(\theta_2))
& = 
\frac{\left(F(\theta_1\wedge \theta_2) + O(m^{-1})\right) -
\left(F(\theta_1) + O(m^{-1})\right)
\left(F(\theta_2) + O(m^{-1})\right)}{n}
\\
& = 
\frac{F(\theta_1 \wedge \theta_2) - F(\theta_1) F(\theta_2)}{n} + O(n^{-1} m^{-1})
\\
& = 
\frac{\sigma_F(\theta_1,\theta_2)}{n} + O(n^{-1} m^{-1}),
\end{split}
\end{equation*}
as stated in the proposition.

To complete the proof it remains only to verify the limit distribution of the scaled empirical distribution function.
Let $F_{m}(\theta):=E(\mathrm{1}\lbrace\vartheta_i \leq \theta \rbrace)$, the distribution function of $\vartheta_i$. Our assumptions imply that $F_m$ is continuous and that it has no mass points. With $u_i:= F_m(\vartheta_i)$, we therefore have that  $u_i$ is i.i.d.~uniformly distributed on $[0,1]$ by the probability integral transform. An application of Lemma \ref{lemma:KMTapprox} with $u= F_m(\theta)$ and exploiting monotonicity of distribution functions then gives
$$
\sup_\theta \left\lvert \sqrt{n} (\hat{F}(\theta)-F_m(\theta)) - \mathbb{B}_n(F_m(\theta))  \right\rvert
=
O_p(\log(n)/\sqrt{n}).
$$
We have already shown that, uniformly in $\theta$, 
$$
F_m(\theta) = F(\theta) + \frac{b_F(\theta)}{m} + O(m^{-3/2}).
$$
Therefore, using that $n/m^3\rightarrow 0$ if $n/m^2\rightarrow c\in [0,+\infty)$ as $n,m\rightarrow\infty$,
$$
\sqrt{n} (\hat{F}(\theta)-F_m(\theta))
=
\sqrt{n} \left(\hat{F}(\theta)-F(\theta) - \frac{b_F(\theta)}{m}\right) + o(1),
$$
holds uniformly in $\theta$. Furthermore, our bias calculation implies that $F_m(\theta)-F(\theta)$ converges to zero uniformly in $\theta$ as $m\rightarrow 0$, so that applying L\'evy's modulus-of-continuity theorem, that is,
      \begin{align*}
          \lim_{\epsilon \rightarrow 0} \;
                \sup_{t \in [0,1-\epsilon]}  
                 \frac{ \left|  \mathbb{B}_n( t  )  -  \mathbb{B}_n( t+ \epsilon  ) \right| }
       {  \sqrt{\epsilon \log(1/\epsilon)} }
              = O(1) , \qquad \epsilon > 0,
      \end{align*}
to our problem yields 
$
\sup_\theta \lvert \mathbb{B}_n(F_m(\theta)) - \mathbb{B}_n(F(\theta)) \rvert \overset{p}{\rightarrow} 0 
$
as $m\rightarrow \infty$. We thus have that $\mathbb{B}_n(F_m(\theta))\rightsquigarrow \mathbb{B}_n(F(\theta))$. Putting everything together and noting that, by definition, $\mathbb{B}_n(F(\theta))= \mathbb{G}_F(\theta)$, we obtain
$$
\sup_\theta \left\lvert 
\sqrt{n} \left(\hat{F}(\theta)-F(\theta)-\frac{b_F(\theta)}{m}\right) 
- 
\mathbb{G}_F(\theta) 
\right\rvert
=
o_p(1),
$$
which completes the proof of the proposition.
\end{proof}

\subsection*{Proof of Proposition~\ref{prop:quantile}}

\begin{lemma}     \label{lemma:helpCDFpreliminaries}
Let Assumptions \ref{ass:basicMOM} and \ref{ass:cdf} hold. Let $f_m$ denote the density function of $\vartheta_i$. Then,

\medskip\noindent (i) \
$\sup_\theta \lvert f_m(\theta)- f(\theta) \rvert = O(m^{-1})$,

\medskip\noindent (ii) \
$\sup_\theta \lvert \nabla^1 f_m(\theta)- \nabla^1 f(\theta) \rvert = O(m^{-1})$,

\medskip\noindent (iii) \
$\sup_\theta \lvert \nabla^2 f_m(\theta) - \nabla^2 f(\theta) \rvert = O(1)$,

\medskip\noindent (iv) \
$\sup_\theta \lvert \nabla^3 f_m(\theta) - \nabla^3 f(\theta) \rvert = O(1)$.
\end{lemma}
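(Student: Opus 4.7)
The plan is to express the density $f_m$ of $\vartheta_i = \theta_i + \sigma_i\varepsilon_i/\sqrt m$ as a small-noise averaging of the conditional density of $\theta_i$ given $\sigma_i$, and to deduce the four bounds by Taylor expansion in powers of $1/\sqrt m$. Under Assumption~\ref{ass:cdf}.A the joint density $h(\theta,\sigma) = h(\theta\mid\sigma)\,h(\sigma)$ exists; writing $g_\varepsilon$ for the density of $\varepsilon_i$ and changing variables gives
$$
f_m(\theta) \,=\, \int\!\!\int g_\varepsilon(u)\, h(\theta - \sigma u/\sqrt m\mid\sigma)\, h(\sigma)\, du\, d\sigma ,
$$
and the domination $\lvert \nabla_1^3 h(\cdot\mid\sigma)\rvert \leq e(\sigma)$ with $E(e(\sigma_i))<\infty$ together with the boundedness of $\sigma_i$ lets me bring $\nabla^k$ inside both integrals, so an analogous representation holds for $\nabla^k f_m(\theta)$ with $\nabla_1^k h$ in place of $h$ for $k=1,2,3$.

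For parts (iii) and (iv), I subtract the corresponding derivatives of $f$ and use the difference $\nabla_1^k h(\theta - \sigma u/\sqrt m\mid\sigma) - \nabla_1^k h(\theta\mid\sigma)$ directly inside the integrals. For $k=3$ the triangle inequality and the domination $|\nabla_1^3 h|\le e(\sigma)$ immediately give $\sup_\theta|\nabla^3 f_m - \nabla^3 f|\le 2\,E(e(\sigma_i))<\infty$. For $k=2$ the fundamental theorem of calculus applied to $\nabla_1^3 h$ yields $|\nabla_1^2 h(\theta - \sigma u/\sqrt m\mid\sigma) - \nabla_1^2 h(\theta\mid\sigma)| \le (\overline{\sigma}|u|/\sqrt m)\,e(\sigma)$, whence $\sup_\theta|\nabla^2 f_m - \nabla^2 f| = O(m^{-1/2}) = O(1)$.

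For parts (i) and (ii) I would expand $\nabla_1^k h(\theta - \sigma u/\sqrt m\mid\sigma)$, $k=0,1$, in its first argument to second order around $\theta$. The first-order term vanishes upon integration against $g_\varepsilon$ because $E(\varepsilon_i)=0$; the zeroth-order term reproduces $\nabla^k f(\theta)$; and the Lagrange remainder is a $1/m$-term whose magnitude is controlled using $E(\varepsilon_i^2)=1$, $\sigma_i\in[\underline{\sigma},\overline{\sigma}]$ and, after an additional application of the fundamental theorem of calculus, the domination on $|\nabla_1^3 h|$ combined with the uniform boundedness of $\nabla^k f$ (which governs the baseline contribution $\int\nabla_1^{k+2} h(\theta\mid\sigma)h(\sigma)\,d\sigma$). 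This yields the uniform $O(m^{-1})$ bounds.

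The main obstacle is Assumption~\ref{ass:cdf}.B, under which the joint density of $(\theta_i,\sigma_i)$ does not exist because $\sigma_i = \sigma(\theta_i)$. I would instead write
$$
f_m(\theta) \,=\, E\!\left(\frac{\sqrt m}{\sigma(\theta_i)}\, g_\varepsilon\!\left(\frac{\sqrt m\,(\theta - \theta_i)}{\sigma(\theta_i)}\right)\right)
$$
and replicate the truncation strategy used in the proof of Proposition~\ref{prop:cdf}, splitting according to whether $|\varepsilon_i|$ exceeds $m^\alpha$ for some $\alpha\in(0,1/2)$. On the truncation event the map $t\mapsto t + \delta\,\sigma(t)$ with $\delta = \varepsilon/\sqrt m$ is invertible with smooth inverse by the four-times differentiability and boundedness of $\sigma'$, so the same Taylor analysis carries through; the complementary event contributes tails that are negligible by Markov's inequality because $E(\varepsilon_i^4)<\infty$. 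This handles the delicate non-smooth change of variables for large noise and delivers all four bounds in the deterministic case as well.
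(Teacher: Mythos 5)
Your argument is correct and takes essentially the same route as the paper's: both represent $f_m$ as a $1/\sqrt{m}$-scale smoothing of the conditional density $h(\cdot\mid\sigma)$, Taylor expand in the noise, use $E(\varepsilon_i)=0$, $E(\varepsilon_i^2)=1$, the boundedness of $\sigma_i$, and the domination of the derivatives of $h(\cdot\mid\sigma)$ to control the remainders uniformly in $\theta$ (the paper expands the CDF difference and then differentiates in $\theta$, which is the same computation). The only substantive differences are in your favor: your FTC step gives the sharper rate $O(m^{-1/2})$ in part (iii), and you sketch the deterministic case of Assumption~\ref{ass:cdf}.B via the truncation device from the proof of Proposition~\ref{prop:cdf}, whereas the paper proves the lemma only under Assumption~\ref{ass:cdf}.A ``for brevity.''
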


\begin{proof}
From the argument in the proof of Proposition~\ref{prop:cdf} we have
$$
F_m(\theta) -  F(\theta) 
=
\frac{1}{2}
\frac{E(\varepsilon_i^2 \, H(\theta,\varepsilon_i^*/\sqrt{m}))}{m}
$$
by a second-order expansion, where $\varepsilon_i^*$ is a value between zero and $\varepsilon_i$ and we introduce the function
$$
H(\theta,\delta) : = 
\textstyle{\int}_{\underline{\sigma}}^{\overline{\sigma}} 
\sigma^2 \nabla_1^1 h(\theta-\delta \sigma \vert \sigma) \, h(\sigma) \, d\sigma,
$$
where $h(\theta_i \vert \sigma_i)$ and $h(\sigma_i)$ are the density functions of $\theta_i$ given $\sigma_i$ and of $\sigma_i$, respectively. Differentiating with respect to $\theta$ yields the first conclusion of the lemma as
$$
\sup_\theta
\lvert
f_m(\theta) - f(\theta) 
\rvert
= 
\sup_\theta
\left\lvert
\frac{1}{2}
\frac{E(\varepsilon_i^2 \,\nabla^1_1 H(\theta,{\varepsilon_i^*/\sqrt{m}}))}{m}
\right\rvert
\leq
\frac{E(\sigma_i^2)}{m} \, \frac{\sup_\theta \sup_\delta \lvert \nabla_1^1 H(\theta,\delta) \rvert}{2}
=
O(m^{-1}),
$$
which follows from the inequality 
$$
\sup_\theta \sup_\delta \lvert \nabla_1^1 H(\theta,\delta) \rvert
= 
\sup_\theta \sup_\delta
\left\lvert
\textstyle{\int}_{\underline{\sigma}}^{\overline{\sigma}} \sigma^{3} \, \nabla_1^{2}\, h(\theta - \delta \sigma \vert \sigma) \, h(\sigma) \, d\sigma
\right\rvert
\leq
\textstyle{\int}_{\underline{\sigma}}^{\overline{\sigma}} \sigma^{3} \, e(\sigma) \, h(\sigma) \, d\sigma < \infty
$$
and the definition of the function $e(\sigma)$ in Assumption \ref{ass:cdf}. The second conclusion of the lemma follows in the same manner, differentiating once more. Finally, the third and fourth conclusion are obtained similarly. The point of departure is now the following identity, which is derived in the proof of Proposition~\ref{prop:cdf},
$$
F_m(\theta)  = E\left(G(\theta,{\varepsilon^*_i/\sqrt{m}})\right)
$$
where 
$$
G(\theta,\delta) : = 
\textstyle{\int}_{\underline{\sigma}}^{\overline{\sigma}} \int_{-\infty}^{\theta-\delta \sigma} h(\vartheta\vert \sigma) \, h(\sigma) \, d\vartheta \, d\sigma .
$$
Repeated differentiation shows that
\begin{equation*}
\begin{split}	
\sup_\theta \sup_\delta\lvert \nabla^3_1 G(\theta,\delta) \rvert
& =
\sup_\theta \sup_\delta\lvert
\textstyle{\int}_{\underline{\sigma}}^{\overline{\sigma}}\nabla_1^2 h(\theta-\delta\sigma\vert \sigma) \, h(\sigma) \, d\sigma
\rvert
\leq
\lvert
\textstyle{\int}_{\underline{\sigma}}^{\overline{\sigma}} e(\sigma) \, h(\sigma) \, d\sigma
\rvert < \infty,
\\
\sup_\theta \sup_\delta\lvert \nabla_1^4 G(\theta,\delta) \rvert
& =
\sup_\theta \sup_\delta\lvert
\textstyle{\int}_{\underline{\sigma}}^{\overline{\sigma}} \nabla_1^3 h(\theta-\delta\sigma\vert \sigma) \, h(\sigma) \, d\sigma
\rvert
\leq
\lvert
\textstyle{\int}_{\underline{\sigma}}^{\overline{\sigma}} e(\sigma) \, h(\sigma) \, d\sigma
\rvert < \infty,
\end{split}
\end{equation*}
and so
$\sup_\theta \lvert \nabla^3 F_m(\theta) \rvert = O(1)$ and $\sup_\theta \lvert \nabla^4 F_m(\theta) \rvert = O(1)$ follow. Furthermore, 
\begin{equation*}
\begin{split}
\sup_\theta \lvert \nabla^2 f_m(\theta) - \nabla^2 f(\theta) \rvert
& \leq
\sup_\theta \lvert \nabla^2 f_m(\theta) \rvert
+
\sup_\theta \lvert \nabla^2 f(\theta) \rvert
=
O(1),
\\
\sup_\theta \lvert \nabla^3 f_m(\theta) - \nabla^3 f(\theta) \rvert
& \leq
\sup_\theta \lvert \nabla^3 f_m(\theta) \rvert
+
\sup_\theta \lvert \nabla^3 f(\theta) \rvert
=
O(1) ,
\end{split}
\end{equation*}
follows because $f$ has uniformly bounded derivatives up to third order by assumption. This completes the proof.
\end{proof}

\begin{proof}[\bf Proof of Proposition~\ref{prop:quantile}]
The $\vartheta_i$ are i.i.d.~draws from the distribution $F_m$ which according to Lemma~\ref{lemma:helpCDFpreliminaries} has non-degenerate density  $f_m$, that is, the $\vartheta_i$ are continuously distributed. Thus, 
$$
u_{(k)} := F_m( \vartheta_{(k)})
$$ 
is the $k$th order statistic of a uniform sample. We set $k=\lceil \tau n \rceil$ for the rest of the proof. Then $\hat{q}(\tau) = \vartheta_{(k)}$. Since $k/n \rightarrow \tau$ by construction, it is well-known that 
   \begin{align}
   \sqrt{n}(u_{(k)} - \tau) \overset{d}{\rightarrow} { N}(0, \tau (1-\tau)) .
       \label{LimitBeta}
   \end{align}
Let $q_m(\tau):=F_m^{-1}(\tau)$, the $\tau$th-quantile of $F_m$. By expanding the function $F_m^{-1}$ around $\tau$ we find that
$$
\hat{q}(\tau) = 
F_m^{-1}( u_{(k)}  )
       =q_m(\tau) + \frac{u_{(k)} - \tau }{f_m(q_m(\tau))}
          + r_{(k)}
$$
for remainder term
$$
r_{(k)} :=- \frac{f_m'(\xi_{(k)})} {f_m(\xi_{(k)})^3} \left(u_{(k)} - \tau \right)^2,
$$
where $\xi_{(k)}$ is a value between  $F_m^{-1}(\tau)$ and $F_m^{-1}( u_{(k)} )$. From \eqref{LimitBeta} we have $u_{(k)} - \tau = O_P(n^{-1/2})$. This implies that $ \xi_{(k)}  \overset{p}{\rightarrow} \tau$. Using Lemma~\ref{lemma:helpCDFpreliminaries} we may conclude that $f_m(\xi_{(k)}) \overset{p}{\rightarrow} f_m(\tau)  \rightarrow f(\tau) >0$, and, therefore, that $ r_{(k)}  = O_p(n^{-1})$.
   We thus have
    \begin{align*}
        \hat{q}(\tau)  
        &= q_m(\tau) + \frac{ u_{(k)} - \tau }{f_m(q_m(\tau))} + O_p(n^{-1}).
    \end{align*}
    Again using Lemma~\ref{lemma:helpCDFpreliminaries} and
    our assumption that $f(\theta)>0$ in a neighborhood of $q(\tau) = F^{-1}(\tau)$
    we have ${f_m(q_m(\tau))^{-1}} = {f(q(\tau))^{-1}} + O(m^{-1})$,
    and therefore 
    \begin{align}
        \hat{q}(\tau)
        &= q_m(\tau) +  \frac{u_{(k)} - \tau }{f(q(\tau))}   + O_p(n^{-1} + n^{-1/2} m^{-1}).
        \label{Qexpansion1}
    \end{align}    
    From Proposition~\ref{prop:cdf} we know $F_m(\theta) = E(\hat{F}(\theta))
         =  F(\theta) +   b_F(\theta)/m  + O(m^{-3/2})$,
     and therefore
   \begin{align}
         q_m(\tau) =   q(\tau) -  \frac{b_F(q(\tau))/f(q(\tau))} {m} + O(m^{-3/2}) .
        \label{Qexpansion2}
   \end{align}      
   Combining  \eqref{LimitBeta}, \eqref{Qexpansion1}, and \eqref{Qexpansion2}
    gives the statement of the theorem.
\end{proof}

\subsection*{Proof of Proposition \ref{prop:WeakConvCorrected}}

\begin{lemma}     \label{lemma:helpCDFpreliminaries4}
Let the assumptions of Proposition \ref{prop:WeakConvCorrected} hold. Then, 
\begin{itemize}
\item[(i)]
$ \displaystyle
\sup_\theta E(\hat{b}_F(\theta)-b_F(\theta)) = O(m^{-1}) + O(h^2) ,$
\medskip
\item[(ii)]
$ \displaystyle
\sup_\theta \mathrm{var}(\hat{b}_F(\theta)) = O(n^{-1}h^3),
$
\medskip
\item[(iii)] %
$ \displaystyle
\sup_\theta \, (1+\lvert\theta \rvert^{1+\eta}) \, \lvert \nabla^1\hat{b}_F(\theta)-\nabla^1 b_F(\theta) \rvert  = O_p(h^{-(\omega+1)/\omega}) .
$	
\end{itemize}
\end{lemma}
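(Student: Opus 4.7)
\emph{Overall approach.} Under Assumption~\ref{ass:cdfcorrect}(i), $\vartheta_i \mid (\theta_i,\sigma_i^2) \sim N(\theta_i,\sigma_i^2/m)$ is Gaussian, so the noise can be integrated out in closed form via the Gaussian-Gaussian convolution identity. This reduces each claim to a standard kernel-estimator computation in which the raw bandwidth $h$ is replaced by an ``effective bandwidth'' $H_i := (h^2 + \sigma_i^2/m)^{1/2}$. The plan is to prove (i) and (ii) by explicit conditional calculations followed by Taylor expansion, and (iii) by decomposing the target into a deterministic bias part (handled as in (i)) plus a centered stochastic part, which I control via Rosenthal's inequality combined with a grid/tail argument powered by Assumption~\ref{ass:cdfcorrect}(iii).

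\emph{Parts (i) and (ii).} Using the identity $E[\phi((X-\theta)/h)] = (h/H)\,\phi((\mu-\theta)/H)$ for $X \sim N(\mu,s^2)$ with $H^2 = h^2+s^2$, differentiating once in $\mu$, and writing $\kappa'(u) = -u\phi(u)$, I would obtain the closed form
\[
E\!\left[\kappa'\!\left(\frac{\vartheta_i - \theta}{h}\right)\,\bigg|\,\theta_i,\sigma_i\right] = -\,\frac{(\theta_i - \theta)\,h^{2}}{H_i^{3}}\,\phi\!\left(\frac{\theta_i - \theta}{H_i}\right),
\]
so that $E[\hat b_F(\theta)] = \tfrac{1}{2}\,E\!\left[\sigma_i^{2}(\theta_i - \theta)\,H_i^{-3}\,\phi((\theta_i-\theta)/H_i)\right]$. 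Changing variable $u = (\theta_i-\theta)/H_i$ inside the inner expectation over $\theta_i \mid \sigma_i$, Taylor-expanding $h(\theta+H_i u \mid \sigma_i)$ to third order (with the remainder controlled by the integrable majorant $e(\sigma_i)$ from Assumption~\ref{ass:cdfcorrect}(ii)), and using $\int u\phi = \int u^3\phi = 0$ together with $\int u^2\phi = 1$, the leading term is $\tfrac{1}{2}\,E[\sigma_i^{2}\,\partial_\theta h(\theta \mid \sigma_i)] = \partial_\theta\beta(\theta) = b_F(\theta)$, and the next-order correction is of size $O(H_i^{2}) = O(h^{2}) + O(m^{-1})$ uniformly in $\theta$, which proves (i). For (ii), the i.i.d.\ assumption and Assumption~\ref{ass:basicMOM} give
\[
\mathrm{var}(\hat b_F(\theta)) \,\le\, \frac{\bar\sigma^{4}}{4\,n\,h^{4}}\,E\!\left[\kappa'\!\left(\frac{\vartheta_i-\theta}{h}\right)^{\!2}\right] \,=\, \frac{\bar\sigma^{4}}{4\,n\,h^{3}}\int \kappa'(u)^{2}\,f_m(\theta+hu)\,du,
\]
which is of the stated order uniformly in $\theta$ because $\sup_\theta f_m(\theta) = O(1)$ by Lemma~\ref{lemma:helpCDFpreliminaries}(i) and $\int \kappa'(u)^{2}\,du < \infty$ for the Gaussian kernel.

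\emph{Part (iii).} Decompose
\[
\nabla^1\hat b_F(\theta) - \nabla^1 b_F(\theta) = \bigl\{E[\nabla^1\hat b_F(\theta)] - \nabla^1 b_F(\theta)\bigr\} + S_n(\theta), \qquad S_n(\theta) := \frac{1}{n}\sum_{i=1}^{n}\bigl\{Y_i(\theta) - E Y_i(\theta)\bigr\},
\]
where $Y_i(\theta) := \sigma_i^{2}\kappa''((\vartheta_i-\theta)/h)/(2h^{3})$. The bias part is handled exactly as in (i), now with $\kappa''$ in place of $\kappa'$ and one additional derivative of $h(\cdot \mid \sigma_i)$ in the Taylor remainder, producing a uniform $O(h^{2}) + O(m^{-1})$ bound that is negligible against $h^{-(\omega+1)/\omega}$. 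For the stochastic part, Rosenthal's inequality applied to the centered i.i.d.\ sum yields $E|S_n(\theta)|^{\omega} \le C_\omega\bigl\{(n h^{5})^{-\omega/2} + n^{1-\omega}\,h^{1-3\omega}\bigr\}$, using $E[Y_i(\theta)^{2}] = O(h^{-5})$ and $E|Y_i(\theta)|^{\omega} = O(h^{1-3\omega})$ from the same change-of-variable argument as in (ii). I would then cover the bulk region $|\theta| \le \Theta_n$ by a fine grid whose spacing is set by the uniform Lipschitz bound $|Y_i(\theta)-Y_i(\theta')| \le C h^{-4}\,|\theta-\theta'|$, apply Markov's inequality at each grid point, and union-bound; the tail $|\theta| > \Theta_n$ is absorbed by Assumption~\ref{ass:cdfcorrect}(iii), whose polynomial-decay condition $\kappa > 1 + (1-\omega^{-1})^{-1}$ ensures that the weight $(1+|\theta|^{1+\eta})$ times $E|Y_i(\theta)|$ remains controlled far from the origin. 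The main obstacle is the bookkeeping that balances the grid size $\Theta_n$, the moment order $\omega$, the polynomial weight, and the tail exponent $\kappa$; this is exactly where the rate $h^{-(\omega+1)/\omega}$ is pinned down, and where the bandwidth growth conditions of Proposition~\ref{prop:WeakConvCorrected} must be invoked to ensure the claimed stochastic bound dominates.
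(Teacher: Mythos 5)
Your parts (i) and (ii) are essentially sound. For (i) you integrate out the Gaussian noise in closed form and work with the effective bandwidth $H_i=(h^2+\sigma_i^2/m)^{1/2}$, whereas the paper first passes to the marginal density $f_m$ of $\vartheta_i$, integrates by parts against $\beta_m(\theta)=E(\sigma_i^2\mid\vartheta_i=\theta)f_m(\theta)/2$, and invokes a separate lemma to show $\nabla^1\beta_m=\nabla^1\beta+O(m^{-1})$ uniformly; both routes give the same $O(m^{-1})+O(h^2)$ and yours is arguably more self-contained. Part (ii) is the same computation as the paper's; note that both your display and the correct calculation give $O(n^{-1}h^{-3})$, not the $O(n^{-1}h^{3})$ written in the lemma statement (a typo in the paper --- the downstream use in Proposition~\ref{prop:WeakConvCorrected} is $(n\,\mathrm{var}\,\hat b_F)^{1/2}=O(h^{-3/2})$, consistent with $h^{-3}$), so you should not assert that your bound "is of the stated order" without flagging the sign of the exponent.

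Part (iii) is where your approach has a genuine gap. You decompose $\nabla^1\hat b_F-\nabla^1 b_F$ into a deterministic bias plus the centered sum $S_n(\theta)$ and propose to show $\sup_\theta|S_n(\theta)|=O_p(h^{-(1+\omega^{-1})})$ via Rosenthal plus a grid. But the pointwise standard deviation of $S_n(\theta)$ is of order $(nh^5)^{-1/2}$, and the assumptions of Proposition~\ref{prop:WeakConvCorrected} impose only $h^{-1}=O(n)$ as a lower bound on $n$; in the admissible regime $n\asymp h^{-1}$ this standard deviation is of order $h^{-2}$, which already exceeds the target $h^{-(1+\omega^{-1})}$ for every $\omega>1$ --- so the claimed bound on the centered part fails even at a single $\theta$, before any chaining. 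The paper avoids this entirely by never centering: it bounds $|\nabla^1\hat b_F(\theta)|$ itself via H\"older's inequality, $|\nabla^1\hat b_F(\theta)|\le h^{-2}\,(n^{-1}\sum_i(\sigma_i^2/2)^{\omega})^{1/\omega}\,(n^{-1}\sum_i|\phi''((\vartheta_i-\theta)/h)|^{\psi})^{1/\psi}$ with $\psi=(1-\omega^{-1})^{-1}$; the second factor is an empirical average of a bounded nonnegative function with mean $O(h)$, which is controlled uniformly in $\theta$ (with the polynomial weight) by Mason's weighted empirical-process bound together with the tail decay of $f_m$ established in a separate lemma. This yields $h^{-2}\cdot O_p(h^{1/\psi})=O_p(h^{-(1+\omega^{-1})})$, and the term $\nabla^1 b_F$ is handled separately through Assumption~\ref{ass:cdfcorrect}(iii); no closeness of $\nabla^1\hat b_F$ to $\nabla^1 b_F$ is needed or proved. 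Two further loose ends in your sketch: the weighted bound requires the polynomial tail decay of $f_m$, which does not follow directly from the decay of $f$ assumed in Assumption~\ref{ass:cdfcorrect}(iii) and must be established (the paper's Lemma on $\sup_\theta(1+|\theta|^{\kappa})f_m(\theta)$); and your unweighted $O(h^2)+O(m^{-1})$ bound on the bias component does not by itself control the product with the growing weight $(1+|\theta|^{1+\eta})$.
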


\begin{lemma}     \label{lemma:helpCDFpreliminaries5}
Let Assumptions \ref{ass:basicMOM} hold and define
$$
b_i(\theta) : = -\frac{\sigma_i^2}{h^{2}}\frac{ \phi^\prime\left(\frac{\vartheta_i-\theta}{h}\right)}{2} .
$$
If $f$ is bounded, then, for any $\epsilon>0$,
$$
\sup_\theta E(\lvert b_i(\theta) - E(b_i(\theta)) \rvert^\epsilon)^{1/\epsilon}
=
O(h^{-2+\epsilon^{-1}}).
$$

\end{lemma}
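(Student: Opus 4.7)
The plan is to bound the uncentered moment $E|b_i(\theta)|^\epsilon$ directly by a change-of-variables argument, and then transfer the estimate to the centered version via Minkowski's inequality; the Gaussian form of $\phi'$ and the boundedness of the density $f_m$ of $\vartheta_i$ will drive everything.

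For the uncentered piece, I would first strip off the random coefficient using $\sigma_i^2 \leq \overline{\sigma}^2$ from Assumption~\ref{ass:basicMOM}:
\begin{equation*}
E|b_i(\theta)|^\epsilon \leq \left(\overline{\sigma}^2/2\right)^\epsilon h^{-2\epsilon}\, E\left|\phi'\left(\frac{\vartheta_i-\theta}{h}\right)\right|^\epsilon .
\end{equation*}
Since $f$ is bounded, Lemma~\ref{lemma:helpCDFpreliminaries} ensures that $f_m$ is uniformly bounded in $\theta$ and $m$. Substituting $u = (y-\theta)/h$ in the expectation,
\begin{equation*}
E\left|\phi'\left(\frac{\vartheta_i-\theta}{h}\right)\right|^\epsilon = h \int |\phi'(u)|^\epsilon f_m(\theta+hu)\, du \leq h\,\|f_m\|_\infty \int |\phi'(u)|^\epsilon du .
\end{equation*}
The remaining integral equals $(2\pi)^{-\epsilon/2} \int |u|^\epsilon e^{-\epsilon u^2/2}\, du$, which is finite for every $\epsilon > 0$ by the Gaussian decay of $\phi$. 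Combining these estimates yields $\sup_\theta E|b_i(\theta)|^\epsilon = O(h^{1-2\epsilon})$ and hence $\sup_\theta (E|b_i(\theta)|^\epsilon)^{1/\epsilon} = O(h^{-2+1/\epsilon})$, which is the target rate.

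To pass from the uncentered to the centered moment, I would apply Minkowski's inequality together with Jensen and Lyapunov. For $\epsilon \geq 1$,
\begin{equation*}
\|b_i(\theta) - E b_i(\theta)\|_{L^\epsilon} \leq \|b_i(\theta)\|_{L^\epsilon} + |E b_i(\theta)| \leq 2\,\|b_i(\theta)\|_{L^\epsilon},
\end{equation*}
since $|E b_i(\theta)| \leq E|b_i(\theta)| \leq \|b_i(\theta)\|_{L^\epsilon}$; for $\epsilon \in (0,1)$ the subadditivity $|a-b|^\epsilon \leq |a|^\epsilon + |b|^\epsilon$ delivers the analogous conclusion. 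The only place that calls for some care is the uniformity in $m$ of $\|f_m\|_\infty$, which is why Lemma~\ref{lemma:helpCDFpreliminaries} must be invoked rather than merely the boundedness of $f$; everything else reduces to a routine change of variable and standard Gaussian integrability, and I do not anticipate a serious obstacle beyond checking this uniformity cleanly.
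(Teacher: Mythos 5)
Your proof is correct and takes essentially the same route as the paper's: a factor-of-two reduction from the centered to the uncentered moment, followed by a change of variables using boundedness of $\sigma_i^2$ and of $f_m$ together with $\int_{-\infty}^{\infty}\lvert\phi^\prime(u)\rvert^\epsilon\,du<\infty$, which gives $\sup_\theta E(\lvert b_i(\theta)\rvert^\epsilon)=O(h^{1-2\epsilon})$ and hence the stated rate (the paper performs the centering via a binomial expansion that presumes integer $\epsilon$, so your Minkowski argument for $\epsilon\ge 1$ is if anything cleaner, and your explicit appeal to Lemma~\ref{lemma:helpCDFpreliminaries} for the uniform-in-$m$ boundedness of $f_m$ is more careful than the paper's bare assertion). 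The one soft spot is your claim that subadditivity handles $\epsilon\in(0,1)$: it leaves the uncancelled term $\lvert E(b_i(\theta))\rvert^\epsilon$, which is $O(1)$ (since $E(b_i(\theta))=E(\hat{b}_F(\theta))\to b_F(\theta)$) and is not $O(h^{1-2\epsilon})$ when $\epsilon<1/2$ — but the lemma is only ever invoked with $\epsilon=\omega>2$, and the paper's own proof is no more general on this point.
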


\noindent
The proof of those two lemmas is provided below, after the proof of the main text results.

\begin{proof}[\bf Proof of Proposition~\ref{prop:WeakConvCorrected}]
We first show that
\begin{align*}
     \sup_{\theta \in \mathbb{R}}
         \left|  \hat{b}_F(\theta)  -  b_F(\theta) \right|
     &=   O( m^{-1} ) + O(h^2)    + O( n^{-1/2} \, h^{-3/2-\varepsilon} )  .
\end{align*}	
The result of the proposition then follows readily.	
For a finite $\nu$, introduce the function 
$$
t(\theta) :=  {\rm sgn}(\theta) \, \frac{1 - (1+|\theta|)^{-\nu} }{\nu}.
$$
Note that $t$ maps to the finite interval $(-\nu^{-1},\nu^{-1})$ and is monotone increasing; moreover, $\nabla^1 t(\theta) = (1+|\theta|)^{-(1+\nu)}$. Now consider the reparametrization $\tau = t(\theta)$; note that $\tau$ lives in a bounded interval. From Lemma \ref{lemma:helpCDFpreliminaries4}(iii), using the chain rule of differentiation, it follows that
\begin{equation} \label{LipschitzBound}  
\sup_{\tau\in (-\nu^{-1},\nu^{-1})}	\left\lvert \nabla^1_\tau \hat{b}_F(t^{-1}(\tau)) -  \nabla^1_\tau b_F(t^{-1}(\tau)) \right\rvert = O_p(h^{-(1+\omega^{-1})}),
\end{equation}	
where we use the notation $\nabla_\tau$ to indicate derivatives with respect to $\tau$. We therefore have that $\hat{b}_F(t^{-1}(\tau)) - b_F(t^{-1}(\tau))$, as a function $\tau$, has a uniformly-bounded Lipschitz constant. Now let $I_h$ be a partition of $(-\nu,-\nu^{-1})$ with subintervals that are (approximately) of length $l_h:=h^{3-\omega^{-1}}$. Then \eqref{LipschitzBound} implies that
$$
\sup_\theta \lvert \hat{b}_F(\theta)-b_F(\theta) \rvert
=
 \sup_{\tau \in (-\nu, \nu)} \lvert \hat{b}_F(t^{-1}(\tau))-b_F(t^{-1}(\tau)) \rvert
$$
is equal to
\begin{equation} \label{SupInequ1}
\max_{\tau\in I_h} \lvert \hat{b}_F(t^{-1}(\tau))-b_F(t^{-1}(\tau)) \rvert + O_p(h^2).
\end{equation}
Here, the order of the remainder terms follows from the choice of $l_h$. Now introduce the shorthand
$$
\hat{\Delta}(\theta) : = \hat{b}_F(\theta) - E(\hat{b}_F(\theta)).
$$
Then 
$$
\max_{\tau\in I_h} \lvert \hat{b}_F(t^{-1}(\tau))-b_F(t^{-1}(\tau)) \rvert
\leq
\max_{\tau\in I_h} \lvert \hat{\Delta}(t^{-1}(\tau)) \rvert
+
\sup_\theta \lvert E(\hat{b}_F(\theta)) - b_F(\theta) \rvert
$$
and so  Lemma \ref{lemma:helpCDFpreliminaries4}(i) implies that
$$
\max_{\tau\in I_h} \lvert \hat{b}_F(t^{-1}(\tau))-b_F(t^{-1}(\tau)) \rvert
\leq
\max_{\tau\in I_h} \lvert \hat{\Delta}(t^{-1}(\tau)) \rvert + O(m^{-1}+h^2).
$$
Moving on, observe that the number of subintervals making up $I_h$ is  equal to
$\lceil l_h^{-1} \rceil = \lceil h^{-3+\omega^{-1}}  \rceil $,
where $\lceil a \rceil $ delivers the smallest integer at least as large as $a$. We therefore have
\begin{equation} \label{SupInequ2}
\begin{split}	
   E
    \left(\left( \max_{\tau \in I_h}
         \left|  \hat \Delta(t^{-1}(\tau)) \right|   
   \right)^{\omega} \right)     
      &=   
  E  \left( \max_{\tau \in I_h}
         \left|  \hat \Delta(t^{-1}(\tau)) \right|^{\omega}  \right)  
 \\ & \leq   E  \left( \sum_{\tau \in I_h}
         \left|  \hat \Delta(t^{-1}(\tau)) \right|^{\omega} \right)    
 \\
  &=  \sum_{\tau \in I_h}  E  \left( 
         \left|  \hat \Delta(t^{-1}(\tau)) \right|^{\omega}  \right)
		 \leq    \left\lceil h^{-3+1/\omega}  \right\rceil 
       \sup_{\theta \in \mathbb{R}}    
          E \left|  \hat \Delta(\theta) \right|^{\omega}   .      
\end{split}
\end{equation}
Let $b_i(\theta) :=  - \frac 1 2 \, h^{-2} \, \sigma_i^2  \, \phi^\prime\left(\frac{\vartheta_i-\theta}{h}\right) $
 and    $\Delta_i(\theta) :=  b_i(\theta)  - E  b_i(\theta)$.
 We may then write
$\hat \Delta(\theta)   =   n^{-1}\sum_{i=1}^n \Delta_i(\theta)$.
Notice that $\Delta_i(\theta)$ are independent and mean zero. 
By \citet[Theorem 3]{Rosenthal1970}
we therefore have that
$$
  \left(    E \left( \left|  n^{-1/2} \sum_{i=1}^n \Delta_i(\theta) \right|^{\omega} \right) \right)^{1/\omega}
$$
is bounded from above by 
\begin{align*}
 c \, \max \left\{ 
        \left(  n^{-1} \sum_{i=1}^n E \left(\Delta_i(\theta)^2 \right) \right)^{1/2},
          \;
       n^{-1/2}  \left( \sum_{i=1}^n E  \left(\left| \Delta_i(\theta) \right|^{\omega}\right) \right)^{1/\omega}
      \right\} ,
\end{align*}
where the constant $c$ only depends on $\omega$.
Using Lemma~\ref{lemma:helpCDFpreliminaries4}(ii)  we  obtain
\begin{align*}
       \sup_{\theta \in \mathbb{R}}  \,     \left( n^{-1} \sum_{i=1}^n E   (\Delta_i(\theta)^2 )\right)^{1/2}
     &= \sup_{\theta \in \mathbb{R}}  
       \left(   n  \, {\rm var} \, \hat{b}_F(\theta) \right)^{1/2} = O(h^{-3/2}).
\end{align*}
Using Lemma \ref{lemma:helpCDFpreliminaries5} we obtain
\begin{align*}
 n^{-1/2} \, \sup_{\theta \in \mathbb{R}} 
     \left( \sum_{i=1}^n E  \left( \left| \Delta_i(\theta) \right|^{\omega} \right)^{1/\omega}\right)
   &=  n^{-1/2+1/\omega}  \; \sup_{\theta \in \mathbb{R}}     \left(    E   \left| \Delta_i(\theta) \right|^{\omega} \right)^{1/\omega} 
 \\  
     &=    O( n^{-1/2+1/\omega} \,  h^{-2+1/\omega} ) 
=  O( h^{-3/2} ) ,   
\end{align*}
where in the last step we used the condition that $h^{-1} = O(n)$.
We can therefore conclude from Rosenthal's inequality above that
\begin{align*}
    \left(   \sup_{\theta \in \mathbb{R}}    
          E \left( |  \hat \Delta(\theta) |^{\omega} \right)
    \right)^{1/\omega}      
    &=
     n^{-1/2} \left(    E \left( \left|  n^{-1/2} \sum_{i=1}^n \Delta_i(\theta) \right|^{\omega} \right) \right)^{1/\omega}
    = O( n^{-1/2} h^{-3/2} ) .
\end{align*}
Using  this and  \eqref{SupInequ2} we obtain
\begin{align*}
     \max_{\tau \in I_h}
         \left|  \hat \Delta(t^{-1}(\tau)) \right|  
      &=    O(h^{(-3+1/\omega)/\omega}  \, n^{-1/2} \, h^{-3/2} )  
       =    O( n^{-1/2} \, h^{-3/2-\varepsilon} )  ,
\end{align*}
where $\varepsilon = 3/\omega - 1/\omega^2$.
Combining this with \eqref{SupInequ1} and \eqref{SupInequ2}  we thus conclude
\begin{align*}
     \sup_{\theta \in \mathbb{R}}
         \left|  \hat{b}_F(\theta)  -  b_F(\theta) \right|
     &=   O( m^{-1} ) + O(h^2)    + O( n^{-1/2} \, h^{-3/2-\varepsilon} )   ,
\end{align*}
as claimed.

Now, with $h = O(m^{-1/2})$ and $h^{-1} = O(n^{1 - 2 \omega^{-1}})$ we find
\begin{align*}
       \sup_{\theta \in \mathbb{R}}
        \frac{\sqrt{n}} {m}  \left|  \hat{b}_F(\theta)  -  b_F(\theta) \right| 
         &= O_P( n^{1/2} m^{-1} h^2 + n^{1/2} m^{-2} +   m^{-1} h^{-3/2-\varepsilon} )
      \\   
        &= O_P( n^{1/2} m^{-2} + m^{-4/9 \epsilon^2})
      \\ 
       &= o_P(1) ,
\end{align*}
where in the last step we also used that $n/m^4 \rightarrow 0$ and that $m \rightarrow \infty$. The result of Proposition~\ref{prop:WeakConvCorrected} now follows immediately from Proposition~\ref{prop:cdf}.
\end{proof}

\subsection*{Proof of Proposition~\ref{prop:Qbias}}

Let $\mathbb{G}_n(u) := \hat F( F_m^{-1}(u)  )$ be the empirical distribution function of the i.i.d.~sample $u_i = F_m(\vartheta_i)$.  Lemma~\ref{lemma:KMTapprox} and Theorem~1 in \cite{DossGill1992} give
\begin{align}
\sup_{\tau \in [0, 1]} \left|   
\sqrt{n}  \left( \mathbb{G}^{\leftarrow}_n(\tau)  - \tau \right) + \mathbb{B}_n(\tau)   \right| = o_P(1) ,
\label{QprocessApprox}
\end{align}
where $\mathbb{G}^{\leftarrow}_n$ again denotes the left inverse of $\mathbb{G}_n$ $\mathbb{B}_n(\tau)$ is the sequence of Brownian bridges that previously appeared in Lemma~\ref{lemma:KMTapprox}.	

Equation \eqref{QprocessApprox} yields
\begin{align*}
\mathbb{G}_n^{\leftarrow}( \hat{\tau}^*)  - \mathbb{G}_n^{\leftarrow}( \tau ) = ( \hat{\tau}^* - \tau  ) - n^{-1/2} \left[ \mathbb{B}_n(\hat{\tau}^*) - \mathbb{B}_n(\tau)  \right] + o_p(n^{-1/2}).
\end{align*}   
Also, $ \hat{\tau}^* - \tau  = O_p( m^{-1} )$ follows from the results above. L\'evy's modulus-of-continuity theorem then implies that $\mathbb{B}_n(\hat{\tau}^*) - \mathbb{B}_n(\tau) = o_P(1)$. Therefore, 
\begin{align*}
\mathbb{G}_n^{\leftarrow}( \hat{\tau}^*)  - \mathbb{G}_n^{\leftarrow}( \tau ) = O_p(m^{-1} ) + o_p(n^{-1/2})  .
\end{align*} 	
By definition we have $  \check{q}(\tau) = \hat F^{\leftarrow}(\hat{\tau}^*)$ and $ \hat{q}(\tau) = \hat F^{\leftarrow}( \tau  )$, and also that $\mathbb{G}^{\leftarrow}_n(\tau) =  F_m( \hat F^{\leftarrow}(\tau)  )$. Substituting this into the last displayed equation yields	
\begin{align*}
F_m( \check{q}(\tau) ) - F_m( \hat{q}(\tau)) = O_p(m^{-1} ) + o_p(n^{-1/2})  .
\end{align*}	
Lemma~\ref{lemma:helpCDFpreliminaries} and our assumptions guarantee that $F_m(\tau)$ has a density $f_m(\tau)$ that is bounded from below in a neighborhood of $q(\tau)$ for the quantile of interest $\tau$. The last result therefore also implies that
\begin{align} \label{DiffEstQ}
\check{q}(\tau) -  \hat{q}(\tau) = O_p(m^{-1} ) + o_p(n^{-1/2})  .
\end{align}

Next, 
The result \eqref{QprocessApprox} implies $\sqrt{n} (\mathbb{G}_n^{\leftarrow}(\tau) - \tau)  \rightsquigarrow   \mathbb{B}(\tau)$ for a Brownian bridge $ \mathbb{B}$. For $\check{q}(\tau) = \hat F^{\leftarrow}(\hat{\tau}^*)$ we have $F_m(\check{q}(\tau)) = \mathbb{G}^{\leftarrow}_n(\hat{\tau}^*) $, and therefore 
$$
\sqrt{n} (F_m(\check{q}(\tau)) - \hat{\tau}^*)  \rightsquigarrow   \mathbb{B}(\tau).
$$
From Proposition~\ref{prop:cdf} we know that $F_m(\theta) = E(\hat{F}(\theta))  = F(\theta) +  {b_F(\theta)}/{m} + O(m^{-2})$, uniformly in $\theta$. We then find
\begin{align*}
\sqrt{n} \left(  F(\check{q}(\tau))  -\tau +  \frac{b_F(\check{q}(\tau)) - \hat{b}_F(\hat{q}(\tau)) }{m}   + O(m^{-2}) \right) 
\overset{d}{\rightarrow}{ N}(0, \tau (1-\tau)) ,
\end{align*}
From the proof of Proposition~\ref{prop:WeakConvCorrected} we also know that   
$
\sup_{\theta}
({\sqrt{n}}/ {m})  \left\lvert  \hat{b}_F(\theta)  -  b_F(\theta)  \right\rvert 
= o_p(1) ,
$
and therefore 
    \begin{align*}
          \sqrt{n} \left(  F(\check{q}(\tau))  -\tau
          +  \frac{b_F(\check{q}(\tau)) -  b_F(\hat{q}(\tau)) }{m}   + O(m^{-2})  \right)  \overset{d}{\rightarrow} { N}(0, \tau (1-\tau)).
    \end{align*}    
Smoothness of the function  $b_F$ and \eqref{DiffEstQ} imply
$
b_F(\check{q}(\tau)) -  b_F(\hat{q}(\tau)) = O(m^{-1}) + o_p(n^{-1/2}).
$ 
We thus obtain
$
 \sqrt{n} \left(  F(\check{q}(\tau))  -\tau \right)  \overset{d}{\rightarrow} { N}(0, \tau (1-\tau))
$
An application of the delta method with transformation $F^{-1}$ then gives the result. This completes the proof.
\qed %

\renewcommand{\theequation}{B.\arabic{equation}} 
\setcounter{equation}{0}
\renewcommand{\thelemma}{B.\arabic{lemma}} 
\setcounter{lemma}{0}
\section{Proof of Lemmas~\ref{lemma:helpCDFpreliminaries4} and \ref{lemma:helpCDFpreliminaries5}}

Before proving Lemmas~\ref{lemma:helpCDFpreliminaries4} and \ref{lemma:helpCDFpreliminaries5} we first 
state one known result and establish two further intermediate lemmas.

\begin{lemma}[\citealt{Mason1981}] \label{lemma:BoundWeightedEmp}
Let $\mathbb{G}_n$ be the empirical cumulative distribution of an i.i.d. sample of size n from a uniform distribution on [0,1].
Then, as $n\rightarrow \infty$,  	
\begin{align*}
    \sup_{u \in (0,1)} \;    [u (1-u)]^{-1+\epsilon} \, \left| \mathbb{G}_n(u) - u \right|  \, \rightarrow \, 0  ,
\end{align*}
almost surely, for any $0 < \epsilon \leq 1/2$.
\end{lemma}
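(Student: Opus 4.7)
The plan is to reduce to the boundary by symmetry and then handle three regimes separately. Replacing the sample $U_i$ by $1-U_i$, which is again i.i.d.~uniform, maps the behavior of $[u(1-u)]^{\epsilon-1}|\mathbb{G}_n(u)-u|$ near $u=1$ to the behavior near $u=0$, so it suffices to bound $\sup_{u\in(0,1/2]} u^{\epsilon-1}|\mathbb{G}_n(u)-u|$. On any fixed interval $[\delta,1/2]$ with $\delta\in(0,1/2)$ the weight is bounded by $\delta^{\epsilon-1}$ and the Glivenko-Cantelli theorem gives $\sup_u |\mathbb{G}_n(u)-u|\to 0$ almost surely, so this portion contributes $o(1)$ a.s. The work is concentrated in the shrinking left tail $(0,\delta)$, and one eventually takes $\delta=\delta_n\downarrow 0$ at the end.

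For the left tail I would use a dyadic decomposition $I_k:=(2^{-(k+1)},2^{-k}]$, $k\geq k_0$. Monotonicity of $\mathbb{G}_n$ yields $\sup_{u\in I_k}|\mathbb{G}_n(u)-u|\leq \mathbb{G}_n(2^{-k})+2^{-k}$, while $\sup_{u\in I_k} u^{\epsilon-1}\leq 2^{(k+1)(1-\epsilon)}$. Since $n\mathbb{G}_n(2^{-k})\sim \mathrm{Binomial}(n,2^{-k})$, a Bernstein/Chernoff bound gives $\mathbb{P}\bigl(\mathbb{G}_n(2^{-k})\geq C\cdot 2^{-k}\bigr)\leq \exp(-c\,n\,2^{-k})$ for suitable absolute constants $C,c>0$. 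Summing this bound over the dyadic levels $k$ satisfying $n\cdot 2^{-k}\geq K\log n$ (with $K$ large) and invoking Borel-Cantelli gives, almost surely and simultaneously across all such levels, $\mathbb{G}_n(2^{-k})\leq C\cdot 2^{-k}$ for all $n$ sufficiently large. The weighted supremum on these scales is therefore $O(2^{-k\epsilon})$; taking the worst such scale yields a bound of order $\delta_n^{\epsilon}$ at the upper end (controllable by letting $\delta_n\to 0$ slowly) and of order $(\log n/n)^{\epsilon}=o(1)$ at the lower end.

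The remaining extreme tail $u<u_n:=K(\log n)/n$ is handled directly. A further Chernoff bound on $n\mathbb{G}_n(u_n)$ shows $\mathbb{G}_n(u_n)\leq C(\log n)/n$ almost surely for large $n$, so $\sup_{u<u_n}\mathbb{G}_n(u)\leq C(\log n)/n$; combining with the trivial bound $u\leq u_n$ and separately treating $u\leq U_{(1)}$ (on which $\mathbb{G}_n(u)=0$ so the weighted quantity reduces to $u^{\epsilon}$, and $U_{(1)}\to 0$ a.s.) yields the desired decay in this range. Putting the three regimes together gives the claim. The principal obstacle is the interplay between the divergent weight $u^{\epsilon-1}$ and the fact that $|\mathbb{G}_n(u)-u|$ is only of order $u$ (not $u^{1+\alpha}$) at small scales, so that the trivial pointwise bound is useless and one must use a scale-adapted concentration inequality at every dyadic level simultaneously; tuning these ingredients sharply to produce an a.s.~rather than in-probability statement is precisely the content of Mason's (1981) argument.
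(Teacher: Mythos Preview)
The paper does not prove this lemma at all: it is quoted verbatim from Mason (1981) as a known result and invoked as a black box in the proof of Lemma~\ref{lemma:helpCDFpreliminaries4}. So there is no ``paper's own proof'' to compare against; your sketch is being measured against Mason's original argument.

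Your architecture (symmetry reduction, Glivenko--Cantelli on a fixed bulk $[\delta,1/2]$, dyadic Chernoff/Borel--Cantelli on intermediate scales, separate extreme tail) is standard and sound, and the dyadic portion is fine once one reads your ``$\delta_n\downarrow 0$'' remark as the usual two-step: first fix $\delta$ and conclude $\limsup\leq C\delta^\epsilon$, then send $\delta\to 0$. The genuine gap is in the extreme tail $u\in(U_{(1)},u_n)$. There $\mathbb{G}_n(u)\geq 1/n$, and the weighted quantity at $u\downarrow U_{(1)}$ is of order $U_{(1)}^{\epsilon-1}/n$. For this to vanish almost surely you need $n^{1/(1-\epsilon)}U_{(1),n}\to\infty$ a.s. Your text only offers ``$\mathbb{G}_n(u_n)\leq C(\log n)/n$'' and ``$u\leq u_n$'', neither of which controls $u^{\epsilon-1}$ from above on that interval. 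A first Borel--Cantelli attempt fails precisely in the regime $0<\epsilon\leq 1/2$ of the lemma, since $P\big(U_{(1),n}<n^{-1/(1-\epsilon)}\big)\asymp n^{-\epsilon/(1-\epsilon)}$ with exponent $\leq 1$. One can close this via the record structure of the running minimum (using $\log T_k/k\to 1$ and $-\log R_k/k\to 1$ a.s.\ for record times $T_k$ and values $R_k$, whence $T_k^{\alpha}R_k\to\infty$ for any $\alpha>1$), but that is a nontrivial extra ingredient your write-up does not supply. Your closing sentence concedes that getting the a.s.\ statement ``is precisely the content of Mason's (1981) argument''; that is accurate, but it also means what you have written is an outline that stops short of the one delicate step, rather than a self-contained proof.
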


\begin{lemma}     \label{lemma:helpCDFpreliminaries2}
Let Assumption \ref{ass:basicMOM}  hold. Then, if $\sup_\theta (1+\lvert \theta \rvert^\kappa) \, f(\theta) < \infty$, 
$$
\sup_\theta (1+\lvert \theta \rvert^\kappa) \, f_m(\theta) =  O_p(1).
$$
holds.
\end{lemma}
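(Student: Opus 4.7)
The plan is to leverage the convolution-type representation of $f_m$ (which is available in the setting of Proposition \ref{prop:WeakConvCorrected} where the noise is Gaussian) and to show that the polynomial tail bound on $f$ is inherited by $f_m$, up to a multiplicative constant that accounts for the noise contribution. The underlying intuition is that the noise $\sigma_i\varepsilon_i/\sqrt{m}$ has bounded variance and all moments, so convolving $f$ with its density smears mass only over small perturbations, preserving polynomial decay.

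First I would write $f_m$ in the integral form
$$
f_m(\vartheta) = E\left[\frac{\sqrt{m}}{\sigma_i}\,\phi\!\left(\frac{\sqrt{m}(\vartheta - \theta_i)}{\sigma_i}\right)\right],
$$
which follows from the normality assumption \eqref{eq:normality} that underlies the results of Appendix B. Second I would introduce the $\sigma$-uniform envelope
$$
\overline{\Psi}(y) := \frac{\sqrt{m}}{\underline{\sigma}}\,\phi\!\left(\frac{\sqrt{m}\,y}{\overline{\sigma}}\right),
$$
and verify that $\sqrt{m}/\sigma \cdot \phi(\sqrt{m}\,y/\sigma) \leq \overline{\Psi}(y)$ for every $\sigma \in [\underline{\sigma},\overline{\sigma}]$ and every $y$, by monotonicity of $\sigma \mapsto 1/\sigma$ on $[\underline{\sigma},\overline{\sigma}]$ and unimodality of $\phi$. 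Third I would apply the elementary inequality
$$
1+|\vartheta|^\kappa \;\leq\; C_\kappa\,(1+|\theta|^\kappa)(1+|\vartheta-\theta|^\kappa),
$$
valid for any $\kappa>0$ with $C_\kappa = 2^{\kappa}$, inside the expectation above. Conditioning on $\theta_i=\theta$ and using the hypothesis $(1+|\theta|^\kappa)f(\theta)\leq M:=\sup_\theta(1+|\theta|^\kappa)f(\theta)$ together with the envelope bound gives
$$
(1+|\vartheta|^\kappa)\,f_m(\vartheta) \;\leq\; C_\kappa\,M \int (1+|\vartheta-\theta|^\kappa)\,\overline{\Psi}(\vartheta-\theta)\,d\theta \;=\; C_\kappa\,M \int(1+|y|^\kappa)\,\overline{\Psi}(y)\,dy,
$$
and the last integral equals $\overline{\sigma}/\underline{\sigma} + (\overline{\sigma}^{\kappa+1}/\underline{\sigma})\,m^{-\kappa/2}\,E|Z|^\kappa$ with $Z\sim N(0,1)$, which is finite and uniformly bounded in $m\geq 1$.

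The main technical issue is that this representation and the envelope step implicitly assume the joint density of $(\theta_i,\sigma_i)$ exists, which is clean under Assumption~\ref{ass:cdf}.A; under Assumption~\ref{ass:cdf}.B (where $\sigma_i=\sigma(\theta_i)$ is deterministic) the conditional distribution of $\sigma_i$ given $\theta_i$ is a point mass, so the conditioning step reduces to a single integral against $f$, and the envelope bound $\overline{\Psi}$ still dominates pointwise because it is uniform over $\sigma\in[\underline{\sigma},\overline{\sigma}]$. Hence the same conclusion carries over without modification, and assembling the pieces delivers $\sup_\theta(1+|\theta|^\kappa)f_m(\theta)=O(1)$ as claimed.
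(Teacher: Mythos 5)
Your proof is correct, and it takes a genuinely different route from the paper's. The paper also starts from the mixture representation $f_m(\vartheta)=\int p(\vartheta-\theta\,\vert\,\theta)\,f(\theta)\,d\theta$, but then splits the integral at $\theta=\vartheta/2$: on the near piece it uses $\sup_{\theta\ge\vartheta/2}f(\theta)=O(1+\lvert\vartheta/2\rvert^{-\kappa})$ and bounds the remaining integral by explicitly solving the pointwise maximization of $(\sigma/\sqrt m)^{-1}\phi(\varepsilon/(\sigma/\sqrt m))$ over $\sigma\in[\underline\sigma,\overline\sigma]$ (yielding a constant involving $\log(\overline\sigma/\underline\sigma)$); on the far piece it shows the integral is a Gaussian tail $1-\Phi\bigl(\tfrac{\vartheta/2}{\overline\sigma/\sqrt m}\bigr)$, hence super-polynomially small. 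You instead dominate the noise density uniformly in $\sigma$ by the single envelope $\overline\Psi(y)=(\sqrt m/\underline\sigma)\,\phi(\sqrt m\,y/\overline\sigma)$ — which is valid since the prefactor is maximized at $\underline\sigma$ and $\phi$ is decreasing in $\lvert\cdot\rvert$ — and transfer the polynomial weight via the sub-multiplicativity $1+\lvert\vartheta\rvert^\kappa\le 2^\kappa(1+\lvert\theta\rvert^\kappa)(1+\lvert\vartheta-\theta\rvert^\kappa)$, reducing everything to one $\kappa$-th moment computation for the envelope. Your route avoids both the case split and the constrained optimization, gives an explicit constant that is uniform in $m\ge1$, and (like the paper's) actually delivers the deterministic bound $O(1)$, which is stronger than the stated $O_p(1)$. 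Your concern about the existence of a joint density of $(\theta_i,\sigma_i)$ is not really needed — the representation $f_m(\vartheta)=E\bigl[(\sigma_i/\sqrt m)^{-1}\phi\bigl(\tfrac{\vartheta-\theta_i}{\sigma_i/\sqrt m}\bigr)\bigr]$ and the envelope bound hold for any conditional law of $\sigma_i$ given $\theta_i$ supported in $[\underline\sigma,\overline\sigma]$ — but you resolve it correctly. Note only that, like the paper, you rely on the normality of the noise in \eqref{eq:normality}, which is the maintained assumption in the part of the paper where this lemma is used.
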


\begin{proof}
The conditional density of $\vartheta_i-\theta_i$ given $\theta_i$ evaluated in $\varepsilon$ is
$$
p(\varepsilon \vert \, \theta) :=
E\left(\left. \frac{1}{\sigma_i/\sqrt{m}} \, g\left(\frac{\varepsilon}{\sigma_i/\sqrt{m}}\right)\right\rvert \theta_i=\theta\right).
$$	
We thus have
$$
f_m(\vartheta) 
= 
\textstyle{\int}_{-\infty}^{\infty} p(\vartheta-\theta \vert \, \theta) \, f(\theta) \, d\theta
= 
\textstyle{\int}_{-\infty}^{\vartheta/2} \, p(\vartheta-\theta \vert \, \theta) \, f(\theta) \, d\theta
+
\textstyle{\int}_{\vartheta/2}^{\infty} \, p(\vartheta-\theta \vert \, \theta) \, f(\theta) \, d\theta .
$$
Without loss of generality we will take the value $\vartheta$ to be positive throughout. We have the bound
\begin{equation} \label{eq:fbound}
f_m(\vartheta) 
\leq 
\sup_{\theta} f(\theta) \, \textstyle{\int}_{-\infty}^{\vartheta/2} \, p(\vartheta-\theta \vert \, \theta) \, d\theta
+
\sup_{\theta\geq \vartheta/2} f(\theta) \,
\textstyle{\int}_{\vartheta/2}^{\infty} \, p(\vartheta-\theta \vert \, \theta) \, d\theta .
\end{equation}
Consider the second term on the right-hand side in \eqref{eq:fbound}. $\sup_{\theta \geq \vartheta/2} f(\theta) = O(1+\lvert\vartheta/2 \rvert^{-\kappa})$ by assumption and so it suffices to show that the integral is finite for all $\vartheta$. To see that this is so, observe that
\begin{equation*}
\begin{split}	
\textstyle{\int}_{\vartheta/2}^{\infty} \, p(\vartheta-\theta \vert \, \theta) \, d\theta
& = 
\textstyle{\int}_{-\infty}^{\vartheta/2} \, p(\varepsilon \vert \, \vartheta-\varepsilon) \, d\varepsilon
=
\textstyle{\int}_{-\infty}^{\vartheta/2}
E\left(\left.
\frac{1}{\sigma_i/\sqrt{m}} \, g\left(\frac{\varepsilon}{\sigma_i/\sqrt{m}}\right)
\right\rvert \theta_i = \vartheta - \varepsilon\right) \, d\varepsilon
\end{split}
\end{equation*}
and use the change of variable $\epsilon^* = \sqrt{m}\, \varepsilon$
\begin{equation*}
\begin{split}
\textstyle{\int}_{\vartheta/2}^{\infty} \, p(\vartheta-\theta \vert \, \theta) \, d\theta
& \leq 
\textstyle{\int}_{-\infty}^{\infty}	
\max_{\sigma\in [\underline{\sigma},\overline{\sigma}]}
\left\lbrace
\frac{1}{\sigma/\sqrt{m}} \, g\left(\frac{\varepsilon}{\sigma/\sqrt{m}}\right)
\right\rbrace \, d\varepsilon
 =
\textstyle{\int}_{-\infty}^{\infty}	
\max_{\sigma\in [\underline{\sigma},\overline{\sigma}]}
\left\lbrace
\frac{1}{\sigma} \, g\left(\frac{\epsilon}{\sigma}\right)
\right\rbrace \, d\epsilon
\\
& \leq 
C \, 
\textstyle{\int}_{-\infty}^{\infty}	
\max_{\sigma\in [\underline{\sigma},\overline{\sigma}]}
\left\lbrace
\frac{1}{\sigma} \, \left(1+\left\lvert \frac{\epsilon}{\sigma}\right\rvert\right)^{-\alpha}
\right\rbrace \, d\epsilon
\\
& \leq 
C \, 
\textstyle{\int}_{-\infty}^{\infty}	
\frac{1}{\underline{\sigma}} \, \left(1+\left\lvert \frac{\epsilon}{\underline{\sigma}}\right\rvert\right)^{-\alpha} \, d\epsilon
 = 
C / (\alpha-1) = O(1).
\end{split}
\end{equation*}		
Next, for the first right-hand side term in \eqref{eq:fbound}, recall that $\sup_\theta f(\theta) < \infty$, and so we need to show that the integral vanishes sufficiently fast as $\vartheta\rightarrow \infty$. To see that this is the case we proceed as before by observing that
\begin{equation*}
\begin{split}	
\textstyle{\int}_{-\infty}^{\vartheta/2} \, p(\vartheta-\theta \vert \, \theta) \, d\theta
& = 
\textstyle{\int}_{\vartheta/2}^{\infty}
E\left(\left.
\frac{1}{\sigma_i/\sqrt{m}} \, g\left(\frac{\varepsilon}{\sigma_i/\sqrt{m}}\right)
\right\rvert \theta_i = \vartheta - \varepsilon\right) \, d\varepsilon
\end{split}
\end{equation*}
to obtain 
\begin{equation*}
\begin{split}
\textstyle{\int}_{-\infty}^{\vartheta/2} \, p(\vartheta-\theta \vert \, \theta) \, d\theta
& \leq 
\textstyle{\int}_{\vartheta/2}^{\infty} \, 
\max_{\sigma\in[\underline{\sigma},\overline{\sigma}]}
\left\lbrace
\frac{1}{\sigma/\sqrt{m}}
\,g \left(\frac{\varepsilon}{\sigma/\sqrt{m}}\right)
\right\rbrace
\, d\varepsilon
\\
& \leq 
\textstyle{\int}_{\sqrt{m}\, \vartheta/2}^{\infty} \, 
\max_{\sigma\in[\underline{\sigma},\overline{\sigma}]}
\left\lbrace
\frac{1}{\sigma}
\,g \left(\frac{\epsilon}{\sigma}\right)
\right\rbrace
\, d\epsilon
\\
& \leq 
C \, 
\textstyle{\int}_{\sqrt{m}\, \vartheta/2}^{\infty} \, 
\frac{1}{\underline{\sigma}}
\left(1+\frac{\epsilon}{\underline{\sigma}} \right)^{-\alpha}
\, d\epsilon
 = 
O(1+(\sqrt{m}\, \vartheta/2)^{1-\alpha}).
\end{split}
\end{equation*}		
Thus, as long as $\alpha>1$ and $\alpha\geq \kappa+1$ we have
$$
f_m(\vartheta) = O(1+\lvert\vartheta/2 \rvert^{-\kappa})
$$
uniformly in $\vartheta$, as claimed. This completes the proof of the lemma.	
\end{proof}

\begin{lemma}     \label{lemma:helpCDFpreliminaries3}
Let Assumptions \ref{ass:basicMOM} and \ref{ass:cdf} hold and let
$$
\gamma_m^r(\theta) : = E(\sigma_i^r \vert \vartheta_i=\theta) \, f_m(\theta),
\qquad
\gamma^r(\theta) : = E(\sigma_i^r \vert \theta_i=\theta) \, f(\theta).
$$
Then, for any integer $r$,
$$
\sup_\theta \lvert \nabla^q\gamma_m^r(\theta)-\nabla^q \gamma^r(\theta) \rvert = O(m^{-1})
$$
provided that the conditional density $h(\theta \vert \sigma)$ is $(q+2)$ times differentiable with respect to $\theta$ and that there exists a function $e$ so that $\lvert \nabla_1^{q+2} h(\theta \vert \sigma) \rvert \leq e(\sigma)$ and $E(e(\sigma_i))<\infty$.
\end{lemma}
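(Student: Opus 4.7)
The plan is to mirror the convolution-plus-Taylor-expansion strategy used in Lemma~\ref{lemma:helpCDFpreliminaries}, but applied to a conditional-moment weighted density rather than to the marginal density $f_m$. The starting observation is that, under the smoothness hypothesis, we can represent $\gamma^r_m$ via an iterated integral over $\sigma$ and the noise $\varepsilon_i$. Since $\varepsilon_i$ is independent of $(\theta_i,\sigma_i)$ and has density $f_\varepsilon$ (by the absolute-continuity part of Assumption~\ref{ass:cdf}), the change of variables $t=\vartheta-(\sigma/\sqrt m)\varepsilon$ in the convolution formula yields
\begin{equation*}
\gamma^r_m(\vartheta) \;=\; \int_{\underline\sigma}^{\overline\sigma} \sigma^r\,h(\sigma)\int f_\varepsilon(\varepsilon)\,h\!\left(\vartheta-\tfrac{\sigma}{\sqrt m}\varepsilon\,\Big|\,\sigma\right)d\varepsilon\,d\sigma,
\end{equation*}
whereas $\gamma^r(\vartheta) = \int \sigma^r h(\sigma)\, h(\vartheta|\sigma)\,d\sigma$.

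Next I would differentiate $q$ times in $\vartheta$ under the double integral. The interchange is legitimate by dominated convergence because, by hypothesis, $|\nabla_1^{p} h(\cdot|\sigma)|\leq e(\sigma)$ for $p\leq q+2$ (after, if necessary, enlarging $e$ to dominate the lower-order derivatives, which is automatic pointwise by continuity and the given bound on $\nabla_1^{q+2}h$), and $\sigma\in[\underline\sigma,\overline\sigma]$, $E(e(\sigma_i))<\infty$. This gives
\begin{equation*}
\nabla^q\gamma^r_m(\vartheta)-\nabla^q\gamma^r(\vartheta)
=\int \sigma^r h(\sigma)\int f_\varepsilon(\varepsilon)\Bigl[\nabla_1^q h\!\left(\vartheta-\tfrac{\sigma}{\sqrt m}\varepsilon\,\big|\,\sigma\right)-\nabla_1^q h(\vartheta|\sigma)\Bigr]d\varepsilon\,d\sigma.
\end{equation*}
A second-order Taylor expansion of $\nabla_1^q h(\cdot|\sigma)$ about $\vartheta$ produces a linear term proportional to $E(\varepsilon_i)=0$ (which vanishes) and a quadratic remainder controlled by $\nabla_1^{q+2}h(\vartheta^\star|\sigma)$ at some intermediate $\vartheta^\star$. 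Using $|\nabla_1^{q+2}h|\leq e(\sigma)$ and $E(\varepsilon_i^2)=1$, the inner integral is bounded in absolute value by $\sigma^2 e(\sigma)/(2m)$, uniformly in $\vartheta$.

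Plugging this back in and integrating against $\sigma^r h(\sigma)$ over the bounded set $[\underline\sigma,\overline\sigma]$ yields $\sup_\vartheta|\nabla^q\gamma^r_m(\vartheta)-\nabla^q\gamma^r(\vartheta)|\leq \overline\sigma^{r+2}\,E(e(\sigma_i))/(2m)=O(m^{-1})$, for any integer $r$ (positive or negative), since $\sigma_i$ is bounded away from zero and infinity. The main technical obstacle is justifying the uniform-in-$\vartheta$ interchange of the $q$-fold derivative with both integrals; this is where the assumption that the bound $e(\sigma)$ does not depend on $\vartheta$ is essential, since it supplies an integrable envelope simultaneously for the expansion's remainder and for its lower-order derivatives under the sign of the integral. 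Once this uniform bound is in hand, the rest is a routine moment calculation analogous to the bias expansion already carried out in the proof of Proposition~\ref{prop:cdf}.
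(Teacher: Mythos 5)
Your argument is essentially the paper's own proof: the same convolution representation $\gamma_m^r(\vartheta)=E\bigl(\int\sigma^r h(\vartheta-\tfrac{\sigma}{\sqrt m}\varepsilon_i\vert\sigma)h(\sigma)\,d\sigma\bigr)$, the same second-order Taylor expansion in the noise with the linear term killed by $E(\varepsilon_i)=0$, and the same uniform bound via the envelope $e(\sigma)$ on $\nabla_1^{q+2}h$ integrated over the compact range of $\sigma_i$. The only cosmetic differences are that you work with a generic noise density $f_\varepsilon$ where the paper writes the Gaussian kernel (harmless, since the lemma is invoked under the normality of Assumption~\ref{ass:cdfcorrect}), and that you differentiate before expanding rather than after.
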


\begin{proof}
Fix $r$ throughout the proof. First note that, by Bayes' rule and Assumption \ref{ass:basicMOM}, we may write
\begin{equation*}
\begin{split}
\gamma_m^r(\vartheta) = 	
\textstyle{\int}_{\underline{\sigma}}^{\overline{\sigma}}
\textstyle{\int}_{-\infty}^{\infty} 
\sigma^r \frac{1}{\sigma/\sqrt{m}} \, g\left(\frac{\vartheta-\theta}{\sigma/\sqrt{m}}\right) \, h(\theta, \sigma)  \, d\sigma \, d\theta 	
\end{split}
\end{equation*}	
A change of variable from $\theta$ to $\varepsilon:=(\vartheta-\theta)/(\sigma/\sqrt{m})$	then allows to write
$$
\gamma_m^r(\vartheta) 
=
E\left(B_r(\vartheta,\varepsilon_i/\sqrt{m}) \right),
\qquad
B_r(\theta,\delta) : = 
\textstyle{\int}_{\underline{\sigma}}^{\overline{\sigma}} \sigma^r \, h(\theta-\delta\sigma,\sigma) \, d\sigma.
$$
Observe that $B_r(\vartheta,0) = \gamma^r(\vartheta)$. Now, by a Taylor expansion,
$$
\nabla^q \gamma_m^r(\vartheta) - \nabla^q \gamma^r(\vartheta) = 
\frac{E\left({\varepsilon_i^2} \, \nabla_1^q \nabla_2^{2} B_r(\vartheta,\varepsilon_i^*/\sqrt{m}) \right)}{m}.
$$
Also, as
$$
\nabla_{1}^{p}\nabla_{2}^{q} B_r(\theta,\delta)
=
(-1)^q \, 
\textstyle{\int}_{\underline{\sigma}}^{\overline{\sigma}}
\sigma^{r+q} \, \nabla_1^{p+q} h(\theta-\delta \sigma, \sigma) \, d\sigma
$$
for any pair of integers $(p,q)$, we have that
$$
\sup_\theta \sup_\delta 
\lvert \nabla_1^q \nabla_2^{2} B_r(\theta,\delta)  \rvert
\leq
\overline{\sigma}^{r+q} \,  \, \sup_\theta \sup_\delta  
\lvert \textstyle{\int}_{\underline{\sigma}}^{\overline{\sigma}}
\nabla_1^{2+q} h(\theta-\delta \sigma \vert \sigma) \, h(\sigma) \, d\sigma  \rvert
\leq
\overline{\sigma}^{r+q} \,  
 \textstyle{\int}_{\underline{\sigma}}^{\overline{\sigma}} e(\sigma) \, h(\sigma) \, d\sigma  ,
$$
which is finite. Therefore, uniformly in $\theta$,
$$
\nabla^q \gamma_m^r(\theta) - \nabla^q \gamma^r(\theta) = 
 O(m^{-1}),
$$
as claimed. This completes the proof.
\end{proof}

\begin{proof}[\bf Proof of Lemma~\ref{lemma:helpCDFpreliminaries4}]
	
\mbox{}

\medskip \noindent	
\underline{Part (i)}:
With
$$
\beta_m(\theta) : = \frac{E(\sigma_i^2 \vert \vartheta_i = \theta) \, f_m(\theta)}{2},
$$	
a change of variable and integration by parts yield
$$
E(\hat{b}_F(\theta)) = 
- \textstyle{\int}_{-\infty}^{\infty}
\frac{\beta_m(\vartheta)}{h^2} \, \phi^\prime\left(\frac{\vartheta-\theta}{h} \right) \, d\vartheta
=
\textstyle{\int}_{-\infty}^{\infty}
\nabla^1 \beta_m (\theta+h\varepsilon) \, \phi(\varepsilon) \, d\varepsilon.
$$
Taylor expanding $\nabla^1 \beta_m$ around $\varepsilon = 0$ and using our assumptions of the distribution of $\varepsilon$ we obtain
$$
E(\hat{b}_F(\theta)) = 
\nabla^1 \beta_m(\theta) + 
h^2 \, \frac{\textstyle{\int}_{-\infty}^{\infty} \nabla^3\beta_m(\theta+h \varepsilon^*) \, \varepsilon^2 \, \phi(\varepsilon) \, d\varepsilon}{2},
$$
where $\varepsilon^*$ lies between $\varepsilon$ and zero. From Lemma \ref{lemma:helpCDFpreliminaries3} we have
$$
\nabla^1\beta_m(\theta) = \nabla^1\beta(\theta) + O(m^{-1}) = b_F(\theta) + O(m^{-1}) ,
$$
uniformly in $\theta$, and $\sup_\theta \lvert \nabla^3\beta_m(\theta) \rvert <\infty$. Therefore,
$$
E(\hat{b}_F(\theta)) = b_F(\theta) + O(m^{-1}) + O(h^2),
$$
as claimed.

\bigskip\noindent
\underline{Part (ii)}:
Note that
$$
\mathrm{var}(\hat{b}_F(\theta)) = 
E(\hat{b}_F(\theta)^2) - E(\hat{b}_F(\theta))^2
=
\frac{n^{-1}}{4}
{E\left(\frac{\sigma_i^4}{h^4} \phi^\prime\left(\frac{\vartheta-\theta}{h}\right)^2 \right)}
-
b_F(\theta)^2 + o(n^{-1}). %
$$
Now, with
$$
\beta_m^2(\theta) : = \frac{E(\sigma_i^4 \vert \vartheta_i = \theta) \, f_m(\theta)}{4},
$$
we have
$$
\frac{n^{-1}}{4}
E\left(\frac{\sigma_i^4}{h^4} \phi^\prime\left(\frac{\vartheta-\theta}{h}\right)^2 \right)
=
{\textstyle{\int}_{-\infty}^{\infty} \frac{\beta_m^2(\vartheta)}{h^4} \phi^\prime\left(\frac{\vartheta-\theta}{h}\right)^2 \, d\vartheta}
\leq
\frac{\sup_\theta \lvert \beta_m^2(\theta)\rvert}{n} \,
\frac{\textstyle{\int}_{-\infty}^{\infty} \phi^\prime\left(\frac{\vartheta-\theta}{h}\right)^2 \, d\vartheta}{h^4}
$$
which is $O(n^{-1}h^3)$ uniformly in $\theta$ as $\sup_\theta \lvert \beta_m^2(\theta)\rvert<\infty$ because $\sigma_i$ is finite and $f_m$ is bounded, and
$$
\textstyle{\int}_{-\infty}^{\infty} \phi^\prime\left(\frac{\vartheta-\theta}{h}\right)^2 \, d\vartheta
= \frac{h}{4\sqrt{\pi}},
$$
independent of $\theta$. This completes the proof.	

\bigskip\noindent
\underline{Part (iii)}:
First observe that
$$
\nabla^1 b_F(\theta) = {\nabla^2\beta (\theta)}/2,
$$
so that $(1+\lvert \theta \rvert^{1+\eta}) \, \lvert \nabla^1 b_F(\theta) \rvert < \infty$ follows directly from Assumption~\ref{ass:cdfcorrect}. What is left to show is that 
$$
\sup_\theta (1+\lvert \theta \rvert^{1+\eta}) \, \lvert \nabla^1 \hat{b}_F(\theta) \rvert = O_p(-(1+\omega^{-1})).
$$
Note that
$$
\nabla^1 \hat{b}_F(\theta) = \frac{(nh^2)^{-1}}{2} \sum_{i=1}^n \sigma_i^2 \, \phi^{\prime\prime} \left(\frac{\vartheta_i-\theta}{h}\right).
$$
By H\"older's inequality, 
$$
\lvert \nabla^1 \hat{b}_F(\theta) \rvert \leq
h^{-2} \,
\left \lbrace 
\left( n^{-1} \sum_{i=1}^n (\sigma_i^2/2)^{\omega} \right)^{\omega^{-1}}
\right\rbrace
\times
\left \lbrace 
\left( 
n^{-1} \sum_{i=1}^n \left\lvert \phi^{\prime\prime} \left(\frac{\vartheta_i-\theta}{h}\right)  \right\rvert^\psi
\right)^{\psi^{-1}}
\right\rbrace,
$$
where $\psi:= (1-\omega^{-1})^{-1}$. The first term in braces is bounded in probability because the $\sigma_i^2$ are finite. For the second term in braces, write $\mathbb{G}_n$ for the empirical cumulative distribution of an i.i.d. sample of size $n$ from the uniform distribution on $[0,1]$ and let
$
\mathbb{G}_n^\prime(u) : = n^{-1} \sum_{i=1}^n \delta_{u_i-u},
$
where $\delta_a$ is Dirac's delta at $a$. Then, writing $\nabla_u$ for the derivative with respect to $u$, we get
\begin{equation} \label{BoundDerivBF2}
\begin{split}
n^{-1} \sum_{i=1}^n \left\lvert \phi^{\prime\prime} \left(\frac{\vartheta_i-\theta}{h}\right)  \right\rvert^\psi
& =  \hspace{.45cm}
{\int}_0^1 \left\lvert \phi^{\prime\prime}\left(\frac{F^{-1}_m(u)-\theta}{h}\right) \right\rvert^\psi \, \mathbb{G}_n^\prime(u) \, du	
\\
& =
-
{\int}_0^1 \nabla_u^1 \left\lvert \phi^{\prime\prime}\left(\frac{F^{-1}_m(u)-\theta}{h}\right) \right\rvert^\psi \, \mathbb{G}_n(u) \, du
\\
& = 
-
{\int}_0^1 \nabla_u^1 \left\lvert \phi^{\prime\prime}\left(\frac{F^{-1}_m(u)-\theta}{h}\right) \right\rvert^\psi \, u \, du 
\\
& \hspace{.45cm}
-
{\int}_0^1 \nabla_u^1 \left\lvert \phi^{\prime\prime}\left(\frac{F^{-1}_m(u)-\theta}{h}\right) \right\rvert^\psi \,  \left( \mathbb{G}_n(u)-u \right) \, du 
\end{split}
\end{equation}		
where we have used integration by parts in the first step and replaced $\mathbb{G}_n(u)$ by $u + (\mathbb{G}_n(u) - u)$ in the second step. We now consider each of the integrals on the right-hand side in turn. First, integrating by parts,
\begin{equation} \label{eq: T1}
-
{\int}_0^1 \nabla_u^1 \left\lvert \phi^{\prime\prime}\left(\frac{F^{-1}_m(u)-\theta}{h}\right) \right\rvert^\psi \, u \, du
=
E\left( \left\lvert \phi^{\prime\prime}\left(\frac{\vartheta_i-\theta}{h}\right) \right\rvert^\psi \right).
\end{equation}
Clearly, this term is bounded uniformly on any finite interval. To evaluate it for large values of $\theta$, observe that
\begin{equation*}
\begin{split}	
\frac{1}{h} \, E\left( \left\lvert  \phi^{\prime\prime}\left(\frac{\vartheta_i-\theta}{h}\right) \right\rvert^\psi \right)
& =
\int_{-\infty}^{+\infty}\frac{1}{h}
\left\lvert  \phi^{\prime\prime}\left(\frac{\vartheta-\theta}{h}\right) \right\rvert^\psi
\, f_m(\vartheta) \, d\vartheta
\\
& =
\int_{\theta - h \log (1+\lvert \theta \rvert)}^{\theta + h \log (1+\lvert \theta \rvert)}\frac{1}{h}
\left\lvert  \phi^{\prime\prime}\left(\frac{\vartheta-\theta}{h}\right) \right\rvert^\psi
\, f_m(\vartheta) \, d\vartheta
\\
& + 
\int_{\log (1+\lvert \theta \rvert)}^{\infty}
\left\lvert  \phi^{\prime\prime}(z) \right\rvert^\psi
\, f_m(\theta + zh) \, dz
\\
& + 
\int_{\log (1+\lvert \theta \rvert)}^{\infty}
\left\lvert  \phi^{\prime\prime}(z) \right\rvert^\psi
\, f_m(\theta - zh) \, dz.
\end{split}
\end{equation*}
Here, 
$$
\int_{\theta - h \log (1+\lvert \theta \rvert)}^{\theta + h \log (1+\lvert \theta \rvert)}\frac{1}{h}
\left\lvert  \phi^{\prime\prime}\left(\frac{\vartheta-\theta}{h}\right) \right\rvert^\psi
\, f_m(\vartheta) \, d\vartheta
\leq O(\log (1+\lvert \theta \rvert)) \, \sup_\theta \lvert f_m(\theta) \rvert
=
O(\log (1+\lvert \theta \rvert)),
$$
because $\sup_\theta \lvert\phi^{\prime\prime}(\theta) \rvert^\psi = O(1)$ and $f_m$ is bounded. Further, because
$$
\textstyle{\int}_x^\infty \lvert \phi^{\prime\prime}(z) \rvert^\psi \, dz
=
O(x^{2\psi-1} \, e^{-\psi \, x^2/2}), \qquad \text{ as } x\rightarrow \infty,
$$
and $f_m(\theta) = O(\lvert \theta \rvert^{-\kappa})$ as $\lvert \theta \rvert \rightarrow \infty$ by Lemma \ref{lemma:helpCDFpreliminaries2}, we have
\begin{equation*}
\begin{split}	
\textstyle{\int}_{\log (1+\lvert \theta \rvert)}^{\infty}
\left\lvert  \phi^{\prime\prime}(z) \right\rvert^\psi
\, f_m(\theta + zh) \, dz
& =
O\left( \log(1+\lvert \theta \rvert)^{2\psi-1} \, e^{-\psi \log(1+\lvert\theta\rvert)^2 /2} \right),
\\
\textstyle{\int}_{\log (1+\lvert \theta \rvert)}^{\infty}
\left\lvert  \phi^{\prime\prime}(z) \right\rvert^\psi
\, f_m(\theta - zh) \, dz
& = 
O\left( \log(1+\lvert \theta \rvert)^{2\psi-1} \, e^{-\psi \log(1+\lvert\theta\rvert)^2 /2} \right).
\end{split}
\end{equation*}
Then, as
$$
e^{-\psi \log(1+\lvert\theta\rvert)^2 /2}  = o(\lvert \theta \rvert^a) \text{ for any } a>0 \text{ as } \lvert \theta \rvert \rightarrow\infty
$$
we may conclude that the term in \eqref{eq: T1} is $O(h \lvert \theta \rvert^{-\kappa} \, \log(1+\lvert\theta \rvert))$ uniformly in $\theta$. Next, for the second term in \eqref{BoundDerivBF2} we use Lemma \ref{lemma:BoundWeightedEmp} to establish that, for any $\epsilon \in (0,1/2]$, we have
\begin{equation*}
\begin{split}	
	&
\left\lvert \textstyle{\int}_0^1 \nabla_u^1 \left\lvert \phi^{\prime\prime}\left(\frac{F^{-1}_m(u)-\theta}{h}\right) \right\rvert^\psi \,   \left( \mathbb{G}_n(u)-u \right) \, du 
\right\rvert
\\ & \leq
o_p(1) \, 
\left\lvert \textstyle{\int}_0^1 \left\lvert \nabla_u^1 \left\lvert \phi^{\prime\prime}\left(\frac{F^{-1}_m(u)-\theta}{h}\right) \right\rvert^\psi \right\rvert \,  \left( u^{1-\epsilon} \, (1-u)^{1-\epsilon}\right) \, du 
\right\rvert
\\
& = 
o_p(1) \,
\left\lvert \textstyle{\int}_{-\infty}^{+\infty} \left\lvert \nabla_u^1 \left\lvert \phi^{\prime\prime}\left(\frac{F^{-1}_m(u)-\theta}{h}\right) \right\rvert^\psi \right\rvert \,  \left( F_m(\vartheta)^{1-\epsilon} \, (1-F_m(\vartheta))^{1-\epsilon}\right) \, d\vartheta
\right\rvert,
\end{split}
\end{equation*}
where the $o_p(1)$ term is independent of $\theta$. The integral term can be bounded in the same way as \eqref{eq: T1}. Hence, 
$$
\left\lvert \textstyle{\int}_0^1 \nabla_u^1 \left\lvert \phi^{\prime\prime}\left(\frac{F^{-1}_m(u)-\theta}{h}\right) \right\rvert^\psi \,   \left( \mathbb{G}_n(u)-u \right) \, du 
\right\rvert
=
o_p(h \lvert \theta\rvert^{(1-\epsilon) \, (1-\kappa)} \, \log (1+\lvert\theta\rvert))
$$ 
uniformly in $\theta$. We therefore have that 
$$
\sup_\theta \lvert \hat{b}_F(\theta) \rvert
\leq
h^{-2} \, O_p(1) \, 
\left\lbrace (O(h \lvert \theta \rvert^{-\kappa} \, \log(1+\lvert\theta \rvert)) +o_p(h \lvert \theta\rvert^{(1-\epsilon) \, (1-\kappa)} \, \log (1+\lvert\theta\rvert))^{\psi^{-1}} \right\rbrace.
$$
For any $ \eta > (\kappa-1)(1-\epsilon)(1-1/\omega) - 1 >0$ it then follows that
\begin{align*}
   \sup_{\theta}  \;
        \left( 1 + \left| \theta \right|^{1+\eta} \right)\, \lvert \hat{b}_F(\theta) \rvert 
  = O_P\left(  h^{-(1+\omega^{-1})}  
             \right) .
\end{align*}
Here, our assumption $\kappa > 1+(1-1/\omega)^{-1}$ guarantees that we can find  $\epsilon>0$
such that $ \eta > (\kappa-1)(1-\epsilon)(1-1/\omega) - 1 >0$ holds. This concludes the proof.
\end{proof}  

\begin{proof}[\bf Proof of Lemma~\ref{lemma:helpCDFpreliminaries5}]
First observe that, for any $\epsilon>0$,
\begin{equation*}
\begin{split}
\sup_\theta E(\lvert b_i(\theta)-E(b_i(\theta)) \rvert^\epsilon)	
& \leq
\sup_\theta\sum_{p=0}^\epsilon \binom{\epsilon}{p}
E(\lvert b_i(\theta) \rvert^p) \, E(\lvert b_i(\theta) \rvert^{\epsilon-p} )
 \leq 2^\epsilon \, \sup_\theta E(\lvert b_i(\theta) \rvert^\epsilon).
\end{split}		
\end{equation*}		
Therefore,
\begin{equation*}
\begin{split}
\sup_\theta E(\lvert b_i(\theta)-E(b_i(\theta)) \rvert^\epsilon)^{\epsilon^{-1}}
& \leq
2 \, \sup_\theta (E(\lvert b_i(\theta) \rvert^\epsilon))^{\epsilon^{-1}}
 \\
 & = 
\sup_\theta
\left( 
\textstyle{\int}_{-\infty}^{\infty} \frac{E(\sigma_i^{2\epsilon} \vert \vartheta_i=\vartheta) \, f_m(\vartheta)}{h^2} \, \left\lvert  \phi^\prime\left(\frac{\vartheta-\theta}{h}\right) \right\rvert^\epsilon \, d\vartheta
\right)^{\epsilon^{-1}}
\\
& \leq
\sup_\vartheta (E(\sigma_i^{2\epsilon} \vert \vartheta_i=\vartheta) \, f_m(\vartheta))^{\epsilon^{-1}}
\,
\frac{\left(\sup_\theta \textstyle{\int}_{-\infty}^{\infty}  \, \left\lvert  \phi^\prime\left(\frac{\vartheta-\theta}{h}\right) \right\rvert^\epsilon \, d\vartheta \right)^{\epsilon^{-1}}}{h^2}
\\
& = 
O(h^{\epsilon^{-1}-2}) ,
\end{split}		
\end{equation*}	
where we have used the definition of $b_i(\theta)$ in the first step, boundedness of the $\sigma_i$ and $f_m$ in the second step, and the fact that
$$
\textstyle{\int}_{-\infty}^{\infty}  \left| \phi^\prime\left(\frac{\vartheta-\theta}{h}\right) \right|^\epsilon  \, d \vartheta = O( h ) ,
$$
independent of $\theta$, in the final step. This completes the proof.
\end{proof}

\section{Least-squares cross validation}

The integrated squared error of 
$$
\check{F}(\theta)=\hat{F}(\theta)-\frac{\hat{b}_F(\theta)}{m}
$$
is 
$$
{\textstyle \int} (\check{F}(\theta)-F(\theta))^2 \, d\theta
=
\frac{{\textstyle \int} \hat{b}_F(\theta)^2 \, d\theta}{m^2}
-
\frac{2{\textstyle \int} (\hat{F}(\theta)-F(\theta)) \, \hat{b}_F(\theta) \, d\theta}{m} + \text{term independent of }h.
$$
Using the definition of $\hat{b}_F$ and expanding the square the first right-hand side term can be written as
\begin{equation*}
\begin{split}	
\frac{\int \hat{b}_F(\theta)^2 \, d\theta}{m^2}
=
\frac{m^{-2}}{n^2} \sum_{i=1}^n \sum_{j=1}^n
\frac{\sigma_i^2 \sigma_j^2}{h^2} \, \frac{1}{4}
\int 
\frac{1}{h}
\phi^\prime\left(\frac{\vartheta_i-\theta}{h} \right) \,
\frac{1}{h} \phi^\prime\left(\frac{\vartheta_j-\theta}{h} \right) \, d\theta,
\end{split}
\end{equation*}
and using properties of the normal distribution 
we calculate
\begin{equation*}
\begin{split}
\int 
\phi^\prime\left(\frac{\vartheta_i-\theta}{h} \right) \,
\phi^\prime\left(\frac{\vartheta_j-\theta}{h} \right) \, d\theta
& = 
\frac{1}{\sqrt{2}h} \phi \left(\frac{\vartheta_i-\vartheta_j}{\sqrt{2}h} \right)
\left(\frac{h^2}{2}-\frac{(\vartheta_i+\vartheta_j)^2}{4}+\vartheta_i \vartheta_j \right).
\end{split}		
\end{equation*}
Next, exploiting that $\phi^\prime(\eta)=-\eta \, \phi(\eta)$ and using well-known results on the truncated normal distribution
\begin{equation*}
\begin{split}	
-\frac{2\int \hat{F}(\theta) \, \hat{b}_F(\theta) \, d\theta}{m}
& = 
\frac{m^{-1}}{n^2}\sum_{i=1}^n \sum_{j=1}^n \frac{\sigma_j^2}{h^2} \int_{\vartheta_i}^{+\infty}  \phi^\prime\left(\frac{\vartheta_j-\theta}{h}\right) \, d\theta
\\
& = 
\frac{m^{-1}}{n^2}\sum_{i=1}^n \sum_{j=1}^n  \frac{\sigma_j^2}{h^2} \int_{\vartheta_i}^{+\infty} 
\left(\frac{\theta-\vartheta_j}{h} \right) \phi\left(\frac{\theta-\vartheta_j}{h}\right) \, d\theta
\\
& = 
\frac{m^{-1}}{n^2}\sum_{i=1}^n \sum_{j=1}^n  \frac{\sigma_j^2}{h} 
\left(\frac{\vartheta_i-\vartheta_j}{h}\right) \, \phi\left(\frac{\vartheta_i-\vartheta_j}{h}\right)
\\
& = 
\frac{m^{-1}}{n^2}\sum_{i=1}^n \sum_{j\neq i} 
\frac{\sigma_i^2}{h} \phi^\prime\left(\frac{\vartheta_i-\vartheta_j}{h}\right) .
\end{split}
\end{equation*}
Omitting terms for which $j=i$ in the last expression is justified by the fact that $\phi^\prime(0)=0$.
Finally, for the last term, integrating by parts shows that
\begin{equation*}
\begin{split}	
\frac{2\int F(\theta) \, \hat{b}_F(\theta) \, d\theta}{m}
& = 
-\frac{m^{-1}}{n} \sum_{i=1}^n \frac{\sigma_i^2}{h} \,
\int \phi\left(\frac{\vartheta_i-\theta}{h}\right) \, f(\theta) \, d\theta .
\end{split}
\end{equation*}
The integral in the right-hand side expression represents an expectation taken with respect to $f$. A leave-one-out estimator of the entire term is
$$
-\frac{m^{-1}}{n(n-1)} \sum_{i=1}^n \sum_{j\neq i} \frac{\sigma_i^2}{h} \phi\left(\frac{\vartheta_i-\vartheta_j}{h}\right) .
$$
Combining results and multiplying the entire expression through with $n^2m^2$ yields the cross-validation objective function stated in the main text.

\small


\begin{thebibliography}{}

\bibitem[\protect\citeauthoryear{Ahn, Choi, Gale, and Kariv}{Ahn, Choi, Gale
  and Kariv}{2014}]{AhnChoiGaleKariv2014}
Ahn, D., S.~Choi, D.~Gale, and S.~Kariv (2014).
\newblock Estimating ambiguity aversion in a portfolio choice experiment.
\newblock {\em Quantitative Economics\/}~{\em 5}, 195--223.

\bibitem[\protect\citeauthoryear{Alvarez and Arellano}{Alvarez and
  Arellano}{2003}]{AlvarezArellano2003}
Alvarez, J. and M.~Arellano (2003).
\newblock The time series and cross-section asymptotics of dynamic panel data
  estimators.
\newblock {\em Econometrica\/}~{\em 71}, 1121--1159.

\bibitem[\protect\citeauthoryear{Barras, Gagliardini, and Scaillet}{Barras,
  Gagliardini and Scaillet}{2021}]{BarrasGagliardiniScaillet2018}
Barras, L., P.~Gagliardini, and O.~Scaillet (2021).
\newblock Skill, scale, and value creation in the mutual fund industry.
\newblock Forthcoming in {\em Journal of Finance}.

\bibitem[\protect\citeauthoryear{Bonhomme, Jochmans, and Robin}{Bonhomme,
  Jochmans and Robin}{2016a}]{BonhommeJochmansRobin2016b}
Bonhomme, S., K.~Jochmans, and J.-M. Robin (2016a).
\newblock Estimating multivariate latent-structure models.
\newblock {\em Annals of Statistics\/}~{\em 44}, 540--563.

\bibitem[\protect\citeauthoryear{Bonhomme, Jochmans, and Robin}{Bonhomme,
  Jochmans and Robin}{2016b}]{BonhommeJochmansRobin2016a}
Bonhomme, S., K.~Jochmans, and J.-M. Robin (2016b).
\newblock Nonparametric estimation of f\mbox{}inite mixtures from repeated
  measurements.
\newblock {\em Journal of the Royal Statistical Society, Series B\/}~{\em 78},
  211--229.

\bibitem[\protect\citeauthoryear{Browning, Ejrn\ae{}s, and Alvarez}{Browning,
  Ejrn\ae{}s and Alvarez}{2010}]{BrowningEjrnaesAlvarez2010}
Browning, M., M.~Ejrn\ae{}s, and J.~Alvarez (2010).
\newblock Modeling income processes with lots of heterogeneity.
\newblock {\em Review of Economic Studies\/}~{\em 77}, 1353--1381.

\bibitem[\protect\citeauthoryear{Carroll and Hall}{Carroll and
  Hall}{1988}]{CarrollHall1988}
Carroll, R.~J. and P.~Hall (1988).
\newblock Optimal rates of convergence for deconvoluting a density.
\newblock {\em Journal of the American Statistical Association\/}~{\em 83},
  1184--1186.

\bibitem[\protect\citeauthoryear{Chamberlain}{Chamberlain}{1984}]{Chamberlain1984}
Chamberlain, G. (1984).
\newblock Panel data.
\newblock In Z.~Griliches and M.~Intriligator (Eds.), {\em Handbook of
  Econometrics}, Volume~2 of {\em Handbook of Econometrics}, Chapter~22, pp.\
  1247--1315. Elsevier.

\bibitem[\protect\citeauthoryear{Chesher}{Chesher}{1991}]{Chesher1991}
Chesher, A. (1991).
\newblock The effect of measurement error.
\newblock {\em Biometrika\/}~{\em 78}, 451--462.

\bibitem[\protect\citeauthoryear{Chesher}{Chesher}{2017}]{Chesher2017}
Chesher, A. (2017).
\newblock Understanding the effect of measurement error on quantile
  regressions.
\newblock {\em Journal of Econometrics\/}~{\em 200}, 223--237.

\bibitem[\protect\citeauthoryear{Chetty, Friedman, and Rockof\mbox{}f}{Chetty,
  Friedman and Rockof\mbox{}f}{2014}]{ChettyFriedmanRockoff2014}
Chetty, R., J.~N. Friedman, and J.~E. Rockof\mbox{}f (2014).
\newblock Measuring the impacts of teachers {I}: Evaluating bias in teacher
  value-added estimates.
\newblock {\em American Economic Review\/}~{\em 104}, 2593--2632.

\bibitem[\protect\citeauthoryear{Crucini, Shintani, and Tsuruga}{Crucini,
  Shintani and Tsuruga}{2015}]{CruciniShintaniTsuruga2015}
Crucini, M.~J., M.~Shintani, and T.~Tsuruga (2015).
\newblock Noisy information, distance and law of one price dynamics acros {US}
  cities.
\newblock {\em Journal of Monetary Economics\/}~{\em 74}, 52--66.

\bibitem[\protect\citeauthoryear{Delaigle and Meister}{Delaigle and
  Meister}{2008}]{DelaigleMeister2008}
Delaigle, A. and A.~Meister (2008).
\newblock Density estimation with heteroscedastic error.
\newblock {\em Bernoulli\/}~{\em 14}, 562--579.

\bibitem[\protect\citeauthoryear{Dhaene and Jochmans}{Dhaene and
  Jochmans}{2015}]{DhaeneJochmans2015}
Dhaene, G. and K.~Jochmans (2015).
\newblock Split-panel jackknife estimation of f\mbox{}ixed-ef\mbox{}fect
  models.
\newblock {\em Review of Economic Studies\/}~{\em 82}, 991--1030.

\bibitem[\protect\citeauthoryear{Doss and Gill}{Doss and
  Gill}{1992}]{DossGill1992}
Doss, H. and R.~D. Gill (1992).
\newblock An elementary approach to weak convergence for quantile processes,
  with applications to censored survival data.
\newblock {\em Journal of the American Statistical Association\/}~{\em
  87\/}(419), 869--877.

\bibitem[\protect\citeauthoryear{Efron}{Efron}{2011}]{Efron2011}
Efron, B. (2011).
\newblock Tweedie's formula and selection bias.
\newblock {\em Journal of the American Statistical Association\/}~{\em 106},
  1602--1614.

\bibitem[\protect\citeauthoryear{Efron}{Efron}{2016}]{Efron2016}
Efron, B. (2016).
\newblock Empirical {B}ayes deconvolution estimates.
\newblock {\em Biometrika\/}~{\em 103}, 1--20.

\bibitem[\protect\citeauthoryear{Evdokimov and Zeleneev}{Evdokimov and
  Zeleneev}{2020}]{EvdokimovZeleneev2020}
Evdokimov, K. and A.~Zeleneev (2020).
\newblock Simple estimation of semiparametric models with measurement errors.
\newblock Mimeo.

\bibitem[\protect\citeauthoryear{Fern{\'a}ndez-Val and Lee}{Fern{\'a}ndez-Val
  and Lee}{2013}]{FernandezValLee2013}
Fern{\'a}ndez-Val, I. and J.~Lee (2013).
\newblock Panel data models with nonadditive unobserved heterogeneity:
  Estimation and inference.
\newblock {\em Quantitative Economics\/}~{\em 4}, 453--481.

\bibitem[\protect\citeauthoryear{Guvenen}{Guvenen}{2009}]{Guvenen2009}
Guvenen, F. (2009).
\newblock An empirical investigation of labor income processes.
\newblock {\em Review of Economic Dynamics\/}~{\em 12}, 58--79.

\bibitem[\protect\citeauthoryear{Hahn and Kuersteiner}{Hahn and
  Kuersteiner}{2002}]{HahnKuersteiner2002}
Hahn, J. and G.~Kuersteiner (2002).
\newblock Asymptotically unbiased inference for a dynamic panel model with
  fixed ef\mbox{}fects when both $n$ and {$T$} are large.
\newblock {\em Econometrica\/}~{\em 70}, 1639--1657.

\bibitem[\protect\citeauthoryear{Hahn and Newey}{Hahn and
  Newey}{2004}]{HahnNewey2004}
Hahn, J. and W.~K. Newey (2004).
\newblock Jackknife and analytical bias reduction for nonlinear panel models.
\newblock {\em Econometrica\/}~{\em 72}, 1295--1319.

\bibitem[\protect\citeauthoryear{Horowitz and Markatou}{Horowitz and
  Markatou}{1996}]{HorowitzMarkatou1996}
Horowitz, J.~L. and M.~Markatou (1996).
\newblock Semiparametric estimation of regression models from panel data.
\newblock {\em Review of Economic Studies\/}~{\em 63}, 145--168.

\bibitem[\protect\citeauthoryear{Hu}{Hu}{2008}]{Hu2008}
Hu, Y. (2008).
\newblock Identif\mbox{}ication and estimation of nonlinear models with
  misclassification error using instrumental variables: A general solution.
\newblock {\em Journal of Econometrics\/}~{\em 144}, 27--61.

\bibitem[\protect\citeauthoryear{Hu and Schennach}{Hu and
  Schennach}{2008}]{HuSchennach2008}
Hu, Y. and S.~M. Schennach (2008).
\newblock Instrumental variable treatment of nonclassical measurement error
  models.
\newblock {\em Econometrica\/}~{\em 76}, 195--216.

\bibitem[\protect\citeauthoryear{Jackson, Rockof\mbox{}f, and Staiger}{Jackson,
  Rockof\mbox{}f and Staiger}{2014}]{JacksonRockoffStaiger2014}
Jackson, C.~K., J.~E. Rockof\mbox{}f, and D.~O. Staiger (2014).
\newblock Teacher ef\mbox{}fects and teacher related policies.
\newblock {\em Annual Review of Economics\/}~{\em 6}, 801--825.

\bibitem[\protect\citeauthoryear{James and Stein}{James and
  Stein}{1961}]{JamesStein1961}
James, W. and C.~Stein (1961).
\newblock Estimation with quadratic loss.
\newblock In {\em Proceedings of the Fourth Berkeley Symposium on Mathematical
  Statistics and Probability}, Volume~I, pp.\  361--379.

\bibitem[\protect\citeauthoryear{Koml{\'o}s, Major, and
  Tusn{\'a}dy}{Koml{\'o}s, Major and
  Tusn{\'a}dy}{1975}]{KomlosMajorTusnady1975}
Koml{\'o}s, J., P.~Major, and G.~Tusn{\'a}dy (1975).
\newblock An approximation of partial sums of independent {RV}'-s, and the
  sample {DF}. i.
\newblock {\em Zeitschrift f{\"u}r Wahrscheinlichkeitstheorie und verwandte
  Gebiete\/}~{\em 32}, 111--131.

\bibitem[\protect\citeauthoryear{Li and Vuong}{Li and
  Vuong}{1998}]{LiVuong1998}
Li, T. and Q.~Vuong (1998).
\newblock Nonparametric estimation of measurement error models using multiple
  indicators.
\newblock {\em Journal of Multivariate Analysis\/}~{\em 65}, 139--165.

\bibitem[\protect\citeauthoryear{Magnac and Roux}{Magnac and
  Roux}{2021}]{MagnacRoux2019}
Magnac, T. and S.~Roux (2021).
\newblock Heterogeneity and wage inequalities over the life cycle.
\newblock {\em European Economic Review\/}~{\em 134}, 103715.

\bibitem[\protect\citeauthoryear{Maritz and Jarrett}{Maritz and
  Jarrett}{1978}]{MaritzJarrett1978}
Maritz, J.~S. and R.~G. Jarrett (1978).
\newblock A note on estimating the variance of the sample median.
\newblock {\em Journal of the American Statistical Association\/}~{\em 73},
  194--196.

\bibitem[\protect\citeauthoryear{Mason}{Mason}{1981}]{Mason1981}
Mason, D.~M. (1981).
\newblock Bounds for weighted empirical distribution functions.
\newblock {\em Annals of Probability\/}~{\em 9}, 881--884.

\bibitem[\protect\citeauthoryear{Neyman and Scott}{Neyman and
  Scott}{1948}]{NeymanScott1948}
Neyman, J. and E.~Scott (1948).
\newblock Consistent estimates based on partially consistent observations.
\newblock {\em Econometrica\/}~{\em 16}, 1--32.

\bibitem[\protect\citeauthoryear{Okui and Yanagi}{Okui and
  Yanagi}{2019}]{OkuiYanagi2016}
Okui, R. and T.~Yanagi (2019).
\newblock Panel data analysis with heterogeneous dynamics.
\newblock {\em Journal of Econometrics\/}~{\em 212}, 451--475.

\bibitem[\protect\citeauthoryear{Okui and Yanagi}{Okui and
  Yanagi}{2020}]{OkuiYanagi2018}
Okui, R. and T.~Yanagi (2020).
\newblock Kernel estimation for panel data with heterogenous dynamics.
\newblock {\em Econometrics Journal\/}~{\em 23}, 156--175.

\bibitem[\protect\citeauthoryear{Parsley and Wei}{Parsley and
  Wei}{2001}]{ParsleyWei1996}
Parsley, D.~C. and S.-J. Wei (2001).
\newblock Convergence to the law of one price without trade barriers or
  currency fluctuations.
\newblock {\em Quarterly Journal of Economics\/}~{\em 111}, 1211--1236.

\bibitem[\protect\citeauthoryear{Robbins}{Robbins}{1956}]{Robbins1956}
Robbins, H. (1956).
\newblock An empirical {B}ayes approach to statistics.
\newblock In {\em Proceedings of the Third Berkeley Symposium on Mathematical
  Statistics and Probability}, Volume~I, pp.\  157--163.

\bibitem[\protect\citeauthoryear{Rockof\mbox{}f}{Rockof\mbox{}f}{2004}]{Rockoff2004}
Rockof\mbox{}f, J.~E. (2004).
\newblock The impact of individual teachers on student achievement: Evidence
  from panel data.
\newblock {\em American Economic Review: Papers \& Proceedings\/}~{\em 94},
  247--252.

\bibitem[\protect\citeauthoryear{Rosenthal}{Rosenthal}{1970}]{Rosenthal1970}
Rosenthal, H.~P. (1970).
\newblock On the subspaces of ${L}_p$ ($p> 2$) spanned by sequences of
  independent random variables.
\newblock {\em Israel Journal of Mathematics\/}~{\em 8}, 273--303.

\bibitem[\protect\citeauthoryear{Vivalt}{Vivalt}{2015}]{Vivalt2015}
Vivalt, E. (2015).
\newblock Heterogeneous treatment ef\mbox{}fects in impact evaluation.
\newblock {\em American Economic Review: Papers \& Proceedings\/}~{\em 105},
  467--470.

\bibitem[\protect\citeauthoryear{Weinstein, Ma, Brown, and Zhang}{Weinstein,
  Ma, Brown and Zhang}{2018}]{WeinsteinMaBrownZhang2018}
Weinstein, A., Z.~Ma, L.~D. Brown, and C.-H. Zhang (2018).
\newblock Group-linear {E}mpirical {B}ayes estimates for a heteroscedastic
  normal mean.
\newblock {\em Journal of the American Statistical Association\/}~{\em 113},
  698--710.

\end{thebibliography}
\end{document}